\documentclass[a4paper,twocolumn,11pt,unpublished]{quantumarticle}
\pdfoutput=1

\usepackage[utf8]{inputenc}
\usepackage[T1]{fontenc}
\usepackage[english]{babel}
\usepackage{amsmath}
\usepackage{amssymb}
\usepackage{amsthm}
\usepackage{bm}
\usepackage{booktabs}
\usepackage{graphicx}
\usepackage{subcaption}
\usepackage{mathtools}
\usepackage{microtype}
\usepackage[numbers,sort&compress]{natbib}
\usepackage{listings}
\usepackage{xcolor}
\usepackage[
    colorlinks=true,
    linkcolor=refBlue,
    citecolor=citeRed,
    urlcolor=refBlue
]{hyperref}

\definecolor{refBlue}{RGB}{0,82,136}
\definecolor{citeRed}{RGB}{160,65,0}

\graphicspath{{figures/}}

\definecolor{codebackground}{HTML}{F5F8FB}
\definecolor{codeframe}{HTML}{B8C8D8}
\definecolor{codekeyword}{HTML}{7254A3}
\definecolor{codestring}{HTML}{27864A}
\definecolor{codecomment}{HTML}{627D98}
\lstdefinestyle{pythonsource}{%
  language=Python,
  basicstyle=\fontsize{6}{6.5}\selectfont\ttfamily,
  keywordstyle=\color{codekeyword}\bfseries,
  stringstyle=\color{codestring},
  commentstyle=\color{codecomment}\itshape,
  backgroundcolor=\color{codebackground},
  frame=single,
  rulecolor=\color{codeframe},
  framesep=5pt,
  xleftmargin=0.35em,
  xrightmargin=0.35em,
  aboveskip=0.65\baselineskip,
  belowskip=0.65\baselineskip,
  showstringspaces=false,
  columns=fullflexible,
  keepspaces=true,
  breaklines=true,
  breakatwhitespace=false,
  tabsize=4,
  upquote=true,
}

\newtheorem{theorem}{Theorem}[section]
\newtheorem{lemma}[theorem]{Lemma}
\newtheorem{proposition}[theorem]{Proposition}
\newtheorem{corollary}[theorem]{Corollary}
\theoremstyle{definition}
\newtheorem{definition}[theorem]{Definition}

\theoremstyle{remark}
\newtheorem{remark}[theorem]{Remark}

\newcommand{\C}{\mathbb{C}}

\newcommand{\Ftwo}{\mathbb{F}_2}
\newcommand{\Nin}{n_{\mathrm{in}}}

\newcommand{\ket}[1]{\lvert #1 \rangle}
\newcommand{\bra}[1]{\langle #1 \rvert}

\newcommand{\ketbra}[2]{\ket{#1}\!\bra{#2}}
\newcommand{\proj}[1]{\ketbra{#1}{#1}}
\newcommand{\Pauli}[1]{\mathcal{P}_{#1}}
\newcommand{\Clifford}[1]{\mathcal{C}_{#1}}
\newcommand{\Hilb}[1]{\mathcal{H}_{#1}}
\newcommand{\QubitSpace}[1]{(\C^2)^{\otimes #1}}
\newcommand{\tr}{\operatorname{Tr}}
\newcommand{\opket}[1]{\lvert #1\rangle\!\rangle}

\newcommand{\llbracket}{\mathopen{[\![}}
\newcommand{\rrbracket}{\mathclose{]\!]}}
\newcommand{\interpret}[1]{\llbracket #1\rrbracket}

\newcommand{\fp}[1]{}

\title{Composing Detector Error Models}

\author{Vadym Kliuchnikov}
\email{vkliuchnikov@nvidia.com}
\affiliation{NVIDIA Corporation}

\author{Justin Lietz}
\email{jlietz@nvidia.com}
\affiliation{NVIDIA Corporation}

\author{Fernando Pastawski}
\email{fpastawski@nvidia.com}
\affiliation{NVIDIA Corporation}

\keywords{}

\begin{document}

\maketitle

\begin{abstract}
Fault-tolerant quantum programs thread together reusable logical operations, 
yet correcting errors requires comparing measurements across operation boundaries. 
Correlations tie each operation's error analysis to the computation around it.
Adaptive computation makes this a runtime challenge:
measurement results determine which operation comes next, so the error analysis must keep pace with execution.

We introduce extended detector error models (EDEMs) for stabilizer circuits with stochastic Pauli faults.
EDEMs compose in sequence and in parallel, mirroring the composition of physical circuit realizations and their detector contracts.
We prove how detectors, the measurement parities used to diagnose errors, can be split across circuit boundaries, 
and establish conditions under which composition recovers the assembled circuit’s detector error model. 
This structure supports symbolic precompilation and analysis of entire families of circuits. 
It also defines error models for decoding windows and the boundary information through which committed corrections affect subsequent windows. 
The same interface thus connects the design of individual logical operations to the decoding of an unfolding quantum program.

\end{abstract}

\section{Introduction}
\label{sec:introduction}

\subsection{Overview and main result}
\label{sec:introduction-overview}

Fault-tolerant quantum computation protects logical information with stabilizer codes and repeatedly measures redundant parities to diagnose noise~\cite{Dennis2002topologicalmemory,Fowler2012surfacecodes,Terhal2015quantumerrorcorrection}.
In the absence of noise, certain parities of measurement outcomes are deterministic; these parities are called \emph{detectors}~\cite{Gidney2021stimfaststabilizer,McEwen2023relaxinghardware,Derks2025designingfault}.
A \emph{detector error model} (DEM) records how each elementary fault flips detectors and logical observables, together with the fault probabilities.
Decoders such as minimum-weight matching~\cite{Higgott2025sparseblossom}, union-find~\cite{Delfosse2021unionfind}, and belief propagation with post-processing~\cite{Panteleev2021degeneratequantumldpc} are configured from this data, and established tools construct it from the complete physical circuit~\cite{Gidney2021stimfaststabilizer}.

The fault-tolerant computations we target are adaptive: measurement outcomes and logical control decisions determine parts of the physical circuit during execution. 
Consequently, the complete physical circuit is not available in advance.
Logical operations are chosen adaptively, based on decoded outcomes that fix Pauli or Clifford frame corrections~\cite{Knill2005realisticallynoisy,Chamberland2018cliffordframe,Horsman2012latticesurgery,Litinski2019gameofsurfacecodes,Gidney2019flexiblelayout}, and decoding must keep pace with the physical circuit as it executes~\cite{Terhal2015quantumerrorcorrection,Battistel2023realtimedecoding,GoogleQuantumAI2025belowthreshold,Caune2026realtimedecoding}.
The DEM must therefore be constructed incrementally, in tandem with the circuit.
Physical circuits for such computations are assembled from \emph{gadgets}: circuit fragments that provably implement specific logical operations on encoded information~\cite{Aliferis2006accuracythreshold,Kliuchnikov2023stabilizercircuitverification,Beverland2024faulttolerance}.
This suggests a division of labor: analyze each gadget type once, offline, and assemble the DEM of any logical circuit from the per-gadget results, without revisiting the gadgets' physical circuits.

The obstruction is that detectors do not respect gadget boundaries.
A detector typically compares syndrome information extracted by different gadgets, so a gadget partition of the circuit does not induce a partition of its detectors.
We address this in two steps.
First, we prove that detectors can be cut at circuit boundaries: every detector of a serial composition splits into parts local to the pieces once the boundary is extended by a small number of extra outcome bits (Sec.~\ref{sec:cutting-detectors}).
The proof is constructive and preserves a chosen detector basis, and the composed basis is complete whenever the component bases are.
At gadget boundaries, these extra bits can be taken to be \emph{virtual syndromes}: outcomes of virtual stabilizer measurements inserted at the boundary, following Ref.~\cite{DEQ, WuDeq}.
Second, we define the \emph{extended detector error model} (EDEM) of a gadget (Sec.~\ref{sec:gadget-dems}): a linear map over $\Ftwo$ that takes the gadget's elementary faults and its incoming interface data---virtual-syndrome flips and boundary-error coordinates~(Sec.~\ref{sec:gadget-profiles})---to detector and observable flips and the corresponding outgoing interface data.
EDEMs compose the way gadgets do: wiring the per-gadget maps according to the logical circuit and contracting the result recovers the DEM of the full circuit, for the detector basis induced by the gadget detector contracts, provided connected ports carry matching codes and code presentations~(Sec.~\ref{sec:gadget-profiles}).

Because composition is linear-algebraic, constructing a DEM becomes evaluating a wired diagram of sparse block matrices, and the same few blocks appear many times.
This makes a symbolic approach effective (Sec.~\ref{sec:symbolic-construction}): assign a variable to each distinct block, contract the diagram once into polynomial expressions in these variables, and evaluate each distinct expression once.
The symbolic form also supports structural arguments; for syndrome-extraction gadgets satisfying mild assumptions, we derive closed-form expressions establishing the banded structure of the repeated-extraction part of memory-experiment DEMs.

\subsection{Consequences and applications}
\label{sec:introduction-applications}

\emph{Modular logical instruction-set design.}
An EDEM provides a compositional error-model interface for a logical gadget. 
Gadget implementations can therefore be analyzed, compared, and optimized independently, then assembled into larger computations without flattening and reanalyzing their physical circuits, provided that connected code and code-presentation interfaces match~(Sec.~\ref{sec:gadget-profiles}).
In this way, EDEMs support the fault-tolerance and decoding layer of a logical instruction-set architecture built from reusable, independently characterized components.
They also provide a self-similar structure naturally suited to describing concatenated fault-tolerant protocols.

\emph{Incremental construction.}
Per-gadget EDEMs can be computed and stored ahead of time, once per gadget type.
Extending the DEM by one more gadget call at runtime then touches only these stored maps and the interface data of its ports, which suits adaptive computations whose instruction stream is decided on the fly (Sec.~\ref{sec:sliding-window-decoding}).

\emph{Sliding-window decoding.}
The same interfaces define the decoding problem of a window: the EDEM of any convex set of gadget calls (Sec.~\ref{sec:sliding-window-decoding}).
Causality makes the channel-check matrix block lower triangular with respect to the gadget circuit, and detector span, read off the per-gadget blocks, makes it block local, reducing to block banded for chain-like schedules.
Committed corrections act on later windows only through their interface outputs, so residual errors are carried forward as boundary and virtual-syndrome data rather than as retained detector history.

\emph{Compact representation.}
The composed DEM is a block matrix with few distinct blocks, so it can be stored as a grid of pointers into a small set of blocks.
This reduces memory traffic, for example when transferring DEMs to accelerators for decoding.

\emph{Amortization across experiments.}
Families of experiments that share a fault-tolerance protocol share polynomial expressions, so symbolic evaluations can be cached and reused across
many circuits using the same fault-tolerant instruction set, and
many circuit instantiations, such as sweeps over distances, rounds, or noise parameters (Secs.~\ref{sec:symbolic-caching} and~\ref{sec:examples-implementation}).

\subsection{Contributions and organization}
\label{sec:contributions}

After Sec.~\ref{sec:preliminaries}, which fixes notation, reviews stabilizer codes, stabilizer circuits, DEMs, gadgets, and the modularization of detectors by virtual syndromes~\cite{DEQ, WuDeq}, and introduces the channel-level conventions used throughout (boundary constraints, detector and outcome maps, and the diagram notation for linear maps), our contributions are:
\begin{itemize}
    \item a detector-cutting theorem showing that every detector of a composed stabilizer circuit splits across a serial cut, constructively and preserving completeness of the detector basis (Sec.~\ref{sec:cutting-detectors});
    \item detector maps with virtual-syndrome interfaces and extended detector error models, together with the composition rule that recovers the DEM of a gadget composition from the per-gadget EDEMs (Secs.~\ref{sec:cutting-detectors} and~\ref{sec:gadget-dems});
    \item a symbolic evaluation method for EDEM diagrams, illustrated by a closed-form derivation of the structure of repeated-syndrome-extraction DEMs (Sec.~\ref{sec:symbolic-construction});
    \item a Rust prototype that implements the formalism for four code families and reproduces the DEMs Stim derives from the flat circuits (Sec.~\ref{sec:examples-implementation});
    \item window EDEMs as sub-diagram contractions, a commit-and-fold rule and interface dispositions for sequential decoding, a per-gadget criterion for detector span, and a distinction between quantum execution and decoding dependencies, with interface requirements for schedules that use different orders (Sec.~\ref{sec:sliding-window-decoding}).
\end{itemize}
Limitations and next steps are discussed in Sec.~\ref{sec:conclusion}.

\paragraph{Relation to prior work.}
Modular, analyze-each-gadget-once error analysis is not new.
The \texttt{deq} system~\cite{DEQ, WuDeq} and the framework of Ref.~\cite{Beverland2024faulttolerance} modularize detectors with virtual stabilizer measurements, analyze each gadget type offline, and assemble decoding problems, including sliding windows, online; our detector decomposition builds directly on these ideas.
Stim~\cite{Gidney2021stimfaststabilizer} provides stabilizer flows~\cite{Gidney2024magicstatecultivation}, and the Stim project's \texttt{stimflow} library~\cite{GidneyStimflow} composes chunks carrying these relations and completes detectors across piece boundaries.
In that workflow, composition produces a circuit from which the DEM is then derived monolithically.
Our contribution relative to both is the uniform linear-map form of the gadget error model, the accompanying detector-cutting proofs, and the symbolic layer this uniformity enables.
The linear-algebraic view of fault effects follows the spacetime and outcome-code perspectives on Clifford circuits~\cite{Delfosse2023spacetimecodes,Kliuchnikov2023stabilizercircuitverification,Beverland2024faulttolerance}; DEMs as design objects appear in Ref.~\cite{Derks2025designingfault}, and detectors as design objects in Ref.~\cite{McEwen2023relaxinghardware}.
Compositional descriptions of fault-tolerant protocols at the operator level include logical blocks, fusion-based computation, and ZX-calculus approaches~\cite{Bombin2023logicalblocks,Bartolucci2023fusionbased,Bombin2024unifyingflavorsof,Rodatz2025faulttolerancebyconstruction}; these compose the protocols themselves, whereas we compose their error models.
Windowed and modular decoding are reviewed where we use them (Sec.~\ref{sec:sliding-window-decoding}).

\section{Preliminaries}
\label{sec:preliminaries}

This section establishes notation and reviews the background used throughout the paper.
Sections~\ref{sec:preliminaries-codes}, \ref{sec:preliminaries-circuits}, and~\ref{sec:preliminaries-dems} recall standard material: stabilizer codes and encoders, stabilizer circuits with Pauli faults~\cite{Kliuchnikov2023stabilizercircuitverification,Beverland2024faulttolerance}, and detector error models~\cite{Gidney2021stimfaststabilizer,Derks2025designingfault}.
Section~\ref{sec:preliminaries-gadgets} reviews gadgets~\cite{Kliuchnikov2023stabilizercircuitverification}, and Sec.~\ref{sec:preliminaries-virtual-detectors} reviews the modularization of detectors using virtual syndromes introduced in Ref.~\cite{DEQ}.
Sections~\ref{sec:stabilizer-maps}, \ref{sec:preliminaries-detectors}, and~\ref{sec:preliminaries-diagrams} fix this paper's conventions: stabilizer channels with declared classical ports and their boundary constraints, which are the stabilizer flows of Ref.~\cite{Gidney2024magicstatecultivation} written at the channel level; detectors, observables, and outcome maps in that language; and the diagram notation for linear maps over $\Ftwo$.
Apart from these conventions, the material in this section is standard; our results begin in Sec.~\ref{sec:cutting-detectors}.
A reader familiar with stabilizer codes, stabilizer circuits, and detector error models may skim Secs.~\ref{sec:preliminaries-codes}, \ref{sec:preliminaries-circuits}, and~\ref{sec:preliminaries-dems}.

\subsection{Notation and stabilizer codes}
\label{sec:preliminaries-codes}

We recall standard facts about stabilizer codes; see, e.g., Ref.~\cite{Terhal2015quantumerrorcorrection}.
Let $\Pauli{n}$ denote the $n$-qubit Pauli group, with phases $\{\pm1,\pm i\}$ retained because the signs of Pauli observables encode measurement outcomes and syndromes.
A \emph{Pauli observable} is a Hermitian element of $\Pauli{n}$, and $\Clifford{n}$ denotes the $n$-qubit Clifford group.
For a bitvector $v\in\Ftwo^{\mathsf{R}}$ indexed by a register $\mathsf{R}$, we write $X_{\mathsf{R}}^{v}:=\bigotimes_{j\in\mathsf{R}}X_j^{v_j}$, and similarly $Z_{\mathsf{R}}^{v}$.
The \emph{phase-free Pauli group} $\Pauli{n}/\langle i\rangle$ is identified with $\Ftwo^{2n}$ through $(x,z)\mapsto X^{x}Z^{z}$.

A \emph{stabilizer group} $S\subseteq\Pauli{n}$ is an abelian group of Pauli observables that does not contain $-I$.
Its \emph{codespace} is the joint $+1$ eigenspace of $S$, with projector $\Pi_S=|S|^{-1}\sum_{s\in S}s$; if $S$ has $n-k$ independent generators, the codespace encodes $k$ logical qubits, and for $k=0$ it is spanned by a single \emph{stabilizer state}.
Physical representatives of logical Pauli operators are the Paulis that commute with $S$, and two representatives that differ by an element of $S$ act identically on the codespace.
A \emph{logical interpretation} of the code fixes an isomorphism
\begin{equation}
    \iota_S:\Pauli{k}\longrightarrow S^\perp/S,
    \qquad
    \iota_S(\ell)=[\bar\ell]:=\bar\ell S,
    \label{eq:logical-pauli-quotient}
\end{equation}
where $S^\perp$ denotes the group of Paulis that commute with $S$; it suffices to fix representatives $\bar X_1,\bar Z_1,\ldots,\bar X_k,\bar Z_k$.
Throughout the paper, every code has a fixed choice of generators $s_1,\ldots,s_{n-k}$ and of logical representatives; gadgets compose only when these choices agree on connected ports (Sec.~\ref{sec:preliminaries-gadgets}), and the code presentations of Sec.~\ref{sec:gadget-profiles} additionally allow overcomplete generator lists.

The code and its logical interpretation specify an \emph{encoding isometry} $E_{\boldsymbol{0}}:\QubitSpace{k}\to\QubitSpace{n}$, characterized by
\begin{equation}
    \begin{aligned}
        E_{\boldsymbol{0}}^\dagger E_{\boldsymbol{0}} &= I, \\
        sE_{\boldsymbol{0}} &= E_{\boldsymbol{0}} && (s\in S), \\
        \bar{\ell}E_{\boldsymbol{0}} &= E_{\boldsymbol{0}}\ell && (\ell\in\Pauli{k}).
    \end{aligned}
    \label{eq:encoding-isometry}
\end{equation}
As in Ref.~\cite{Kliuchnikov2023stabilizercircuitverification}, fix an \emph{encoding unitary}, a Clifford unitary $E:\QubitSpace{k}\otimes\QubitSpace{n-k}\to\QubitSpace{n}$ extending $E_{\boldsymbol{0}}$, whose second input is the \emph{syndrome register} associated with the chosen generators.
Its restrictions $E_\sigma:=E(I\otimes\ket{\sigma})$ to syndrome values $\sigma\in\Ftwo^{n-k}$ satisfy
\begin{equation}
    \begin{aligned}
        s_iE_\sigma &= (-1)^{\sigma_i}E_\sigma && (i=1,\ldots,n-k), \\
        \bar{\ell}E_\sigma &= E_\sigma\ell && (\ell\in\Pauli{k}),
    \end{aligned}
    \label{eq:syndrome-sector-isometries}
\end{equation}
so that $E_{\boldsymbol{0}}$ is the encoding isometry and the other $E_\sigma$ identify every syndrome sector with the same logical space.
The \emph{encoder} is the channel with a $k$-qubit quantum input and a declared classical syndrome input that prepares a logical state in the requested syndrome sector,
\begin{equation}
    \mathcal{E}\bigl(\rho_L\otimes\proj{\sigma}\bigr)
    :=
    E_\sigma\rho_L E_\sigma^\dagger,
    \label{eq:encoder}
\end{equation}
and the \emph{unencoder} is its inverse, the channel
\begin{equation}
    \mathcal{U}(\rho)
    :=
    \sum_{\sigma\in\Ftwo^{n-k}}
    E_\sigma^\dagger\rho E_\sigma
    \otimes
    \proj{\sigma},
    \label{eq:unencoder}
\end{equation}
whose quantum output is the $k$-qubit logical state and whose declared classical output is the syndrome of the chosen generators.
Indeed, $\mathcal{U}\circ\mathcal{E}$ is the identity, and $\mathcal{E}\circ\mathcal{U}$ acts as the identity on every state supported in a single syndrome sector.
Reading out the syndrome of the unencoder is the same as measuring the generators $s_1,\ldots,s_{n-k}$, so an unencoder followed by an encoder is a nondestructive measurement of the generators.

\begin{lemma}[Pull-back of Paulis through the encoding unitary]
    \label{lem:encoder-pullback}
    For every $P\in\Pauli{n}$, there are a logical Pauli $\ell_P\in\Pauli{k}$ and bitvectors $c(P),z(P)\in\Ftwo^{n-k}$ such that, up to a phase,
    \begin{equation}
        E^\dagger P E = \ell_P\otimes X^{c(P)}Z^{z(P)},
        \label{eq:encoder-pullback}
    \end{equation}
    where $c_i(P)=1$ precisely when $P$ anticommutes with $s_i$.
    The pair $(c(P),\ell_P)$ depends only on the coset $PS$.
    On states whose syndrome register is in $\ket{\boldsymbol{0}}$, the factor $Z^{z(P)}$ acts trivially.
\end{lemma}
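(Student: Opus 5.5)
Since $E$ is a Clifford unitary, conjugation by $E^\dagger$ maps Paulis to Paulis. So $E^\dagger P E$ is, up to a phase, a Pauli on $\QubitSpace{k}\otimes\QubitSpace{n-k}$. Every such Pauli factors as $\ell\otimes X^{c}Z^{z}$ with $\ell\in\Pauli{k}$ and $c,z\in\Ftwo^{n-k}$. We define $\ell_P$, $c(P)$ and $z(P)$ by this factorization. The remaining work is to identify $c(P)$ and to prove the coset invariance and the final claim.

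\emph{Identifying $c(P)$.} Use \eqref{eq:syndrome-sector-isometries} with $\sigma=\boldsymbol 0$ ranging over all syndrome inputs. This gives $E^\dagger s_i E\,(I\otimes\ket{\sigma}) = (-1)^{\sigma_i}(I\otimes\ket{\sigma})$ for all $\sigma$, hence $E^\dagger s_iE = I\otimes Z_i$, up to a phase that the case $\sigma=\boldsymbol 0$ forces to be $+1$. Similarly $\bar\ell E_\sigma=E_\sigma\ell$ gives $E^\dagger\bar\ell E=\ell\otimes I$.

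Conjugation by a unitary preserves commutation relations. So $P$ anticommutes with $s_i$ exactly when $\ell_P\otimes X^{c(P)}Z^{z(P)}$ anticommutes with $I\otimes Z_i$, that is, exactly when $c_i(P)=1$.

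\emph{Coset invariance.} For $s\in S$, write $s$ as a product of generators, say $s=\pm\prod_i s_i^{a_i}$. Then
\[
E^\dagger (Ps)E \propto (\ell_P\otimes X^{c(P)}Z^{z(P)})(I\otimes Z^{a}) \propto \ell_P\otimes X^{c(P)}Z^{z(P)+a}.
\]
So multiplying $P$ by an element of $S$ changes only $z$, and the pair $(c,\ell_P)$ is unchanged.

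\emph{Action on the zero syndrome.} On inputs $\rho_L\otimes\proj{\boldsymbol 0}$ we have $Z^{z}\ket{\boldsymbol 0}=\ket{\boldsymbol 0}$. Hence the factor $Z^{z(P)}$ acts trivially, up to the global phase already discarded.

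The only delicate point is justifying $E^\dagger s_iE = I\otimes Z_i$ exactly, rather than on the $\sigma=\boldsymbol 0$ sector alone. This needs the characterization \eqref{eq:syndrome-sector-isometries} for every $\sigma$, together with the fact that the $E_\sigma$ jointly make up the unitary $E$. Both are given by the definition of the encoding unitary.
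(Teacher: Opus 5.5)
Your proposal is correct and follows essentially the same route as the paper: derive $E^\dagger s_iE=I\otimes Z_i$ from Eq.~\eqref{eq:syndrome-sector-isometries}, read off $c(P)$ from commutation with $I\otimes Z_i$, note that multiplying $P$ by an element of $S$ only shifts the $Z$-part on the syndrome register, and use $Z^{z}\ket{\boldsymbol 0}=\ket{\boldsymbol 0}$. One cosmetic slip: you mean $\sigma$ ranging over all of $\Ftwo^{n-k}$, not $\sigma=\boldsymbol 0$; since the sector identities hold exactly for each $\sigma$, no separate phase-fixing step is needed.
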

\begin{proof}
    By Eq.~\eqref{eq:syndrome-sector-isometries}, $E^\dagger s_iE=I\otimes Z_i$ and $E^\dagger\bar\ell E=\ell\otimes I$.
    The Clifford conjugate $E^\dagger PE$ is a Pauli, and it anticommutes with $I\otimes Z_i$ exactly when $P$ anticommutes with $s_i$, which fixes its $X$-part on the syndrome register to be $X^{c(P)}$.
    Multiplying $P$ by $s\in S$ changes only the $Z$-part on the syndrome register, and $Z^{z}\ket{\boldsymbol 0}=\ket{\boldsymbol 0}$.
\end{proof}
We call $c(P)$ the \emph{syndrome} and $\ell_P$ the \emph{logical label} of $P$; Secs.~\ref{sec:gadget-profiles} and~\ref{sec:symbolic-memory} use them as the coordinates of boundary errors.

\subsection{Stabilizer channels and boundary constraints}
\label{sec:preliminaries-vectorization}
\label{sec:stabilizer-maps}
\label{sec:stabilizer-channel-constraints}

Stabilizer circuits with measurements, discards, and classically controlled Paulis implement quantum channels, and we formulate our discussion in terms of these channels.
For an operator $Q:\Hilb{A}\to\Hilb{B}$, let $\opket{Q}:=(Q\otimes I)\ket{\Omega_A}$ be its vectorization, where $\ket{\Omega_A}=\sum_x\ket{x}_A\ket{x}_{A'}$ is the unnormalized maximally entangled vector with a reference copy $A'$; then $\opket{LQR}=(L\otimes R^T)\opket{Q}$, so left and right Pauli actions on $Q$ become Pauli stabilizer constraints on $\opket{Q}$.
We call a nonzero $Q$ a \emph{stabilizer operator} if $\opket{Q}$ is proportional to a stabilizer state; Clifford unitaries, stabilizer projectors, and encoding isometries are examples.
A \emph{stabilizer channel} is a completely positive trace-preserving map $\Phi$ whose Choi operator $J(\Phi):=(\Phi\otimes\operatorname{id}_{A'})(\proj{\Omega_A})$ is a stabilizer operator~\cite{Yashin2025}.
Equivalently, $\Phi$ is fixed up to a scalar by $2(n_{\mathrm{in}}+n_{\mathrm{out}})$ independent equations of the four-slot form
\begin{equation}
    P_{\mathrm{out,L}}
    \Phi\!\left(
        P_{\mathrm{in,L}}\rho P_{\mathrm{in,R}}
    \right)
    P_{\mathrm{out,R}}
    =
    \Phi(\rho),
    \label{eq:stabilizer-superoperator-constraint}
\end{equation}
holding for every $\rho$, and we write $\mathsf{Stab}(\Phi)$ for the group of all such equations.
For a stabilizer operator $Q$, the map $\operatorname{Ad}_Q:\rho\mapsto Q\rho Q^\dagger$ is a stabilizer channel whenever it is trace preserving; tensor products and serial compositions of stabilizer channels are stabilizer channels~\cite{Yashin2025}.
Appendix~\ref{app:further-stabilizer-preliminaries} records the Choi-operator details, verifies that the elementary operations of Sec.~\ref{sec:preliminaries-circuits} are stabilizer channels, and compares this notion with the trace-non-increasing \emph{stabilizer maps} and the circuits of Ref.~\cite{Kliuchnikov2023stabilizercircuitverification}.
The one structural fact about stabilizer channels used in the main text is that Pauli errors propagate through them as Pauli errors.

The \emph{adjoint} $\Phi^\dagger$ of a channel $\Phi$ is the unital completely positive map defined by $\tr\bigl(\Phi(X)^\dagger Y\bigr)=\tr\bigl(X^\dagger\Phi^\dagger(Y)\bigr)$ for all operators $X$ and $Y$; it preserves Hermiticity and reverses the order of composition, $(\Phi_2\circ\Phi_1)^\dagger=\Phi_1^\dagger\circ\Phi_2^\dagger$.
\begin{lemma}[Pauli covariance]
    \label{lem:pauli-covariance}
    Let $\Phi$ be a stabilizer channel from register $A$ to register $B$.
    For every Pauli $R$ on $A$, there is a Pauli $Q$ on $B$ such that
    \begin{equation}
        \Phi\circ\operatorname{Ad}_R
        =
        \operatorname{Ad}_Q\circ\Phi .
        \label{eq:pauli-covariance}
    \end{equation}
    Dually, for every Pauli $P$ on $B$, the adjoint $\Phi^\dagger(P)$ is either zero or a Pauli on $A$ up to a sign.
\end{lemma}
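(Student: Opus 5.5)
The plan is to work at the level of the Choi operator and use the stabilizer structure of $J(\Phi)$, together with the vectorization identity $\opket{LQR}=(L\otimes R^T)\opket{Q}$. Write $J:=J(\Phi)$. Since $\Phi$ is a stabilizer channel, $J$ is proportional to a stabilizer projector $\Pi_G$ on $B\otimes A'$ for some stabilizer group $G$ of Pauli observables on $B\otimes A'$ (possibly of rank less than full). Conjugating the input by $R$ gives
\begin{equation}
    J(\Phi\circ\operatorname{Ad}_R)=(I_B\otimes R^{T})\,J\,(I_B\otimes R^{T})^\dagger,
\end{equation}
where $R^T$ is again a Pauli on $A'$ up to sign. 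So the first claim reduces to the following: for every Pauli $T$ on $A'$ there is a Pauli $Q$ on $B$ with $(Q\otimes T)\Pi_G(Q\otimes T)^\dagger=\Pi_G$, i.e.\ $Q\otimes T$ commutes with every element of $G$. Indeed, then $(I\otimes T)\Pi_G(I\otimes T)^\dagger=(Q^\dagger\otimes I)\Pi_G(Q\otimes I)$, which is the Choi operator of $\operatorname{Ad}_{Q^\dagger}\circ\Phi$, and $Q^\dagger$ is a Pauli.

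The key step, and the one I expect to be the main obstacle, is producing $Q$. Conjugating $\Pi_G$ by $I\otimes T$ gives the projector onto the group $G$ with signs flipped according to the character $\chi(g)=\pm1$ recording whether $T$ anticommutes with the $A'$-part of $g$. I need a $B$-Pauli $Q$ inducing the same character on $G$ through commutation with $B$-parts. A problem arises for elements $g\in G$ whose $B$-part is the identity: no $Q$ can flip their signs.

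Trace preservation handles exactly these elements. I will show that such $g$ must be $\pm I$, i.e.\ $G$ contains no nontrivial element of the form $I_B\otimes g_{A'}$. Trace preservation says $\tr_B J=I_{A'}$. Taking $\tr_B$ of $\Pi_G=|G|^{-1}\sum_g g$ kills every term whose $B$-part is non-identity. What survives is $|G|^{-1}\sum_{g:\,g_B=I} g_{A'}\cdot\dim B$, which is proportional to the projector onto the group $G_{A'}=\{g:g_B=I\}$. For this to be proportional to $I$, that group must be trivial.

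Hence the map $g\mapsto g_B$, taken modulo phases, is injective on $G$ viewed in $\Ftwo^{2n}$. The character $\chi$ is linear on $G$, so it factors through the image of this map. By the nondegeneracy of the symplectic form on $\Ftwo^{2n_{\mathrm{out}}}$, any linear functional on a subspace is realized as the symplectic product with some vector. That vector gives $Q$. Signs are harmless because conjugation ignores them.

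For the dual statement I will use $\tr(\Phi(X)^\dagger P)=\tr(X^\dagger\Phi^\dagger(P))$ and express the adjoint through the Choi operator: $\Phi^\dagger(P)=\tr_B\bigl[J(P\otimes I)\bigr]^{T}$, with transpose and reference identification applied on $A'$. Expanding $J\propto\sum_{g\in G}g$ gives $\operatorname{tr}_B(g_B P)$, which is nonzero only when $g_B\propto P^\dagger$. By the injectivity just proven, at most one $g$ (up to sign) survives. So $\Phi^\dagger(P)$ is either $0$ or a scalar multiple of a single Pauli $g_{A'}^T$. Hermiticity preservation forces that scalar to be real. Unitality, or evaluating the trace normalization $\dim B/|G|$ against $J$'s normalization, fixes its magnitude to $1$. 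I expect this normalization to be a routine counting check: $|G|=\dim B$ follows from $\tr_B J=I$ evaluated at the identity term.
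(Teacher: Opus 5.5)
Your argument is correct, but it takes a different route from the paper. The paper first constructs a stabilizer Stinespring dilation $V:A\to B\otimes E$ (Proposition~\ref{prop:stabilizer-choi-dilation}). It then uses maximality of the pure stabilizer group of $\opket{V}$, together with the maximally mixed $A'$-marginal, to show that this group projects onto all Paulis of $A'$. From there it picks an element $\widetilde W\otimes R^T$, factors $\widetilde W$, and traces out $E$. The dual statement is only cited from Ref.~\cite{Yashin2025}. You never purify. You keep the mixed Choi projector $\Pi_G$, and instead of an element of a pure stabilizer group with prescribed $A'$-part, you produce an element $Q\otimes T$ of the centralizer of $G$ with prescribed $A'$-part, by extending a linear functional and using symplectic nondegeneracy. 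The essential input is the same in both proofs: trace preservation forbids nontrivial stabilizers supported on $A'$ alone. The paper's Pauli also centralizes $G$, so the two proofs are dual packagings of one symplectic fact. Your packaging gives a shorter, self-contained argument in which the injectivity of $g\mapsto g_B$ also proves the dual statement, with the explicit criterion that $\Phi^\dagger(P)\neq0$ exactly when $P$ is, up to phase, the $B$-part of an element of $G$. The paper's packaging instead yields the dilation itself, which it reuses in the appendix. Two small repairs are needed. First, justify that $J$ is proportional to a stabilizer projector: tracing one copy of $\opket{J}$ gives $JJ^\dagger\propto\Pi_G$, and positivity lets you take the square root, as in the paper's appendix. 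Second, replace $|G|=\dim B$ by $\lambda\dim B=|G|$ for $J=\lambda\Pi_G$. Your conclusion $|c|=1$ still holds, since every surviving term has the same coefficient magnitude as the identity term, which unitality fixes to $1$.
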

The first statement is proved via a stabilizer dilation in Appendix~\ref{app:further-stabilizer-preliminaries} (Corollary~\ref{cor:stabilizer-channel-pauli-covariance}); the dual statement is Theorem~1 of Ref.~\cite{Yashin2025}.

\paragraph{Classical ports.}
We model classical bits as qubits so that a single formalism covers quantum wires and measurement records.
With $\mathcal{T}_Z(\rho):=(\rho+Z\rho Z)/2$ the dephasing channel, which is a stabilizer channel, a register $A_C$ of input ports or $B_C$ of output ports is \emph{declared classical} when
\begin{equation}
    \Phi\circ\mathcal{T}_{Z}^{\otimes A_C}
    =
    \Phi,
    \qquad
    \mathcal{T}_{Z}^{\otimes B_C}\circ\Phi
    =
    \Phi,
    \label{eq:classical-interface}
\end{equation}
respectively, where the twirl acts as the identity on the remaining ports.
Since $\mathcal{T}_Z=(\operatorname{id}+\operatorname{Ad}_Z)/2$, these conditions are the stabilizer equations
\begin{equation}
    \begin{aligned}
        \Phi(Z_i\rho Z_i)&=\Phi(\rho) && (i\in A_C), \\
        Z_j\Phi(\rho)Z_j&=\Phi(\rho) && (j\in B_C),
    \end{aligned}
    \label{eq:classical-interface-stabilizers}
\end{equation}
which are elements of $\mathsf{Stab}(\Phi)$.
A declaration records that a port admits a classical implementation and fault model; it does not alter the map.
The consequence we use below is that conjugating a declared classical output by $Z$ leaves the image of $\Phi$ invariant, so measurement outcomes are affected by faults only through bit flips.

\paragraph{Boundary constraints.}
Suppose $\Phi$ maps an input register $A$ to a quantum output register $B$ and a declared classical outcome register with port set $\mathsf{M}$, and write $m\in\Ftwo^{\mathsf{M}}$ for an outcome vector.
For $a\in\Ftwo^{\mathsf{M}}$ and $b\in\Ftwo$, let $Z_{\mathsf{M}}(a,b):=(-1)^bZ_{\mathsf{M}}^{a}$, so that $Z_{\mathsf{M}}(a,b)\ket{m}=(-1)^{a^Tm+b}\ket{m}$.
\begin{definition}[Boundary constraint]
    \label{def:boundary-constraint}
    A tuple $(P_{\mathrm{in}},P_{\mathrm{out}},a,b)$ of a Pauli $P_{\mathrm{in}}$ on $A$, a Pauli $P_{\mathrm{out}}$ on $B$, and an affine outcome parity is a \emph{boundary constraint} of $\Phi$ when
    \begin{equation}
        \bigl(P_{\mathrm{out}}\otimes Z_{\mathsf{M}}(a,b)\bigr)
        \Phi(P_{\mathrm{in}}\rho)
        =
        \Phi(\rho)
        \quad\text{for every }\rho.
        \label{eq:boundary-constraint}
    \end{equation}
    The boundary constraints form the \emph{boundary-constraint group} $R_{\mathsf{M}}(\Phi)$, the subgroup of one-sided elements of $\mathsf{Stab}(\Phi)$, i.e., of the equations~\eqref{eq:stabilizer-superoperator-constraint} with trivial right-action slots.
\end{definition}
The Pauli parts determine how to interpret a boundary constraint.
When $P_{\mathrm{in}}=P_{\mathrm{out}}=I$, it is a deterministic outcome parity; when only $P_{\mathrm{out}}=I$, it states that $\Phi$ measures the input Pauli $P_{\mathrm{in}}$ with outcome $a^Tm+b$; when only $P_{\mathrm{in}}=I$, it states that $P_{\mathrm{out}}$ stabilizes the output with sign $(-1)^{a^Tm+b}$; and when both are nontrivial, it transports a logical correlation from input to output.
A boundary constraint is a \emph{stabilizer flow} $P_{\mathrm{in}}\to P_{\mathrm{out}}$ with outcome parity $a^Tm+b$ in the sense of Ref.~\cite{Gidney2024magicstatecultivation}, as implemented in Stim~\cite{Gidney2021stimfaststabilizer}, written at the channel level; equivalently, it is a check of the outcome code or spacetime code of the circuit~\cite{Delfosse2023spacetimecodes,Kliuchnikov2023stabilizercircuitverification}.
The Paulis $P_{\mathrm{out}}$ of constraints with $P_{\mathrm{in}}=I$ form the \emph{output stabilizer group} of $\Phi$, with outcome-dependent signs.
Likewise, the Paulis $P_{\mathrm{in}}$ of constraints with $P_{\mathrm{out}}=I$ form its \emph{measured group}.
Boundary constraints compose serially: if $(P_{\mathrm{in}},P,a_1,b_1)\in R_{\mathsf{M}_1}(\Phi_1)$ and $(P,P_{\mathrm{out}},a_2,b_2)\in R_{\mathsf{M}_2}(\Phi_2)$, then $(P_{\mathrm{in}},P_{\mathrm{out}},(a_1,a_2),b_1+b_2)\in R_{\mathsf{M}_1\sqcup\mathsf{M}_2}(\Phi_2\circ\Phi_1)$, because $\Phi_2(P\sigma)=(P_{\mathrm{out}}\otimes Z_{\mathsf{M}_2}(a_2,b_2))\Phi_2(\sigma)$ for every $\sigma$.
In particular, if $\Phi_1$ prepares $P$ with sign $a_1^Tm_1+b_1$ and $\Phi_2$ measures $P$ with outcome $a_2^Tm_2+b_2$, then $a_1^Tm_1+a_2^Tm_2+b_1+b_2$ is a deterministic parity of $\Phi_2\circ\Phi_1$.

\subsection{Detectors, observables, and outcome maps}
\label{sec:preliminaries-detectors}

In the absence of faults, certain parities of measurement outcomes are deterministic; these parities are called \emph{detectors}~\cite{Gidney2021stimfaststabilizer,McEwen2023relaxinghardware,Derks2025designingfault}.
\begin{definition}[Detectors]
    \label{def:detector}
    A \emph{detector} of $\Phi$ is an affine expression $\langle a,m\rangle+b$ in its outcomes, with $\langle a,m\rangle:=a^Tm$, that equals zero for every outcome vector that $\Phi$ can produce from any input state; equivalently, $(I,I,a,b)\in R_{\mathsf{M}}(\Phi)$.
    If $\Phi$ has declared classical input ports $\mathsf{S}$, the expression may also involve the classical inputs $s$, taking the form $\langle a_0,s\rangle+\langle a,m\rangle+b$, and must vanish for every value of $s$; equivalently, $(Z_{\mathsf{S}}^{a_0},I,a,b)\in R_{\mathsf{M}}(\Phi)$.
    A detector is \emph{nontrivial} if its linear part is nonzero and \emph{supported on} a set of ports if its linear part vanishes outside that set.
    A finite family of detectors, labelled by a set $\mathsf{D}$, is \emph{declared} when it is supplied to the decoder; it is a \emph{detector basis} if it is linearly independent, and \emph{complete} if it generates every deterministic parity of $\Phi$.
    The observed value of a declared detector is its \emph{detection event}.
\end{definition}
For example, take three data qubits in an arbitrary state, measure the checks $Z_1Z_2$ and $Z_2Z_3$ of the three-qubit repetition code twice, with outcomes $m^{(1)}_j$ and $m^{(2)}_j$ for $j=1,2$, and then measure the data qubits destructively with outcomes $n_1,n_2,n_3$.
The detectors $m^{(1)}_j+m^{(2)}_j$ and $m^{(2)}_j+n_j+n_{j+1}$ form a complete basis, and $n_1$ reads out the logical $Z$ of the code.

\begin{definition}[Redundant outcomes and equivalence]
    \label{def:circuit-equivalence}
    An outcome bit is \emph{redundant} given a set of other outcomes if some detector contains it and is otherwise supported on that set; equivalently, it is an affine function of those outcomes on every attainable outcome vector.
    Two stabilizer circuits $C$ and $C'$ with the same quantum ports are \emph{equivalent under an affine outcome map} $f$ from the outcomes of $C'$ to those of $C$ if $\interpret{C}=(\operatorname{id}\otimes\Phi_f)\circ\interpret{C'}$, where $\Phi_f$ applies $f$ to the outcome register and discards the rest.
\end{definition}
Inserting a measurement-only subcircuit whose outcomes are all redundant after, such as the virtual measurements of Sec.~\ref{sec:preliminaries-virtual-detectors}, yields an equivalent circuit.

For a closed experiment, deterministic parities not declared as detectors may be reported as deterministic logical observables instead.
An \emph{outcome map} reports declared logical outcomes, called logical observables, as affine functions of the measurement outcomes.
For a gadget, these are the outcomes of its logical action~(Def.~\ref{def:gadget}), and they need not be deterministic.
A \emph{deterministic outcome map} reports selected deterministic parities of these logical outcomes, identified using detectors of the logical action.
Let $\mathsf{O}$ label the declared logical observables.
Collecting the declared detectors and observables gives the affine \emph{detector map} and \emph{outcome map}
\begin{equation}
    \mathrm{DM}:
    \Ftwo^{\mathsf{M}}
    \longrightarrow
    \Ftwo^{\mathsf{D}},
    \qquad
    \mathrm{OM}:
    \Ftwo^{\mathsf{M}}
    \longrightarrow
    \Ftwo^{\mathsf{O}},
    \label{eq:detector-and-logical-affine-maps}
\end{equation}
whose components are the declared detector and observable values.
For an affine map $f:\Ftwo^{\mathsf{X}}\to\Ftwo^{\mathsf{Y}}$, we write $\Delta f$ for its linear part,
\begin{equation}
    (\Delta f)(u)
    :=
    f(u)-f(0) = f(x+u)-f(x),
    \label{eq:affine-map-linearization}
\end{equation}
and $\Delta x$ for a change in a bitvector $x$, so that $\Delta y=(\Delta f)(\Delta x)$ whenever $y=f(x)$ and $\Delta(g\circ f)=(\Delta g)\circ(\Delta f)$.
Thus $\Delta\mathrm{DM}$ and $\Delta\mathrm{OM}$ map outcome flips to detector and observable flips.
Definition~\ref{def:gadget} below extends the outcome map to gadgets, whose observables are the outcomes of a logical circuit, and Definition~\ref{def:detector-map} extends the detector map to gadgets with virtual-syndrome ports; these extensions preserve the definitions above.

\subsection{Stabilizer circuits and Pauli faults}
\label{sec:preliminaries-circuits}
\label{sec:preliminaries-faults}

Following Ref.~\cite{Kliuchnikov2023stabilizercircuitverification}, a stabilizer circuit is assembled from a fixed set $\mathcal{G}$ of \emph{elementary operations}.
Each operation $g\in\mathcal{G}$ has finite labelled sets of input and output ports, a \emph{size} for each port (the number of qubits or bits it carries), and an interpretation $\interpret{g}$ as a stabilizer channel between the corresponding registers.
The sets $\mathcal{G}$ used in this paper consist of qubit preparations, Clifford unitaries, Pauli measurements with a declared classical outcome port, discards, fair-coin allocations, and Pauli corrections controlled by affine parities of earlier outcomes; Appendix~\ref{app:further-stabilizer-preliminaries} verifies that these are stabilizer channels.
\begin{definition}[Stabilizer circuit]
\label{def:stabilizer-circuit}
A \emph{stabilizer circuit} $C$ over $\mathcal{G}$ consists of
a finite set of boxes, each labelled by an operation $g\in\mathcal{G}$ and carrying its input and output ports;
finite labelled sets of \emph{boundary input ports} and \emph{boundary output ports}, each with a specified size, which are the inputs and outputs of $C$ itself;
and a bijective \emph{wiring} that pairs every source port (a boundary input port or a box output port) with a target port (a boundary output port or a box input port) of the same size.
The directed graph on the boxes induced by the wires must be acyclic.
\end{definition}
The boundary input ports carry the quantum inputs and declared classical inputs of the circuit, and the boundary output ports carry its quantum outputs and declared classical outcomes.
A subnetwork of a stabilizer circuit is \emph{convex} if no directed path of wires leaves it and then re-enters it.
With the wires crossing its boundary treated as boundary ports, a convex subnetwork is itself a stabilizer circuit and may be treated as a single box, whose interpretation is the channel defined next.
Convexity ensures that the resulting network remains acyclic~(Def.~\ref{def:window}).
The \emph{interpretation} $\interpret{C}$ of a stabilizer circuit is the stabilizer channel from its boundary input ports to its boundary output ports obtained by composing the channels $\interpret{g}$ along the wires, in any topological order of the boxes; we write
\begin{equation}
    \Phi_{C}
    :=
    \interpret{C}
    \label{eq:stabilizer-circuit-interpretation}
\end{equation}
and reuse the brackets for gadget circuits and map diagrams below, which are networks of the same shape.
A circuit is \emph{closed} if it has no quantum boundary ports, so its only outputs are declared classical outcomes; otherwise it is \emph{open}.

Lemma~\ref{lem:compressor-exists} and the verification conditions of Secs.~\ref{sec:gadget-profiles} and~\ref{sec:examples-implementation} use the following normal form for the outcomes of a stabilizer circuit, as computed by Clifford simulation~\cite{Kliuchnikov2023stabilizercircuitverification}.
\begin{proposition}[General form of outcomes]
    \label{prop:outcome-form}
    The outcome vector of a stabilizer circuit is an affine function $m=M(r,l)$ of a uniformly random bitvector $r$ and, for an open circuit, a bitvector $l$ whose entries are eigenvalues of input Pauli observables measured by the circuit.
    The random bits of a circuit are independent of those of any circuit composed after it.
    The attainable outcome vectors form an affine code, the \emph{outcome code} of Refs.~\cite{Delfosse2023spacetimecodes,Kliuchnikov2023stabilizercircuitverification}, whose check equations are exactly the deterministic parities, i.e., the parities constant in $(r,l)$.
\end{proposition}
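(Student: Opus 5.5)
We prove Proposition~\ref{prop:outcome-form} by induction over a topological order of the boxes, keeping a stabilizer-tableau invariant. The invariant after processing a prefix of boxes is as follows. The joint (unnormalized, Choi-vectorized) state of the quantum wires currently open, together with the reference copy of the inputs, is a stabilizer state. Its stabilizer group is generated by Paulis carrying signs $(-1)^{\alpha(r,l)}$, where each $\alpha$ is affine in the random bits introduced so far and in the bits $l$. Every outcome recorded so far is also an affine function of $(r,l)$.

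The base case is the Choi state of the identity on the input register. Its generators are the pairs $P\otimes P^{T}$ on $A\otimes A'$ with no dependence on $r$. To cover open circuits cleanly, we choose a generating set of input Paulis and let $l$ range over their eigenvalues. Equivalently, on an eigenbasis input the signs of generators supported on $A$ are the bits of $l$.

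For the inductive step, we check each elementary operation using the standard Gottesman--Knill update.
\begin{itemize}
\item Preparations and discards add or remove generators with constant signs.
\item Clifford unitaries conjugate generators and keep signs affine.
\item Fair-coin allocations append a fresh uniform bit to $r$.
\item Classically controlled Paulis flip the signs of the anticommuting generators by an affine parity of earlier outcomes. This parity is affine in $(r,l)$ by the invariant.
\end{itemize}
The key case is a Pauli measurement of $P$, which has two subcases. If $P$ is, up to sign, in the current stabilizer group, the outcome equals the sign of the corresponding product of generators. That sign is a sum of affine functions, hence affine. Otherwise, if $P$ anticommutes with some generator $g$, standard stabilizer simulation shows that the outcome is uniform and independent of everything preceding it. We model it by a fresh bit $r_{\mathrm{new}}$. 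We then replace $g$ by $(-1)^{r_{\mathrm{new}}}P$ and multiply the other anticommuting generators by $g$, which keeps every sign affine. When $P$ anticommutes only with reference-side stabilizers, meaning it measures an input observable, the outcome is instead affine in $l$. This is the source of the $l$ dependence.

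The second claim, independence from later circuits, follows because each later circuit draws its own fresh bits at its random measurements and coins. These are uniform conditioned on all prior data by the anticommuting case above.

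For the outcome code, the image $\{M(r,l)\}$ of an affine map is an affine subspace, hence an affine code. A parity $\langle a,m\rangle+b$ vanishes on every attainable vector exactly when $\langle a,M(r,l)\rangle+b$ is identically zero in $(r,l)$, that is, when it is constant in $(r,l)$ and that constant is zero. The annihilator of the linear part of an affine space is spanned by its check equations. So the deterministic parities are exactly the checks, matching Definition~\ref{def:detector}.

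The main obstacle is making the role of $l$ precise for open circuits with arbitrary (mixed, non-eigen) inputs. We plan to reduce via linearity of $\Phi_C$ and Lemma~\ref{lem:pauli-covariance}: it suffices to analyze inputs diagonal in a Pauli basis adapted to the measured input group, where the eigenvalue bits $l$ are well defined.
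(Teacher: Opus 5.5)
Your route is the paper's: the standard stabilizer-tableau induction with the trichotomy of deterministic sign, fresh coin, and input observable. Your outcome-code paragraph is fine. The gap is in the bookkeeping of $l$, which you flag yourself.

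The Choi state of the identity carries no $l$-dependent signs, so it is not ``equivalent'' to an eigenbasis input. An eigenbasis input requires fixing a maximal commuting set $K$ of input Paulis in advance. Any measured input observable outside $K$ is then recorded as a fresh coin, so the representation depends on $K$; this is why you are forced into the final reduction.

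Your split between the coin case and the input-observable case is also stated at the level of generators, where it is not invariant. The correct test is whether $P$ anticommutes with some element of the subgroup $S_W\subseteq S$ of stabilizers acting trivially on the reference, i.e.\ the output stabilizer group of the prefix. For instance, take generators $X_{w_1}X_{a}$, $Z_{w_1}Z_{a}$, $Z_{w_2}Z_{w_1}Z_{a}$. The Pauli $X_{w_2}$ anticommutes only with a generator touching the reference $a$, yet it anticommutes with $Z_{w_2}\in S_W$, so its outcome is an intrinsic coin. Your update ``multiply the other anticommuting generators by $g$'' can produce such generating sets when $g$ is chosen on the reference side.

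The repair is what the paper's sketch means by ``the remaining case records an input observable.'' Keep the pure Choi invariant with no predetermined $l$; replace discards by forgotten destructive measurements as in Appendix~\ref{sec:stabilizer-versus-outcome-complete-circuits}, so the trichotomy stays exhaustive. Then create a new bit $l_{\mathrm{new}}$ exactly when $P\otimes I\notin\pm S$ but $P$ commutes with $S_W$. In that case $P\otimes Q'\in\pm S$ for some $Q'\neq I$ on the reference, so the outcome is the eigenvalue of the corresponding input observable. This bit is independent of earlier $l$ bits: the reference-only stabilizers are generated by previously measured input observables, and otherwise $P\otimes I$ would already lie in $\pm S$.

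Every element of $S_W$ stabilizes the prefix output for every input, with an affine sign on each branch. Hence the coin case is uniform conditioned on the input and on all earlier data. That is exactly what your claim about later circuits needs, and no reduction over inputs is required.

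If you keep the reduction anyway, primal Pauli covariance does not suffice, since it does not tell you that the output Pauli acts trivially. Use the dual statement of Lemma~\ref{lem:pauli-covariance} instead:
$p(m)=2^{-|\mathsf{M}|}\sum_{a}(-1)^{\langle a,m\rangle}\tr\bigl(\rho\,\Phi^\dagger(Z_{\mathsf{M}}^{a}\otimes I_B)\bigr)$.
By Corollary~\ref{cor:dual-constraints}, every nonzero $\Phi^\dagger(Z_{\mathsf{M}}^{a}\otimes I_B)$ is an input Pauli measured by the channel. So $p(m)$ depends on $\rho$ only through the measured group.
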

The proof is the usual stabilizer-tableau argument: a Pauli measurement whose observable anticommutes with a current stabilizer yields a fresh fair coin, one whose observable lies in the current stabilizer group yields an affine function of earlier coins and the constant signs, and for open circuits the remaining case records an input observable.

Following Ref.~\cite{Beverland2024faulttolerance}, we separate the structural specification of the allowed faults from the probabilities assigned to them.
An \emph{elementary Pauli fault} is a Pauli insertion on a quantum wire, a bit flip on a classical wire, or a fixed collection of such insertions representing a correlated event; in the ambient qubit representation, a bit flip is a Pauli $X$, so a circuit with any fixed set of inserted faults is again a stabilizer circuit.
Let $\mathcal{F}$ be a finite set of allowed elementary faults with label set $\mathsf{F}$, $\alpha\mapsto f_\alpha$.
A \emph{fault configuration} is a bitvector $v\in\Ftwo^{\mathsf{F}}$ selecting the faults that occur, and $\Phi_v$ denotes the channel of the circuit with those faults inserted in causal order.
The \emph{reference fault set} $\mathcal{F}_0$ consists of Pauli $X$ and $Z$ insertions at every chosen quantum fault location and $X$ flips at every chosen classical fault location; general Pauli faults are products of its elements.

\subsection{Detector error models}
\label{sec:preliminaries-dems}

We recall the detector error model of a closed experiment~\cite{Gidney2021stimfaststabilizer,Beverland2024faulttolerance,Derks2025designingfault}.
Treat each allowed fault as a Pauli conjugation of the circuit channel.
By Lemma~\ref{lem:pauli-covariance} and Eq.~\eqref{eq:classical-interface-stabilizers}, its only effect on the declared outcomes is a bit flip, so there is a matrix $A_{\mathsf{M}}\in\Ftwo^{\mathsf{M}\times\mathsf{F}}$ whose column $\alpha$ is the outcome flip produced by $f_\alpha$, and a fault configuration $v$ produces the outcome flip $\Delta m=A_{\mathsf{M}}v$.
This is the closed case of the raw fault-effect map of Sec.~\ref{sec:gadget-rfe}.
The declared \emph{channel-check matrix} $A_{\mathsf{D}}$~\cite{Beverland2024faulttolerance} and \emph{observable-flip matrix} $A_{\mathsf{O}}$, and the \emph{bitmatrix part} of the detector error model, are
\begin{equation}
    \begin{aligned}
        A_{\mathsf{D}}
        &:=(\Delta\mathrm{DM})A_{\mathsf{M}},
        &
        A_{\mathsf{O}}
        &:=(\Delta\mathrm{OM})A_{\mathsf{M}},
        \\
        M_{\mathrm{DEM}}
        &:=
        \begin{bmatrix}
            A_{\mathsf{D}} \\
            A_{\mathsf{O}}
        \end{bmatrix}
        &&\in
        \Ftwo^{(\mathsf{D}\sqcup\mathsf{O})\times\mathsf{F}},
    \end{aligned}
    \label{eq:detector-error-model-map}
\end{equation}
so that $v$ flips the detectors $A_{\mathsf{D}}v$ and the observables $A_{\mathsf{O}}v$.
In the \emph{independent stochastic Pauli noise model}, each fault $f_\alpha$ occurs independently with probability $p_\alpha$.
A \emph{detector error model} (DEM) is the bitmatrix $M_{\mathrm{DEM}}$ together with the probability vector $p\in[0,1]^{\mathsf{F}}$, equivalently the sparse list of mechanisms $(p_\alpha,(A_{\mathsf{D}})_{:,\alpha},(A_{\mathsf{O}})_{:,\alpha})_{\alpha\in\mathsf{F}}$~\cite{Gidney2021stimfaststabilizer,Derks2025designingfault}.
Faults with identical columns are customarily merged into one mechanism whose probability equals the probability that an odd number of those faults occur,
\begin{equation}
    p_{\mathrm{odd}}(\mathsf{U})
    =
    \frac{1-\prod_{\alpha\in\mathsf{U}}(1-2p_\alpha)}{2}.
    \label{eq:equivalent-fault-probability}
\end{equation}
The outcome flip assigned to a fault is unique only up to flips of uniformly random outcome bits; since deterministic parities do not depend on such bits, $A_{\mathsf{D}}$, and $A_{\mathsf{O}}$ for deterministic observables, are intrinsic to the experiment.

\subsection{Gadgets and gadget circuits}
\label{sec:preliminaries-gadgets}

\begin{samepage}
A gadget connects a logical action with its physical realization in a way that supports composition; the definition is implicit in the logical-action verification problem of Ref.~\cite{Kliuchnikov2023stabilizercircuitverification} and, with input and output code blocks, in Ref.~\cite{DEQ}.
\begin{definition}[Gadget]
\label{def:gadget}
A \emph{gadget}~(Fig.~\ref{fig:gadget}) consists of two stabilizer circuits, an \emph{action} (a logical circuit) and a \emph{realization}, zero or more input and output stabilizer codes, each with its fixed generators and logical representatives, and an affine \emph{outcome map} $\mathrm{OM}$ from the realization outcomes to the action outcomes, whose values are the gadget's logical observables.
Two conditions must hold.
First, the realization sandwiched between the encoders of its input codes and the unencoders of its output codes~(Sec.~\ref{sec:preliminaries-codes}), with the encoders' syndromes set to zero, is equivalent to the action under the outcome map~(Def.~\ref{def:circuit-equivalence}).
Second, in the sandwiched realization, the unencoders' syndromes are redundant given the realization outcomes and the encoders' syndromes, and are zero when encoders' syndromes are zero.
\end{definition}
\end{samepage}

\begin{figure}[!tbp]
    \centering
    \begin{subfigure}[b]{0.70\linewidth}
        \centering
        \includegraphics[height=9.5em,width=\linewidth,keepaspectratio]{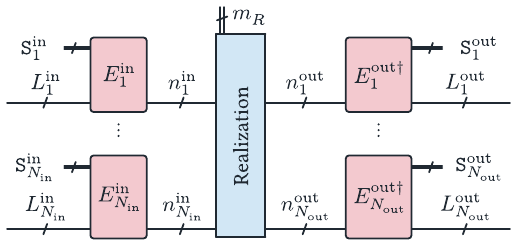}
        \caption{}
        \label{fig:realization-sandwich}
    \end{subfigure}
    \hfill
    \begin{subfigure}[b]{0.27\linewidth}
        \centering
        \includegraphics[height=9.5em,width=\linewidth,keepaspectratio]{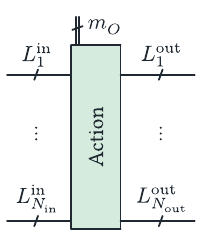}
        \caption{}
        \label{fig:action}
    \end{subfigure}
    \caption{A gadget, shown as a realization sandwiched between encoders and unencoders in subfigure~\subref{fig:realization-sandwich} and an equivalent logical action in subfigure~\subref{fig:action}.
    The outcomes $m_O$ are an affine function of $m_R$; this function is the outcome map.}
    \label{fig:gadget}
\end{figure}
The correctness of a gadget can be checked with the algorithms of Ref.~\cite{Kliuchnikov2023stabilizercircuitverification}, as Sec.~\ref{sec:gadget-profiles} discusses.
Gadgets compose when connected output and input codes match, that is, have the same generators and logical representatives; Sec.~\ref{sec:gadget-dem-composition} additionally requires matching code presentations.
We call the register carried by a gadget port a \emph{code block}.
\begin{definition}[Gadget circuit]
\label{def:gadget-circuit}
A \emph{gadget circuit} $\mathfrak{G}$ over a set of gadgets is a finite acyclic network in the sense of Definition~\ref{def:stabilizer-circuit} in which each box, called a \emph{gadget call}, is labelled by a gadget $g$, its ports are bound to the input and output codes of $g$, and every wire connects an output code to a matching input code.
The action and realization of its \emph{composite gadget} $G$ are the two stabilizer circuits obtained by wiring together the actions and the realizations, respectively, of its gadget calls.
The input and output codes of $G$ are the codes at the boundary ports of $\mathfrak{G}$, and its outcome map is
\begin{equation}
    \mathrm{OM}_G=\bigoplus_{g}\mathrm{OM}_g .
    \label{eq:composite-outcome-map}
\end{equation}
\end{definition}

Composition retains the full outcome maps, including random logical outcomes.
Let $\mathrm{DM}_{\mathrm{act}(G)}$ collect selected detectors of the composite gadget's logical action.
The corresponding \emph{deterministic outcome map} is
\begin{equation}
 \mathrm{OM}^{\mathrm{det}}_G
 = \mathrm{DM}_{\mathrm{act}(G)} \circ \mathrm{OM}_G .
 \label{eq:deterministic-outcome-map}
\end{equation}
Composition can introduce deterministic parities among individually random logical outcomes, as for the parity of the two logical readouts in a Bell experiment.

\subsection{Virtual syndromes and virtual detectors}
\label{sec:preliminaries-virtual-detectors}

\begin{figure}[ht]
    \centering
    \includegraphics[height=4em]{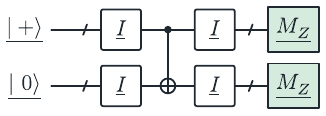}
    \caption{Physical realizations of logical Bell-pair preparation and product measurement, the running example of Secs.~\ref{sec:cutting-detectors} and~\ref{sec:gadget-dems}.
    The gadgets realize initialization in $\ket{0}$ and $\ket{+}$, syndrome extraction (a logical identity), a CNOT, and destructive logical-$Z$ measurement.
    Declared classical outputs, including syndrome-extraction and transversal-measurement outcomes, are omitted.}
    \label{fig:bell-pair}
\end{figure}

We review the modularization of detectors introduced in Ref.~\cite{DEQ, Beverland2024faulttolerance}.
Consider a physical circuit partitioned into the realizations of logical operations, for instance, the Bell-pair experiment of Fig.~\ref{fig:bell-pair}.
At every boundary between two operations, insert an unencoder immediately followed by an encoder that takes the unencoder's syndrome as its syndrome input.
By Eq.~\eqref{eq:unencoder}, this pair implements a nondestructive measurement of the code generators whose outcome is fed forward.
For the Pauli faults considered here, inserting this pair yields an equivalent circuit~(Def.~\ref{def:circuit-equivalence}).
We call the pair \emph{virtual} because it is not executed.
Its outcomes, the code syndrome at the boundary, are the \emph{output virtual syndromes} of the preceding operation and the \emph{input virtual syndromes} of the following one.
Detectors of an operation sandwiched between its virtual encoders and unencoders may involve these bits; we call them \emph{virtual detectors}.
The detectors of the full circuit are recovered by substituting, for every input virtual syndrome, the expression for the matching output virtual syndrome in terms of physical outcomes.
In the scheme of Ref.~\cite{DEQ}, a virtual detector that contains an output virtual syndrome bit contains exactly one such bit, so that it expresses that bit as an affine function of the operation's outcomes and input virtual syndromes; these functions form the \emph{output virtual syndrome map}, and substitution then proceeds in causal order.
Section~\ref{sec:virtual-detectors} shows where this per-operation scheme falls short and how the construction in Sec.~\ref{sec:cutting-detectors-formal-foundations} addresses its limitations.

\begin{figure*}[ht]
    \centering
    \begin{subfigure}[b]{0.3\textwidth}
        \centering
        \includegraphics[height=16em]{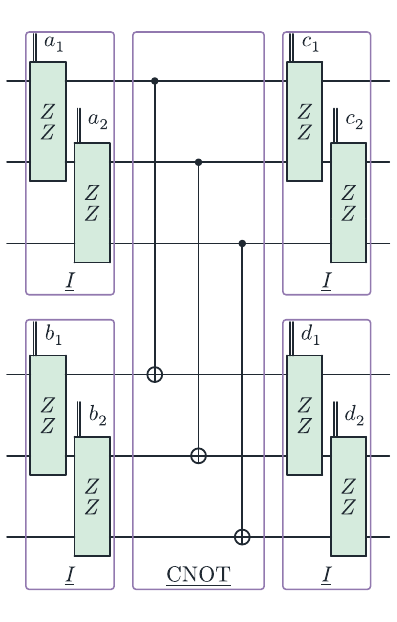}
        \caption{}
        \label{fig:repetition-code-circuit}
    \end{subfigure}
    \hfill
    \begin{subfigure}[b]{0.4\textwidth}
        \centering
        \includegraphics[height=16em]{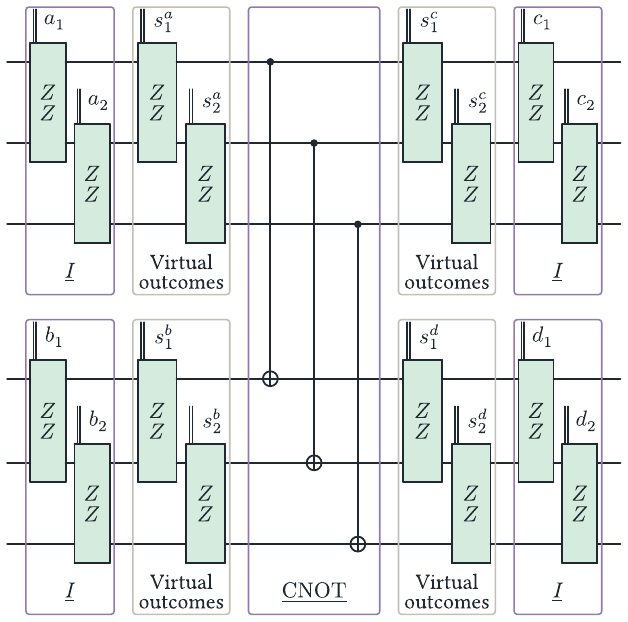}
        \caption{}
        \label{fig:repetition-code-virtual-outcomes}
    \end{subfigure}
    \hfill
    \begin{subfigure}[b]{0.27\textwidth}
        \centering
        \includegraphics[height=16em]{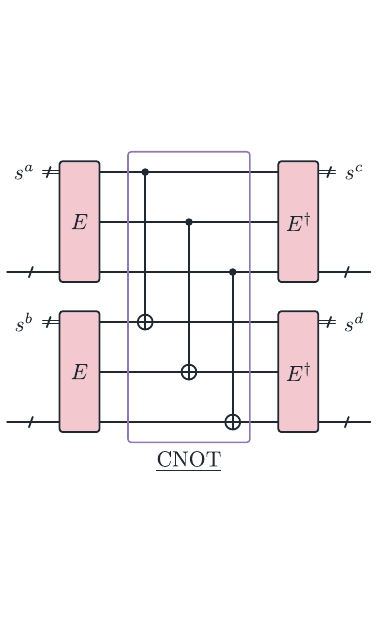}
        \caption{}
        \label{fig:cnot-encode-sandwich}
    \end{subfigure}
    \caption{A transversal CNOT between two blocks of the three-qubit repetition code, preceded by the syndrome-extraction rounds $a$ (control block) and $b$ (target block) and followed by the rounds $c$ and $d$; each round implements the logical identity.
    Subfigure~\subref{fig:repetition-code-circuit} shows the physical circuit partitioned into logical operations,
    subfigure~\subref{fig:repetition-code-virtual-outcomes} adds virtual syndromes at the boundaries,
    and subfigure~\subref{fig:cnot-encode-sandwich} shows the transversal CNOT sandwiched between virtual unencoders and encoders, with input virtual syndromes $s^a,s^b$ and output virtual syndromes $s^c,s^d$.}
    \label{fig:repetition-code-example}
\end{figure*}

Figure~\ref{fig:repetition-code-example} illustrates the scheme.
The code has the checks $Z_1Z_2$ and $Z_2Z_3$, indexed by $j\in\{1,2\}$; round $x\in\{a,b,c,d\}$ measures them with outcomes $x_j$, has input virtual syndromes $s^{x,\mathrm{in}}_j$, and provides the virtual detectors $x_j+s^{x,\mathrm{in}}_j$ together with the output virtual syndrome map $s^{x}_j=x_j$.
The CNOT realization has no physical outcomes; since it maps $Z$ on the target block to $Z\otimes Z$ and leaves $Z$ on the control block unchanged, its output virtual syndrome map is $s^c_j=s^a_j$ and $s^d_j=s^a_j+s^b_j$.
Substituting along the wires turns the virtual detectors of rounds $c$ and $d$ into the detectors
\begin{equation*}
    c_j+a_j,
    \qquad
    d_j+a_j+b_j,
\end{equation*}
of the full circuit.
Their sum $b_j+c_j+d_j$ is a detector supported on three rounds; we revisit it in Sec.~\ref{sec:virtual-detectors}.

\subsection{Diagram conventions for linear and affine maps}
\label{sec:preliminaries-diagrams}

We introduce the following conventions for maps with labelled bitvector input and output ports, closely related to~\cite{Lafont2003}.
Uppercase typewriter labels name classical bitvector ports, while lowercase mathematical symbols denote the values they carry; for example, $m$ is a value at port $\texttt{M}$ and $\Delta m$ is a value at port $\Delta\texttt{M}$.
A \emph{map diagram} is a network in the sense of Definition~\ref{def:stabilizer-circuit} whose boxes are linear or affine maps over $\Ftwo$ between labelled bitvector ports; composing the maps along the wires yields a single map $\interpret{\mathfrak{D}}$ from the boundary inputs to the boundary outputs, and this computation is called \emph{contracting} the diagram.
Inputs are drawn as wires with incoming arrows at the top and left of a box, and outputs are drawn as wires with outgoing arrows at the bottom and right.
Horizontal wires carry \emph{interface ports}, which are connected between boxes when diagrams are composed, and vertical wires carry \emph{free ports}, which survive contraction.
The block matrix of a contracted diagram has one block per pair of boundary input and output ports, and each block is a polynomial in the blocks of the component maps, with a zero block whenever no path connects the two ports (Sec.~\ref{sec:symbolic-construction}).

The map diagrams in this paper are obtained from gadget circuits by relabelling.
\begin{definition}[Map diagram of a gadget circuit]
\label{def:map-diagram-of-gadget-circuit}
Let $X$ assign to every gadget $g$ a linear or affine map $X_g$ with labelled bitvector ports.
For each input code of $g$, the assignment designates a list of input ports of $X_g$; for each output code, it designates a list of output ports.
These lists are the \emph{interface ports} of $X_g$; they stand for the code blocks at the quantum input and output ports of $g$, and the two lists assigned to matching codes have the same length.
All other ports of $X_g$ are \emph{free}.
For a gadget circuit $\mathfrak{G}$~(Def.~\ref{def:gadget-circuit}), the map diagram $X[\mathfrak{G}]$ is obtained by replacing every gadget call $g$ with the box $X_g$ and every code-block wire of $\mathfrak{G}$, which connects an output code of one call to an input code of another, with wires joining the output list of the first box to the input list of the second, entry by entry.
The free ports of all boxes, together with the lists for the code blocks at the boundary of $\mathfrak{G}$, are the boundary ports of $X[\mathfrak{G}]$.
\end{definition}
We consider three such assignments, each turning a gadget circuit into a map diagram whose contraction describes the composite gadget.
\begin{itemize}
    \item \emph{Detector maps} $\mathrm{DM}_g$~(Sec.~\ref{sec:detector-maps}, Def.~\ref{def:detector-map}): the list for a code is its virtual-syndrome port; measurement outcomes and detector values are free.
    The detector map of the composite gadget is defined as the contraction $\mathrm{DM}_G:=\interpret{\mathrm{DM}[\mathfrak{G}]}$, which substitutes output virtual syndromes into input virtual syndromes as in Sec.~\ref{sec:preliminaries-virtual-detectors} (Fig.~\ref{fig:extended-composed}).
    \item \emph{Raw fault-effect maps} $\mathrm{RFE}_g$~(Sec.~\ref{sec:gadget-rfe}): the list for a code is its boundary-error port; fault configurations and outcome flips are free.
    The raw fault-effect map of the composite gadget is defined as the contraction $\mathrm{RFE}_G:=\interpret{\mathrm{RFE}[\mathfrak{G}]}$, Eq.~\eqref{eq:rfe-composition}, after checking that the contraction is a valid raw fault-effect map (Fig.~\ref{fig:rfe-composisiton}).
    \item \emph{Extended detector error models} $\mathrm{EDEM}_g$~(Sec.~\ref{sec:gadget-dem-definition}, Def.~\ref{def:raw-extended-detector-error-model}): the list for a code consists of its virtual-syndrome-flip port and its boundary-error port, or the boundary-error coordinates selected by the code presentation; fault configurations, detector flips, and observable flips are free.
    Here the contraction is not a definition but a result: $\mathrm{EDEM}_G=\interpret{\mathrm{EDEM}[\mathfrak{G}]}$ is Theorem~\ref{thm:edem-composition} in Sec.~\ref{sec:gadget-dem-composition}, and Sec.~\ref{sec:symbolic-construction} evaluates such contractions symbolically.
\end{itemize}

\section{Decomposing detectors}
\label{sec:cutting-detectors}

This section develops tools for decomposing detectors across subcircuits of stabilizer circuits.
We first establish formal detector-decomposition results, then relate them to the virtual detectors of Sec.~\ref{sec:preliminaries-virtual-detectors}.
We conclude by extending the detector map of Sec.~\ref{sec:preliminaries-detectors}, which represents a list of detectors as an affine map, to gadgets.
For gadgets with open code boundaries, these maps carry virtual-syndrome inputs and outputs; contracting the detector-map diagram of a gadget circuit yields the detector map of its composite gadget.

\subsection{Formal foundations}
\label{sec:cutting-detectors-formal-foundations}

To decompose the detectors of a serial composition of two stabilizer circuits $C_1$ and $C_2$ at the boundary between them, we impose three requirements on the boundary.
(i) The connecting qubits carry no stabilizer of the output of $C_1$, including any with a deterministic sign.
(ii) The outcome-dependent signs of the stabilizers removed from the connecting qubits are instead recorded as classical bits $s$, which are affine functions of the outcomes $m_1$ of $C_1$.
(iii) Only independent such bits are recorded, so that no detector is supported on $s$.
An unencoder of the code at the boundary~(Sec.~\ref{sec:preliminaries-codes}) satisfies these requirements except in the two situations discussed in Sec.~\ref{sec:virtual-detectors}.
Definition~\ref{def:compressor} formalizes (i)--(iii) by introducing a \emph{compressor} and pairing it with an \emph{expander} that reverses its action.

\begin{definition}
\label{def:compressor}
Given a stabilizer circuit $C_1$ with $n$ output qubits and outcome vector $m_1$~(Fig.~\ref{fig:circuit-plain}),
a \emph{$C_1$-compressor} is a stabilizer circuit that applies a Clifford unitary $E^\dagger$ to the $n$ qubits and then measures $n_p+n_s$ of them destructively in the $Z$ basis, obtaining $n_p$ deterministic outcomes $0$ and an outcome bitvector $s\in\Ftwo^{n_s}$~(Fig.~\ref{fig:circuit-with-compressor}), such that the circuit consisting of $C_1$ followed by the compressor
has a trivial output stabilizer group~(Sec.~\ref{sec:stabilizer-maps}) on its $n_Q=n-n_p-n_s$ remaining qubits,
the compressor outcome is an affine function $s=\mathrm{CM}(m_1)$ of $m_1$,
and no nontrivial detector is supported on $s$.
The \emph{$C_1$-expander} is the compressor run in reverse: it takes $s$ as a declared classical input, prepares $n_p$ fresh qubits in $\ket{\boldsymbol 0}$ and $n_s$ fresh qubits in $\ket{s}$ using $X$ gates controlled by the bits of $s$, and applies $E$ to these qubits and the $n_Q$ remaining qubits~(Fig.~\ref{fig:circuit-with-compressor}).
\end{definition}

\begin{figure}[t]
    \centering
    \begin{subfigure}[b]{0.23\linewidth}
        \centering
        \raisebox{0.9324em}[8em][0pt]{\includegraphics[height=8em]{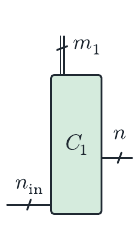}}
        \caption{}
        \label{fig:circuit-plain}
    \end{subfigure}
    \hfill
    \begin{subfigure}[b]{0.75\linewidth}
        \centering
        \includegraphics[height=8em]{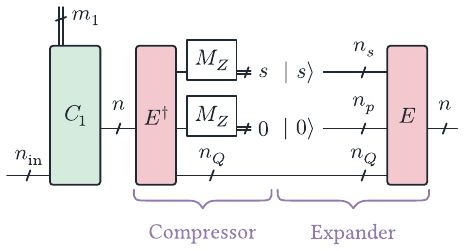}
        \caption{}
        \label{fig:circuit-with-compressor}
    \end{subfigure}
    \caption{(\subref{fig:circuit-plain}) A stabilizer circuit $C_1$ with $\Nin$ input qubits, $n$ output qubits, and outcome vector $m_1$.
    (\subref{fig:circuit-with-compressor}) An equivalent stabilizer circuit consisting of $C_1$ followed by a compressor and expander (Def.~\ref{def:compressor}, Lemma~\ref{lem:compressor-exists}).
    The compressor applies the Clifford unitary $E^\dagger$ and measures $n_p+n_s$ qubits in the $Z$ basis, obtaining the deterministic outcomes $0$ and the outcomes $s$; the expander prepares $n_p+n_s$ fresh qubits in $\ket{\boldsymbol 0}$ and $\ket{s}$ and applies $E$.}
    \label{fig:circuit-and-its-compressor}
\end{figure}

A compressor minimizes the classical and quantum information present at the circuit output without losing information.
Informally, the definition ensures that $n_Q$ is minimal and, subject to that condition, that $n_s$ is also minimal.
If the output stabilizer group on the $n_Q$ qubits were nontrivial, then we could decrease $n_Q$ by one and increase either $n_s$ or $n_p$ by one, depending on whether the sign of the additional stabilizer depends on $m_1$.
If there were a nontrivial detector supported on $s$, then we could decrease $n_s$ by one and increase $n_p$ by one.

\begin{lemma}
\label{lem:compressor-exists}
Every stabilizer circuit $C_1$ has a $C_1$-compressor~(Def.~\ref{def:compressor}), and $C_1$ followed by a $C_1$-compressor and its expander is equivalent to $C_1$~(Def.~\ref{def:circuit-equivalence}) under the outcome map that retains $m_1$ and discards the compressor outcomes.
\end{lemma}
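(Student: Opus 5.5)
The plan is to build the compressor from the output stabilizer group of $C_1$. By Proposition~\ref{prop:outcome-form}, the output of $C_1$, conditioned on the outcome vector $m_1$ and the measured input eigenvalues $l$, is a stabilizer-type state. Its output stabilizer group (Sec.~\ref{sec:stabilizer-maps}) consists of Paulis $P$ on the $n$ output qubits for which some constraint $(I,P,a,b)\in R_{\mathsf{M}}(\Phi_{C_1})$ exists. The phase-free images of these $P$ form an isotropic subspace $T\subseteq\Ftwo^{2n}$. Each $P$ carries an affine sign $a^Tm_1+b$, and this sign is well defined modulo deterministic parities.

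\textbf{Step 1: choose a basis of $T$ adapted to the signs.} Consider the linear map $\tau$ from $T$ to affine functions of $m_1$ modulo detectors, sending $P$ to its sign. Let $T_p=\ker$ of the linear part of $\tau$; on $T_p$ the sign is constant. Choose generators $p_1,\dots,p_{n_p}$ of $T_p$, and multiply by $-1$ where needed so that each has deterministic sign $+1$. Complete these with $t_1,\dots,t_{n_s}$ spanning a complement of $T_p$ in $T$. The signs of the $t_j$, taken modulo detectors, are then linearly independent affine functions of $m_1$; fix representatives $\mathrm{CM}_j(m_1)$ for them.

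\textbf{Step 2: Clifford normal form.} By the standard extension of an independent commuting set of Paulis to a symplectic basis, there is a Clifford $E$ with
\[
E^\dagger p_iE=Z_i \quad\text{and}\quad E^\dagger t_jE=Z_{n_p+j}.
\]
Define the compressor as $E^\dagger$ followed by $Z$ measurements of the first $n_p+n_s$ qubits.

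\textbf{Step 3: check the three conditions of Def.~\ref{def:compressor}.}
\begin{itemize}
\item After $E^\dagger$, qubit $i\le n_p$ is in the state $\ket{0}$, so its measurement is deterministic with outcome $0$.
\item Qubit $n_p+j$ is in the $Z$ eigenstate with sign $\mathrm{CM}_j(m_1)$, so $s=\mathrm{CM}(m_1)$ is affine in $m_1$.
\item No nontrivial detector is supported on $s$. Such a detector would be a combination $\sum c_j\mathrm{CM}_j$ that is constant modulo detectors of $C_1$, which contradicts the linear independence chosen in Step~1. Detectors of the composite that are supported on $s$ alone are, after substituting $s=\mathrm{CM}(m_1)$, deterministic parities of $C_1$; this is how Step~1 applies.
\item The remaining $n_Q$ qubits have trivial output stabilizer group. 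If $Q$ on them had a sign fixed by $m_1,s$, then $E(I\otimes Q)E^\dagger$ would be a stabilizer of the output of $C_1$ outside $T$, which is impossible.
\end{itemize}

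\textbf{Step 4: equivalence.} The state after $E^\dagger$ is $\ket{\boldsymbol 0}\otimes\ket{s}\otimes\rho'$, since the stabilizers $Z_i$ with known signs force a product form. So measuring these qubits and re-preparing $\ket{\boldsymbol0}\ket{s}$ leaves the state unchanged, and applying $E$ undoes $E^\dagger$. At the channel level I would verify this per conditional branch $(r,l)$ of Proposition~\ref{prop:outcome-form}. Conditioned on $m_1$, the output of $C_1$ is the Pauli-stabilized operator with these signs, and the measure-and-reprepare map acts as the identity on operators stabilized by $Z_i$ with known sign. Discarding $s$ then yields $\interpret{C_1}$.

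I expect the main obstacle to be making Step~1 rigorous for open circuits. There the "output" is an operator-valued map that also depends on the input, and the stabilizer group with affine signs must be taken from the one-sided boundary constraints $(I,P,a,b)$. One must also check that the sign of $P$ never depends on $l$, which holds because constraints with $P_{\mathrm{in}}=I$ hold for all inputs. I would handle this through the Choi-state picture of Appendix~\ref{app:further-stabilizer-preliminaries}, where these constraints are ordinary stabilizers with signs controlled by the $Z$-type classical register.
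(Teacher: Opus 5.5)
Your proposal is correct and follows the paper's proof essentially step for step. Your $T_p$ and complement $\{t_j\}$ are the paper's $G_0$ and its coset basis of $G/G_0$. Your checks match the paper's: the trivial output stabilizer group on the remaining qubits, the absence of detectors on $s$, and the branchwise product form $\ket{\boldsymbol 0}\otimes\ket{s}\otimes\rho'$. Your extra care about signs being defined only modulo detectors and not depending on $l$ is consistent with the paper, which handles this implicitly by defining the output stabilizer group through one-sided boundary constraints.
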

\begin{proof}
The construction separates stabilizers with fixed signs from those whose signs must be recorded.
Let $G$ be the output stabilizer group of $C_1$ on its $n$ output qubits~(Sec.~\ref{sec:stabilizer-maps}), and let $G_0$ be its subgroup of stabilizers whose signs are constant across all attainable outcomes $m_1$.
Choose $n_p$ independent generators of $G_0$ and Paulis $P_1,\ldots,P_{n_s}$ whose cosets form a basis of $G/G_0$.
Take a Clifford unitary $E$ such that $E^\dagger$ maps the chosen generators of $G_0$, taken with their fixed signs, to $Z_1,\ldots,Z_{n_p}$ and maps $P_i$ to $Z_{n_p+i}$ for $i=1,\ldots,n_s$.
Measuring these $n_p+n_s$ qubits in the $Z$ basis yields the deterministic outcomes $0$ and the outcomes $s$, whose bits are the signs of the $P_i$ and hence affine functions of $m_1$.

On each outcome branch, the measured qubits are already in the state $\ket{\boldsymbol 0}\otimes\ket{\mathrm{CM}(m_1)}$ after applying $E^\dagger$ and are therefore uncorrelated with the remaining qubits.
Destructively measuring them and then preparing the same state leaves the branch unchanged.
Applying $E$ restores the original output, proving the claimed equivalence after discarding the redundant compressor outcomes.

It remains to check that neither the quantum output nor the classical record contains a redundant degree of freedom.
If a Pauli $P_Q$ stabilized the remaining $n_Q$ qubits with a sign affine in $(m_1,s)$, substituting $s=\mathrm{CM}(m_1)$ would make that sign affine in $m_1$.
Thus $E(I\otimes P_Q)E^\dagger$ would belong to $G$.
But $E^\dagger$ maps every element of $G$ to a product of $Z_1,\ldots,Z_{n_p+n_s}$, which acts trivially on the remaining qubits, so $P_Q=I$.
Likewise, a nontrivial detector $\langle w,s\rangle+c$ supported on $s$ would give the product $\prod_i P_i^{w_i}$ a fixed sign.
This product would then belong to $G_0$, contradicting the independence of the chosen cosets.
Hence all three compressor properties hold.
\end{proof}

Compressors and expanders are not unique, but Lemma~\ref{lem:compressor-exists} guarantees their existence.
We now use them to cut detectors across the serial composition of $C_1$ and $C_2$~(Fig.~\ref{fig:serial-open}).

\begin{figure*}[t]
    \centering
    \begin{subfigure}[b]{0.35\textwidth}
        \centering
        \raisebox{1.14486em}[10em][0pt]{\includegraphics[height=10em]{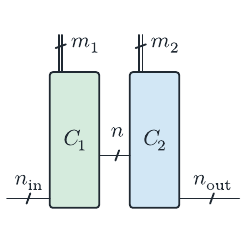}}
        \caption{}
        \label{fig:serial-open-circuit}
    \end{subfigure}
    \hfill
    \begin{subfigure}[b]{0.64\textwidth}
        \centering
        \includegraphics[height=10em]{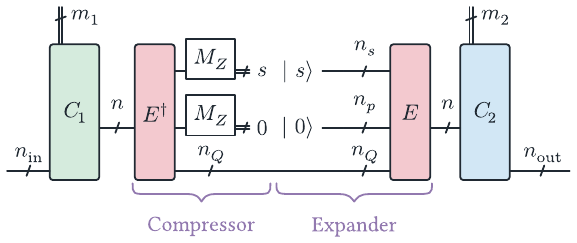}
        \caption{}
        \label{fig:serial-open-circuit-with-compressor}
    \end{subfigure}
    \caption{(\subref{fig:serial-open-circuit}) A two-part open stabilizer circuit with $\Nin$ input qubits and $n_{\mathrm{out}}$ output qubits, composed of stabilizer circuits $C_1$ and $C_2$ connected
    by $n$ qubits and having outcome vectors $m_1$ and $m_2$.
    (\subref{fig:serial-open-circuit-with-compressor}) An equivalent circuit with a $C_1$-compressor and $C_1$-expander inserted between the two parts.}
    \label{fig:serial-open}
\end{figure*}

Detectors are affine expressions in the outcomes that vanish on every attainable outcome~(Def.~\ref{def:detector}); for the expander followed by $C_2$, the free classical input $s$ counts as a variable.
One direction of the decomposition follows directly from the composition of boundary constraints~(Sec.~\ref{sec:stabilizer-maps}).
A detector of $C_1$ is a detector of the composed circuit, and a detector $\langle v_0,s\rangle+\langle v_2,m_2\rangle+b$ of the expander followed by $C_2$ becomes the detector $\langle v_0,\mathrm{CM}(m_1)\rangle+\langle v_2,m_2\rangle+b$ of the composed circuit, because the compressor records $s=\mathrm{CM}(m_1)$ and inserting it together with the expander is an equivalence~(Lemma~\ref{lem:compressor-exists}).

We now prove the converse, stated in Lemma~\ref{lem:cutting-detectors} below.
The proof relies on Theorem~\ref{thm:boundary-observable}, which factors detectors across a cut without using a compressor and requires only that $C_2$ be a stabilizer circuit.
In terms of channels and observables, the theorem states that an observable on the outcome registers that stabilizes the output of a serial composition factors through a Pauli $P$ at the cut, which the first channel prepares and the second measures.
The Pauli $P$ is obtained by pulling the observable back through the second channel using its adjoint~(Sec.~\ref{sec:stabilizer-maps}).

\begin{theorem}[Stabilized observables factor across a serial composition]
\label{thm:boundary-observable}
Consider a channel $\Phi_1$ from a register $A$ to registers $\mathsf{M}_1\otimes B$, a stabilizer channel $\Phi_2$ from $B$ to $\mathsf{M}_2\otimes C$, and their serial composition $\Phi:=(\operatorname{id}_{\mathsf{M}_1}\otimes\Phi_2)\circ\Phi_1$ from $A$ to $\mathsf{M}_1\otimes\mathsf{M}_2\otimes C$.
Let $Z_{\mathsf{M}_1}$ and $Z_{\mathsf{M}_2}$ be signed $Z$-type observables on $\mathsf{M}_1$ and $\mathsf{M}_2$ such that $Z_{\mathsf{M}_1}\otimes Z_{\mathsf{M}_2}\otimes I_C$ stabilizes the output of $\Phi$, that is, $(Z_{\mathsf{M}_1}\otimes Z_{\mathsf{M}_2}\otimes I_C)\Phi(\rho)=\Phi(\rho)$ for all operators $\rho$.
Then $P:=\Phi_2^\dagger(Z_{\mathsf{M}_2}\otimes I_C)$ is a Pauli observable on $B$ such that, for all operators $\rho$,
\begin{equation*}
\begin{aligned}
(Z_{\mathsf{M}_1}\otimes P)\Phi_1(\rho) &= \Phi_1(\rho), \\
(Z_{\mathsf{M}_2}\otimes I_C)\Phi_2(P\rho) &= \Phi_2(\rho).
\end{aligned}
\end{equation*}
\end{theorem}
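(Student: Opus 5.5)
The plan is to reduce both claimed identities to trace identities via the adjoint. Then I will upgrade those to operator identities, using the fact that a Hermitian $\pm1$-valued observable with expectation $1$ on a positive operator must fix it. Write $O_2:=Z_{\mathsf{M}_2}\otimes I_C$ and $P:=\Phi_2^\dagger(O_2)$.

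\emph{Step 1: $P$ is a Pauli observable.} Since $O_2$ is a signed Pauli on $\mathsf{M}_2\otimes C$, Lemma~\ref{lem:pauli-covariance} (dual statement) says $P$ is either zero or a Pauli up to sign. Because $\Phi_2^\dagger$ preserves Hermiticity, $P$ is Hermitian, so if it is nonzero it is a Pauli observable. To exclude $P=0$, take the trace of the hypothesis. Let $O:=Z_{\mathsf{M}_1}\otimes O_2$ and let $\rho$ be a state. Then $\tr(O\Phi(\rho))=\tr\Phi(\rho)=1$. Moving $\operatorname{id}\otimes\Phi_2$ across the trace with its adjoint gives
\begin{equation*}
\tr\bigl((Z_{\mathsf{M}_1}\otimes P)\,\Phi_1(\rho)\bigr)=1 ,
\end{equation*}
so $P\neq 0$.

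\emph{Step 2: first identity.} Let $Q:=Z_{\mathsf{M}_1}\otimes P$ and $\sigma:=\Phi_1(\rho)$ for a state $\rho$. Then $Q$ is Hermitian with $Q^2=I$, $\sigma\ge0$ and $\tr\sigma=1$, and by Step~1 $\tr(Q\sigma)=1$. Hence $\tr\bigl(\tfrac{I-Q}{2}\sigma\bigr)=0$ with $\tfrac{I-Q}{2}$ a projector, which forces $\tfrac{I-Q}{2}\sigma=0$, that is, $Q\sigma=\sigma$. States span all operators and $\Phi_1$ is linear, so $Q\Phi_1(\rho)=\Phi_1(\rho)$ for every operator $\rho$.

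\emph{Step 3: second identity, via Kraus operators of $\Phi_2$.} Write $\Phi_2(\rho)=\sum_k K_k\rho K_k^\dagger$, so that $P=\sum_k K_k^\dagger O_2K_k$ and $\sum_k K_k^\dagger K_k=I$. Then
\begin{equation*}
\sum_k K_k^\dagger\,\tfrac{I-O_2}{2}\,K_k=\tfrac{I-P}{2}.
\end{equation*}
Sandwiching this between vectors $\psi$ in the $+1$ eigenspace of $P$ gives $\sum_k\|\tfrac{I-O_2}{2}K_k\psi\|^2=0$, hence $O_2K_k\Pi_+=K_k\Pi_+$. The symmetric argument with $\tfrac{I+O_2}{2}$ on the $-1$ eigenspace gives $O_2K_k\Pi_-=-K_k\Pi_-$. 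Together these say $O_2K_k=K_kP$ for every $k$. Therefore
\begin{equation*}
O_2\Phi_2(P\rho)=\sum_k O_2K_kP\rho K_k^\dagger=\sum_k K_kP^2\rho K_k^\dagger=\Phi_2(\rho).
\end{equation*}

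The main obstacle is Step~3, specifically passing from the scalar information $\Phi_2^\dagger(O_2)=P$ to an operator intertwining $O_2K_k=K_kP$ on each Kraus operator. The positivity trick above handles it, using that $P$ is $\pm1$-valued, which is why Step~1 must first establish that $P$ is a genuine nonzero Pauli observable. Note that only $\Phi_2$ needs stabilizer structure, and only through Lemma~\ref{lem:pauli-covariance}; $\Phi_1$ is an arbitrary channel, consistent with the statement.
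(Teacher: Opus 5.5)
Your proof is correct, and its skeleton matches the paper's. You define $P$ by pulling $Z_{\mathsf{M}_2}\otimes I_C$ back through $\Phi_2^\dagger$, invoke the dual half of Lemma~\ref{lem:pauli-covariance} together with Hermiticity preservation, and rule out $P=0$ using the pulled-back hypothesis.

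Where you genuinely differ is in how dual (Heisenberg-picture) relations are turned into the two primal constraints.

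\textbf{The paper's route.} It does this uniformly with Corollary~\ref{cor:dual-constraints}. First it converts the hypothesis into the operator identity $\Phi_1^\dagger(Z_{\mathsf{M}_1}\otimes P)=I_A$. It then applies the corollary, and hence Choi's multiplicative-domain theorem (Theorem~\ref{thm:multiplicative-domain}), to both $\Phi_1$ and $\Phi_2$.

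\textbf{Your route.} You argue directly in each case.
\begin{itemize}
\item For the first constraint you need only the scalar statement $\operatorname{Tr}\bigl((Z_{\mathsf{M}_1}\otimes P)\Phi_1(\rho)\bigr)=1$ on states, plus positivity of $\Phi_1(\rho)$ and the fact that $(I-Q)/2$ is a projector. No multiplicative-domain argument is needed there, and you never need the operator identity $\Phi^\dagger(\cdot)=I_A$.
\item For the second constraint you derive the Kraus intertwining $O_2K_k=K_kP$. This is exactly the relation $AK_j=K_j\Psi(A)$ at the heart of the paper's proof of Theorem~\ref{thm:multiplicative-domain}. You obtain it by sandwiching with the spectral projectors of $P$, rather than by expanding the sum of squares $\sum_j(AK_j-K_j\Psi(A))^\dagger(AK_j-K_j\Psi(A))$.
\end{itemize}

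\textbf{What each buys.} Your version is self-contained and elementary. It also makes visible that the first constraint is a pure positivity fact, while only the second uses the Kraus-level rigidity. The paper's version packages both directions into a reusable equivalence: it uses the converse direction to get $\Phi^\dagger(Z_{\mathsf{M}_1}\otimes Z_{\mathsf{M}_2}\otimes I_C)=I_A$ from the hypothesis. That keeps the main-text proof to a few lines.
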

\begin{proof}
Pulling the stabilized observable back to the input gives the identity.
Using Corollary~\ref{cor:dual-constraints}~(Appendix~\ref{app:further-math-preliminaries}) and the composition rule for adjoints, we obtain
\begin{equation*}
\begin{aligned}
I_A
&=\Phi^\dagger(Z_{\mathsf{M}_1}\otimes Z_{\mathsf{M}_2}\otimes I_C) \\
&=\Phi_1^\dagger\bigl(Z_{\mathsf{M}_1}\otimes
\Phi_2^\dagger(Z_{\mathsf{M}_2}\otimes I_C)\bigr) \\
&=\Phi_1^\dagger(Z_{\mathsf{M}_1}\otimes P).
\end{aligned}
\end{equation*}
The stabilizer assumption on $\Phi_2$ enters here: by the dual statement of Lemma~\ref{lem:pauli-covariance}, $P$ is either zero or a Pauli up to a sign.
The displayed identity excludes zero, and preservation of Hermiticity makes $P$ a Hermitian Pauli, hence a unitary.
We can therefore apply Corollary~\ref{cor:dual-constraints} to both
$\Phi_1^\dagger(Z_{\mathsf{M}_1}\otimes P)=I_A$ and
$\Phi_2^\dagger(Z_{\mathsf{M}_2}\otimes I_C)=P$.
These yield the first and second channel constraints in the statement, respectively.
\end{proof}

In the language of Sec.~\ref{sec:preliminaries}, let $\mathsf{M}_j$ be the declared classical outcome register of $\Phi_j$ and $Z_{\mathsf{M}_j}=Z_{\mathsf{M}_j}(a_j,b_j)$.
The hypothesis of the theorem then says that $\langle a_1,m_1\rangle+\langle a_2,m_2\rangle+b_1+b_2$ is a detector of $\Phi$~(Def.~\ref{def:detector}), and its conclusion says that $(I_A,P,a_1,b_1)$ and $(P,I_C,a_2,b_2)$ are boundary constraints~(Def.~\ref{def:boundary-constraint}): $P$ stabilizes the output of $\Phi_1$ with the sign given by $Z_{\mathsf{M}_1}$, and $\Phi_2$ measures $P$ with the sign given by $Z_{\mathsf{M}_2}$.
A compressor records the signs of exactly these boundary observables in its outcomes $s$, which yields the detector decomposition.
Declared classical boundary inputs of the composed circuit are counted among the outcomes $m_1$ of $C_1$.
By Def.~\ref{def:detector}, a detector must vanish for every value of such an input, so replacing the input by a fair coin~(Sec.~\ref{sec:preliminaries-circuits}) whose value is recorded in $m_1$ changes neither the detectors nor the argument below.

\begin{lemma}[Cutting detectors]
\label{lem:cutting-detectors}
Consider the serial composition of two stabilizer circuits $C_1$ and $C_2$ with outcome vectors $m_1$ and $m_2$~(Fig.~\ref{fig:serial-open-circuit}), and fix a $C_1$-compressor and its expander~(Def.~\ref{def:compressor}).
Every detector of the composed circuit is the sum of a detector $d_1$ of $C_1$ and a detector $d_2$ of the expander followed by $C_2$~(Fig.~\ref{fig:serial-open-circuit-with-compressor}), with the compressor outcome $\mathrm{CM}(m_1)$ substituted for the classical input $s$ of the expander.
If the detector is nontrivial, at least one of $d_1$ and $d_2$ is nontrivial.
The decomposition is unique.
\end{lemma}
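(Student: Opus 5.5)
The plan is to apply Theorem~\ref{thm:boundary-observable} to a given detector $\langle a_1,m_1\rangle+\langle a_2,m_2\rangle+b$ of the composed circuit, and then use the compressor structure to turn the resulting boundary Pauli into a detector of the expander followed by $C_2$. Take $\Phi_1=\interpret{C_1}$ and $\Phi_2=\interpret{C_2}$, with $Z_{\mathsf{M}_1}=Z_{\mathsf{M}_1}(a_1,b)$ and $Z_{\mathsf{M}_2}=Z_{\mathsf{M}_2}(a_2,0)$. Declared classical inputs are treated as recorded fair coins, as the paragraph before the lemma allows. The theorem then gives a Pauli $P$ on the $n$ connecting qubits with two properties. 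First, $(I,P,a_1,b)\in R_{\mathsf{M}_1}(\Phi_1)$, so $P$ lies in the output stabilizer group $G$ of $C_1$ with sign $\langle a_1,m_1\rangle+b$. Second, $(P,I,a_2,0)\in R_{\mathsf{M}_2}(\Phi_2)$, so $C_2$ measures $P$ with outcome $\langle a_2,m_2\rangle$.

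Next I decompose $P$ using the basis chosen in Lemma~\ref{lem:compressor-exists}. Since $E^\dagger$ maps $G$ into products of $Z_1,\dots,Z_{n_p+n_s}$, we have, up to sign, $E^\dagger PE=Z^{u}\otimes Z^{w}\otimes I$ with $u\in\Ftwo^{n_p}$ and $w\in\Ftwo^{n_s}$. Hence $P$ equals the corresponding product of the $G_0$ generators and the $P_i$, times a constant sign $c$. The sign of $P$ on the output of $C_1$ is then $\langle w,\mathrm{CM}(m_1)\rangle+c$. Comparing this with the sign $\langle a_1,m_1\rangle+b$ obtained above shows that $d_1:=\langle a_1,m_1\rangle+b+\langle w,\mathrm{CM}(m_1)\rangle+c$ vanishes on every attainable $m_1$, so $d_1$ is a detector of $C_1$. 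On the other side, the expander prepares a state on which $P$ has sign $\langle w,s\rangle+c$ for every input $s$ (the $\ket{\boldsymbol 0}$ part contributes only constants). Serial composition of boundary constraints (Sec.~\ref{sec:stabilizer-maps}) with $C_2$'s measurement of $P$ then shows that $d_2:=\langle w,s\rangle+c+\langle a_2,m_2\rangle$ is a detector of the expander followed by $C_2$, with $s$ counted as a variable. Substituting $s=\mathrm{CM}(m_1)$ gives $d_1+d_2$ equal to the original detector, since the $\mathrm{CM}$ terms and $c$ cancel.

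For nontriviality, suppose both $d_1$ and $d_2$ have zero linear part. Then $a_2=0$ and $w=0$, so $\langle a_1,m_1\rangle=\langle w,\mathrm{CM}(m_1)\rangle$ forces $a_1=0$, once I make precise that the linear part of $d_1$ is taken in $m_1$. I need to handle this with care, because $\langle w,\mathrm{CM}(m_1)\rangle$ may have a nonzero linear part in $m_1$; the clean statement is that $w=0$ and $a_2=0$ imply $a_1=0$. For uniqueness, take the difference of two decompositions. It is a pair $(d_1-d_1',\,d_2-d_2')$ that sums to zero after substitution, so $d_2-d_2'$, viewed after substitution, equals a detector of $C_1$ plus a constant. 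Its $m_2$ part must then vanish, because the $m_2$ coefficients of the two decompositions both equal $a_2$. So $d_2-d_2'$ is a detector supported on $s$ alone, and the compressor property that no nontrivial detector is supported on $s$ gives $w=w'$, with constants matching. Then $d_1=d_1'$ follows.

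I expect the main obstacle to be this uniqueness step together with the bookkeeping of which "linear part" is meant once $\mathrm{CM}$ is substituted. A careful reading must also make sure that "no nontrivial detector supported on $s$" applies to the expander followed by $C_2$, not only to $C_1$ followed by the compressor. I would settle this first, by noting that a detector $\langle w,s\rangle+c$ of the expander followed by $C_2$ that is independent of $m_2$ is, after substitution, a fixed-sign parity of the $P_i$ on $C_1$'s output. That parity lies in $G_0$, which contradicts the independence of the cosets, as in the proof of Lemma~\ref{lem:compressor-exists}.
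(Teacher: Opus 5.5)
Your proof is correct, but it cuts in a different place from the paper. You apply Theorem~\ref{thm:boundary-observable} at the raw cut, with $\Phi_1=\interpret{C_1}$ and $\Phi_2=\interpret{C_2}$. The boundary Pauli $P$ is then the pull-back through $C_2$ alone and lies in the output stabilizer group $G$ of $C_1$. You rewrite it as $(-1)^cE(Z^u\otimes Z^w\otimes I)E^\dagger$, get $d_1$ by comparing the two signs of $P$ on the output of $C_1$, and get $d_2$ by hand-composing the expander's preparation of $P$ (sign $\langle w,s\rangle+c$) with $C_2$'s measurement of $P$.

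The paper instead first inserts the compressor and expander, which preserves $d$ by Lemma~\ref{lem:compressor-exists}. It then applies the theorem at the compressed cut, with $\Phi_1$ equal to $C_1$ followed by the compressor and $\Phi_2$ equal to the expander followed by $C_2$. The boundary Pauli then lives on $Q\otimes\mathsf{S}$. The classical declaration of $\mathsf{S}$ forces its $\mathsf{S}$-part to be $Z_{\mathsf{S}}^w$, and the trivial output stabilizer group on $Q$ removes its quantum part. The detectors $d_1$ and $d_2$ are then exactly the two constraints the theorem returns, with no reference to the internal structure of $E$.

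Both routes rest on the same fact, that nothing outcome-dependent survives on $Q$. Yours avoids the dephasing argument on $\mathsf{S}$ and the transport of $d$ through the inserted compressor and expander. The price is that you need $E^\dagger G E\subseteq\langle Z_1,\ldots,Z_{n_p+n_s}\rangle$, with $EZ_jE^\dagger$ carrying sign $0$ for $j\le n_p$ and $\mathrm{CM}(m_1)_{j-n_p}$ otherwise. You cite this from the particular compressor built in the proof of Lemma~\ref{lem:compressor-exists}, but the statement fixes an arbitrary compressor, so derive it from Definition~\ref{def:compressor}:
\begin{itemize}
\item The deterministic compressor outcomes are detectors of $C_1$ followed by the compressor. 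Theorem~\ref{thm:boundary-observable}, applied just before the compressor, turns them into $EZ_jE^\dagger\in G$ with the stated signs.
\item For $g\in G$, $E^\dagger gE$ commutes with these $Z_j$, so its part on the measured qubits is $Z$-type. Any leftover factor $R$ on the remaining $n_Q$ qubits would be an output stabilizer of $C_1$ followed by the compressor, hence $R=I$.
\end{itemize}

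Two simplifications. In the nontriviality step, triviality of $d_2$ already gives $w=0$, so the linear part of $d_1$ is just $a_1$ and triviality of $d_1$ gives $a_1=0$. Your ``clean statement'' drops the hypothesis that $d_1$ is trivial and is false without it: a nontrivial detector of $C_1$ alone has $a_2=0$, $w=0$, $a_1\neq 0$. In the uniqueness step the detour through $G_0$ is unnecessary. $d_2-d_2'$ is a detector of the expander followed by $C_2$ supported on the declared classical input $s$, and Definition~\ref{def:detector} requires it to vanish for every value of $s$, so it is zero. This is the paper's argument.
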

\begin{proof}
The key point is that, after compression, the boundary Pauli supplied by Theorem~\ref{thm:boundary-observable} must be a parity of the classical bits $s$.
Let $d=\langle a_1,m_1\rangle+\langle a_2,m_2\rangle+b$ be a detector, and insert the compressor and expander after $C_1$.
This preserves $d$ by Lemma~\ref{lem:compressor-exists}.
Let $\Phi_1$ be $C_1$ followed by the compressor and $\Phi_2$ the expander followed by $C_2$.
At the cut, regard $s$ as the classical register $\mathsf{S}$ passed to $\Phi_2$ together with the remaining qubits $Q$, while $m_1$ is retained in the outcome register $\mathsf{M}_1$.
Apply Theorem~\ref{thm:boundary-observable} with $Z_{\mathsf{M}_1}(a_1,b)$ and $Z_{\mathsf{M}_2}(a_2,0)$.
It gives a Pauli
\[
P:=\Phi_2^\dagger(Z_{\mathsf{M}_2}(a_2,0)\otimes I_C)
\]
that $\Phi_1$ prepares with sign $(-1)^{\langle a_1,m_1\rangle+b}$ and $\Phi_2$ measures with outcome $\langle a_2,m_2\rangle$.

Because $\mathsf{S}$ is a declared classical input of $\Phi_2$, the image of $\Phi_2^\dagger$ is invariant under dephasing on $\mathsf{S}$~(Eq.~\eqref{eq:classical-interface}).
Thus $P$ has only $I$ and $Z$ factors on $\mathsf{S}$, so write $P=(-1)^cP_Q\otimes Z_{\mathsf{S}}^w$ with $P_Q$ Hermitian.
On a branch with outcomes $(m_1,s)$, the preparation constraint then says that $P_Q$ stabilizes $Q$ with sign
$(-1)^{\langle a_1,m_1\rangle+b+\langle w,s\rangle+c}$.
The compressor leaves $Q$ with a trivial output stabilizer group, so $P_Q=I$, absorbing any overall sign into $c$.
Consequently, the boundary observable is just $P=(-1)^cI_Q\otimes Z_{\mathsf{S}}^w$.

The preparation and measurement constraints now give the two detectors
\begin{align*}
d_1&:=\langle a_1,m_1\rangle
      +\langle w,\mathrm{CM}(m_1)\rangle+b+c, \\
d_2&:=\langle w,s\rangle+\langle a_2,m_2\rangle+c.
\end{align*}
Here $d_1$ is affine in $m_1$ because $\mathrm{CM}$ is affine, and $d_2$ vanishes for every classical input $s$ and every quantum input of $\Phi_2$ by its measurement constraint.
Substituting $s=\mathrm{CM}(m_1)$ cancels the shared boundary parity and gives $d_1+d_2=d$.
If $d$ is nontrivial, the two summands cannot both be zero, so at least one is nontrivial.

For uniqueness, the essential fact is that $s$ is a free input of the expander.
If $d=d_1+d_2=d_1'+d_2'$ are two decompositions, then $d_2-d_2'$ has no $m_2$ part, so it is a detector of the expander followed by $C_2$ that depends on $s$ alone.
Such a detector vanishes for every value of $s$~(Def.~\ref{def:detector}) and is therefore zero.
Hence $d_2=d_2'$ and $d_1=d_1'$.
\end{proof}

The proof is constructive.
Propagating $Z_{\mathsf{M}_2}(a_2,0)\otimes I_C$ backwards through $C_2$ and the expander by a stabilizer-tableau computation gives $P$.
Reading off its classical parity $w$ and sign $c$ then determines $d_1$ and $d_2$.
Thus a detector basis chosen for the composed circuit can be decomposed detector by detector while preserving the basis supplied to the decoder.
This matters because decoding performance depends on the choice of basis.
The first two compressor properties make the decomposition possible: no outcome-dependent stabilizer remains on $Q$, and the recorded boundary parity is affine in $m_1$.
The third property makes the classical interface minimal: since its attainable values form an affine code~(Prop.~\ref{prop:outcome-form}), the absence of detectors on $s$ means that $s$ ranges over all of $\Ftwo^{n_s}$.

\begin{corollary}[Completeness]
\label{cor:complete-basis}
Let $B_1$ be a complete detector basis of $C_1$ and $B_2$ a complete detector basis of the expander followed by $C_2$, and let $B_2'$ be obtained from $B_2$ by substituting $\mathrm{CM}(m_1)$ for $s$.
Then $B_1\cup B_2'$ is a complete detector basis of the composed circuit.
\end{corollary}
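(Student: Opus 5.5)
Two things must be shown: $B_1\cup B_2'$ generates every deterministic parity of the composed circuit (completeness), and it is linearly independent (basis). Both follow from Lemma~\ref{lem:cutting-detectors} and the third compressor property.

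\emph{Completeness.} Let $d$ be any detector of the composed circuit. Lemma~\ref{lem:cutting-detectors} writes $d=d_1+d_2|_{s=\mathrm{CM}(m_1)}$, where $d_1$ is a detector of $C_1$ and $d_2$ is a detector of the expander followed by $C_2$. Since $B_1$ is complete, $d_1$ is a sum of elements of $B_1$, possibly plus the trivial constant detector, which is zero as an affine function vanishing identically. Likewise $d_2=\sum_{e\in T}e$ for some $T\subseteq B_2$. Substitution $s\mapsto\mathrm{CM}(m_1)$ is an affine map on expressions that commutes with sums, so $d_2|_{s=\mathrm{CM}(m_1)}=\sum_{e\in T}e'$ with $e'\in B_2'$. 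Hence $d$ lies in the span of $B_1\cup B_2'$. Each element of $B_2'$ is itself a detector of the composed circuit by the forward direction noted before Theorem~\ref{thm:boundary-observable}, so the span consists only of detectors.

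\emph{Independence.} This is the step needing care, because substitution could a priori create dependencies or collapse distinct elements. Suppose $\sum_{e\in U}e+\sum_{e'\in V'}e'$ has zero linear part, with $U\subseteq B_1$ and $V\subseteq B_2$ nonempty in total. The $m_2$-coordinates come only from $V'$, and they equal the $m_2$-part of $\delta:=\sum_{e\in V}e$ (before substitution). The $m_2$-part of $\delta$ therefore vanishes, so $\delta$ is a detector of the expander followed by $C_2$ supported on $s$ alone. Because $s$ is a free classical input, Definition~\ref{def:detector} forces $\delta$ to have zero linear part. This is exactly the uniqueness argument in Lemma~\ref{lem:cutting-detectors}. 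By independence of $B_2$ we get $V=\emptyset$. The relation then reduces to $\sum_{e\in U}e=0$, and independence of $B_1$ gives $U=\emptyset$.

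I would also check that $B_1$ and $B_2'$ are disjoint as labelled families, so the union has cardinality $|B_1|+|B_2|$. This follows from the same argument applied to a single coincidence $e=e'$. Together these show $B_1\cup B_2'$ is a complete detector basis. The only delicate point is arguing independence over linear parts (nontriviality) rather than affine constants, which Definition~\ref{def:detector} handles.
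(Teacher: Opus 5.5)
Your proposal is correct and follows essentially the same route as the paper. Spanning comes from the decomposition of Lemma~\ref{lem:cutting-detectors}, followed by expansion in $B_1$ and $B_2$ and the substitution $s=\mathrm{CM}(m_1)$. Independence comes from the uniqueness part of that lemma, which you inline rather than cite: the $B_2$-sum in any relation has vanishing $m_2$-part, so it is a detector supported on the free input $s$ alone and is therefore zero.
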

\begin{proof}
Each element of $B_1\cup B_2'$ is a detector of the composed circuit by the composition of boundary constraints noted before Lemma~\ref{lem:cutting-detectors}.
That lemma also establishes that the union spans the detector space: decompose any detector as $d_1+d_2$, expand the two parts in $B_1$ and $B_2$, and substitute $s=\mathrm{CM}(m_1)$.

For independence, any linear relation among the elements of $B_1\cup B_2'$ gives a decomposition of the zero detector into a sum from $B_1$ and a substituted sum from $B_2$.
Uniqueness in Lemma~\ref{lem:cutting-detectors}, compared with the decomposition $0+0$, forces both sums to be zero before substitution.
Independence of $B_1$ and $B_2$ then forces every coefficient in the relation to vanish.
\end{proof}

\subsection{Virtual syndromes and virtual detectors}
\label{sec:virtual-detectors}

In the approach to modularizing detectors introduced in Ref.~\cite{DEQ} and reviewed in Sec.~\ref{sec:preliminaries-virtual-detectors}, an unencoder does not always correspond to a compressor.
This correspondence can fail in two common ways.
First, in violation of requirement (iii) of Sec.~\ref{sec:cutting-detectors-formal-foundations}, nontrivial detectors may be supported entirely on the output virtual syndrome bits.
For example, in logical-zero state preparation, the output syndrome is commonly fixed to zero, so each output virtual syndrome bit $s_j$ defines a detector $s_j = 0$.
We handle this case by setting the expression for $s_j$ to zero in the output virtual syndrome map.
Second, in violation of requirement (i), the output stabilizer group~(Sec.~\ref{sec:stabilizer-maps}) may contain a logical operator of the code.
A logical-zero preparation circuit again provides an example.
This case is addressed by detectors in the logical circuit.
For example, a logical circuit for a $Z$-memory experiment prepares the logical-zero state and measures $Z$.
This circuit has a detector requiring the $Z$-measurement outcome to be zero.
Whenever a logical operation produces logical measurement outcomes, its outcome map~(Def.~\ref{def:gadget}, Eq.~\eqref{eq:detector-and-logical-affine-maps}) expresses them as affine functions of the physical outcomes.
Substituting the outcome map into a detector of the logical circuit turns it into a detector of the physical circuit.

Another challenge for this virtual-detector approach is multiport composition.
In this setting, not every choice of global detectors admits a simple decomposition.
Consider the transversal CNOT flanked by syndrome-extraction rounds in Fig.~\ref{fig:repetition-code-example}.
Section~\ref{sec:preliminaries-virtual-detectors} displays its detector $b_j + c_j + d_j$, which is supported on three syndrome-extraction rounds.
Applying Lemma~\ref{lem:cutting-detectors} to the boundary following the CNOT gives $d_2 = s^c_j + s^d_j + c_j + d_j$ for the part after the cut.
The parity $b_j + s^c_j + s^d_j$ recorded before the cut vanishes identically after substituting $s^c_j=a_j$ and $s^d_j=a_j+b_j$, so the detector $d_1$ of the part before the cut is trivial.
The detector $s^c_j + s^d_j + c_j + d_j$ is split across two parallel gadgets and would not appear if we resolved output virtual syndromes in terms of past outcomes within individual gadgets.

Some quantum error-correcting codes have a natural overcomplete set of code generators, as in various versions of $d$-dimensional toric codes.
The unencoder--encoder picture requires choosing a linearly independent set of generators.
However, we can still define the syndrome using an overcomplete set of generators and
specify virtual detectors in terms of the resulting overcomplete virtual syndromes,
as already highlighted in Ref.~\cite{DEQ}.
The syndromes of the extra generators can be regarded as convenient intermediate variables.

\subsection{Composing the detector basis using affine maps}
\label{sec:detector-maps}

To compose detectors across gadgets, we use the detector map of Eq.~\eqref{eq:detector-and-logical-affine-maps}, which represents a list of detectors
as an affine map from a bitvector of measurement outcomes to a bitvector of
detector values, and extend it with virtual-syndrome ports.

\begin{figure*}[ht]
    \centering
    \begin{subfigure}[b]{0.185\textwidth}
        \centering
        \includegraphics[height=8em]{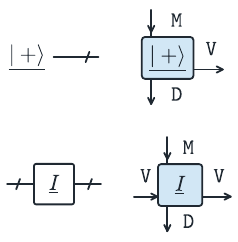}
        \caption{}
        \label{fig:extended-detector-prep-se}
    \end{subfigure}
    \hfill
    \begin{subfigure}[b]{0.175\textwidth}
        \centering
        \includegraphics[height=8em]{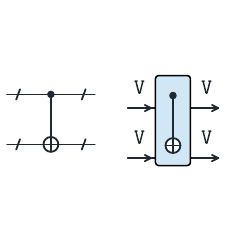}
        \caption{}
        \label{fig:extended-detector-cnot}
    \end{subfigure}
    \hfill
    \begin{subfigure}[b]{0.285\textwidth}
        \centering
        \includegraphics[height=8em]{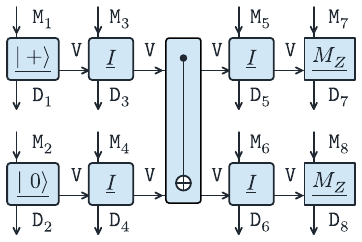}
        \caption{}
        \label{fig:extended-detector-wiring}
    \end{subfigure}
    \hfill
    \begin{subfigure}[b]{0.28\textwidth}
        \centering
        \includegraphics[height=8em]{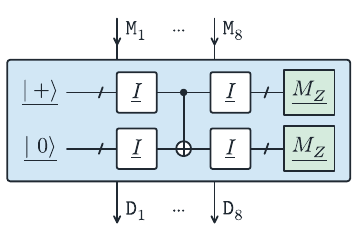}
        \caption{}
        \label{fig:extended-detector-composed}
    \end{subfigure}
    \caption{Detector maps with virtual-syndrome interfaces.
    Subfigures~\subref{fig:extended-detector-prep-se} and~\subref{fig:extended-detector-cnot} show gadgets and their detector maps $\mathrm{DM}_g$.
    Subfigure~\subref{fig:extended-detector-wiring} shows the detector-map diagram $\mathrm{DM}[\mathfrak{G}]$ of the Bell experiment in Fig.~\ref{fig:bell-pair},
    and subfigure~\subref{fig:extended-detector-composed} shows its contraction, the detector map $\mathrm{DM}_G$ of the composite gadget.}
    \label{fig:extended-composed}
\end{figure*}

A preliminary step in constructing an extended detector error model~(EDEM, Sec.~\ref{sec:gadget-dem-definition}) is to introduce the \emph{detector map}~(DM) of a gadget.

\begin{definition}
\label{def:detector-map}
The \emph{detector map} $\mathrm{DM}$ of a gadget extends the detector map of Eq.~\eqref{eq:detector-and-logical-affine-maps} with virtual-syndrome ports; it is an affine map with the following bitvector input and output ports.
The inputs are the gadget's measurement outcomes~(\texttt{M}) and the input virtual syndrome bitvectors~(\texttt{V}) for each input code block of the gadget.
The outputs are the detector values~(\texttt{D}) and the output virtual syndrome bitvectors~(\texttt{V}) for each output code block of the gadget.
Its linear part $\Delta\mathrm{DM}$ maps the corresponding measurement and virtual-syndrome flips to detector and output virtual-syndrome flips.
For a gadget without code boundaries, the virtual-syndrome ports are empty and the map reduces to the detector map of Eq.~\eqref{eq:detector-and-logical-affine-maps} from measurement outcomes to detector values.
\end{definition}

Figures~\ref{fig:extended-detector-prep-se} and~\ref{fig:extended-detector-cnot} show examples of detector maps for state preparation,
syndrome extraction, and a transversal CNOT.
For the detector map, the interface ports are the input and output virtual-syndrome ports, while measurement outcomes are free inputs and detector values are free outputs.
For a gadget circuit $\mathfrak{G}$, the detector map of its composite gadget is obtained by contracting the detector-map diagram $\mathrm{DM}[\mathfrak{G}]$~(Def.~\ref{def:map-diagram-of-gadget-circuit}),
in which the individual gadget detector maps are wired along the code blocks of $\mathfrak{G}$~(Fig.~\ref{fig:extended-detector-wiring}).
The contraction substitutes output virtual syndromes for input virtual syndromes and thereby turns every virtual detector into a detector of the composite gadget, as in Sec.~\ref{sec:preliminaries-virtual-detectors}.
We write $\mathrm{DM}_G=\interpret{\mathrm{DM}[\mathfrak{G}]}$.
After contraction, only the free ports and the virtual-syndrome ports at the boundary of $\mathfrak{G}$ remain~(Fig.~\ref{fig:extended-detector-composed}).
In the final detector map, detectors are naturally partitioned by the subcircuit in which they can be fully computed,
and measurement outcomes are partitioned by the subcircuit to which they belong.
When the partitions are indexed in temporal order, the matrix corresponding to the affine detector map is block lower triangular.

\paragraph{Success bits.}
\label{par:success-bits}
When considering fault-tolerant protocols with postselection, it is convenient to introduce
a special kind of detector called a \emph{success bit}.
A success bit is a declared deterministic parity whose violation causes the run to be discarded rather than decoded.
Success bits can be added to detector maps as free inputs and outputs to support postselection.
They are carried through the contraction as free ports rather than substituted along code blocks.

\section{Extended detector error model}
\label{sec:gadget-dems}

In this section we generalize detector error models to extended detector error models for gadgets with input and output qubits.
Section~\ref{sec:gadget-rfe} introduces boundary errors and the raw fault-effect map, which propagates faults and boundary errors through a gadget;
Sec.~\ref{sec:gadget-profiles} fixes the code presentations and gadget detector contracts that play the role of the detector basis;
and Sec.~\ref{sec:gadget-dem-definition} defines the extended detector error model.
We show that the extended detector error model of the composite gadget of a gadget circuit is the contraction of the EDEM diagram of that circuit,
in which the per-gadget extended detector error models are wired along the code blocks~(Sec.~\ref{sec:gadget-dem-composition}).
We also discuss optimizations that reduce the size of extended detector error models~(Secs.~\ref{sec:gadget-dem-definition} and~\ref{sec:gadget-dem-faults}).

\subsection{Boundary errors and raw fault-effect maps}
\label{sec:gadget-rfe}

Faults occur at different locations within a circuit, but a detector error model retains only their action on the circuit boundary.
To obtain this boundary description, pull the classical bit controlling each allowed fault $f_\alpha$~(Sec.~\ref{sec:preliminaries-faults}) back to an input port labelled by $\alpha\in\mathsf{F}$, so that the faulty circuit becomes a single stabilizer channel
\begin{equation}
    \begin{aligned}
    \Psi &:
    A\otimes\mathsf{F}
    \longrightarrow
    B\otimes\mathsf{M},
    \\
    \Phi_v(\rho)
    &=
    \Psi\bigl(\rho\otimes X_{\mathsf{F}}^{v}\proj{\boldsymbol{0}}_{\mathsf{F}}X_{\mathsf{F}}^{v}\bigr),
    \end{aligned}
\end{equation}
from the quantum input register $A$ and the fault ports to the quantum output register $B$ and the outcome register $\mathsf{M}$.
A fault configuration $v$ is then represented by the Pauli $X_{\mathsf{F}}^{v}$ on the fault ports.

Let $E_{\mathrm{in}}$ and $E_{\mathrm{out}}$ be the label spaces $\Ftwo^{2|A|}$ and $\Ftwo^{2|B|}$ of the phase-free Pauli groups on $A$ and $B$~(Sec.~\ref{sec:preliminaries-codes}), with Pauli representatives $P_A(e_{\mathrm{in}})$ and $P_B(e_{\mathrm{out}})$.
We call their elements \emph{boundary errors}; an output boundary error is interpreted relative to the ideal output on the same outcome branch.
\begin{definition}[Raw fault-effect map]
    \label{def:raw-fault-effect-map}
    A \emph{raw fault-effect map} compatible with $\Psi$ is a linear map
    \begin{equation}
        \mathrm{RFE}_{\Psi}:
        E_{\mathrm{in}}\oplus\Ftwo^{\mathsf{F}}
        \longrightarrow
        \Ftwo^{\mathsf{M}}\oplus E_{\mathrm{out}}
        \label{eq:raw-fault-effect-map}
    \end{equation}
    such that, whenever $\mathrm{RFE}_{\Psi}(e_{\mathrm{in}},v)=(\Delta m,e_{\mathrm{out}})$,
    \begin{equation}
        \Psi\circ
        \operatorname{Ad}_{P_A(e_{\mathrm{in}})\otimes X_{\mathsf{F}}^v}
        =
        \operatorname{Ad}_{P_B(e_{\mathrm{out}})\otimes X_{\mathsf{M}}^{\Delta m}}
        \circ\Psi.
        \label{eq:raw-fault-effect-compatibility}
    \end{equation}
    We call $\Delta m$ the \emph{outcome flip} and $e_{\mathrm{out}}$ the \emph{output boundary error} produced by $(e_{\mathrm{in}},v)$.
\end{definition}
Such a map exists by Lemma~\ref{lem:pauli-covariance}: for each basis element of $E_{\mathrm{in}}\oplus\Ftwo^{\mathsf{F}}$, the lemma supplies an output Pauli satisfying Eq.~\eqref{eq:raw-fault-effect-compatibility} up to a factor $Z_{\mathsf{M}}^{z}$.
This factor acts trivially on the declared classical output $\mathsf{M}$ by Eq.~\eqref{eq:classical-interface-stabilizers}.
Multiplying the chosen output Paulis extends the assignment linearly.
The \emph{raw fault-effect map of a gadget} $g$ is a raw fault-effect map $\mathrm{RFE}_g$ of its realization with the gadget's fault set $\mathcal{F}_g$, where $E_{\mathrm{in}}$ and $E_{\mathrm{out}}$ are the spaces of phase-free Pauli labels on its input and output code blocks.
For a gadget without code blocks, this map is the matrix $A_{\mathsf{M}}$ of Sec.~\ref{sec:preliminaries-dems}.

\begin{remark}[Choice of output representative]
    \label{rem:rfe-representatives}
    A raw fault-effect map is not unique.
    The output representative $P_B(e_{\mathrm{out}})\otimes X_{\mathsf{M}}^{\Delta m}$ may be changed by an output stabilizer~(Sec.~\ref{sec:stabilizer-channel-constraints}) or by a flip of an independently random outcome bit, since either acts trivially on the image of $\Psi$.
    These changes preserve the circuit semantics and the flips of all deterministic detector and observable parities.
    Dually, two input boundary errors that differ by a stabilizer of the code on that wire have the same downstream effect on codespace states, by Lemma~\ref{lem:encoder-pullback}.
    We call this freedom the \emph{choice of output representative}; every statement below holds for any fixed choice.
\end{remark}

Fault propagation through a gadget circuit is described by a composition of linear maps.
For raw fault-effect maps, the interface ports of a code are its boundary-error ports, and the fault and outcome-flip ports are free~(Def.~\ref{def:map-diagram-of-gadget-circuit}).
Applying Eq.~\eqref{eq:raw-fault-effect-compatibility} successively along the wires of a gadget circuit $\mathfrak{G}$ yields the same compatibility relation for its composite gadget $G$, with fault set $\bigsqcup_g\mathcal{F}_g$.
Thus, the contraction of the RFE diagram $\mathrm{RFE}[\mathfrak{G}]$ is a raw fault-effect map of $G$.
We define the raw fault-effect map of the composite gadget by this contraction,
\begin{equation}
    \mathrm{RFE}_G
    :=
    \interpret{\mathrm{RFE}[\mathfrak{G}]},
    \label{eq:rfe-composition}
\end{equation}
as illustrated in Fig.~\ref{fig:rfe-composisiton} of Sec.~\ref{sec:gadget-dem-composition} for the Bell-pair experiment of Fig.~\ref{fig:bell-pair}.

\subsection{Gadget detector contracts and code presentations}
\label{sec:gadget-profiles}

The standard detector error model is defined with respect to a choice of
detector basis, observables, and elementary faults.
The same three choices must be made to define the extended detector error model of a gadget, and gadget detector contracts and code presentations generalize the choice of detector basis.

\begin{definition}[Code presentation and gadget detector contract]
\label{def:code-profile}
A \emph{code presentation} of a stabilizer code is a possibly over-complete list of stabilizer generators together with the syndrome and logical-effect coordinates used at the code boundary:
for a boundary error $e$ with representative $P$, the \emph{syndrome coordinate} $s(e)$ is the commutation vector of $P$ with the listed generators, and the \emph{logical-effect coordinate} $l(e)$ is the phase-free label of the logical Pauli $\ell_P$ of $P$~(Lemma~\ref{lem:encoder-pullback}).
A \emph{gadget detector contract}, or \emph{gadget contract} for short, fixes the virtual detectors of the gadget, the affine expressions of its output virtual syndromes in terms of its measurement outcomes and input virtual syndromes, and the code presentations of its input and output codes; that is, it fixes a detector map~(Def.~\ref{def:detector-map}) together with the code presentations.
\end{definition}
Both coordinates depend only on the coset of $P$ modulo the stabilizer group, by Lemma~\ref{lem:encoder-pullback}, so they are well defined on boundary errors up to stabilizers.

For composition to work correctly, it is important that not only the input and output
stabilizer codes of connected ports match, but also that their code presentations match.

\paragraph{Contract correctness.}
\label{par:profile-correctness}
A gadget contract is \emph{correct} when every declared virtual detector and every declared output virtual syndrome relation is a deterministic parity of the realization sandwiched between the encoders and unencoders of its codes~(Fig.~\ref{fig:realization-sandwich}), that is, a check of the outcome code of the sandwiched circuit~(Prop.~\ref{prop:outcome-form}),
and when the declared detectors, output virtual syndrome relations, success bits~(Sec.~\ref{par:success-bits}) and deterministic logical outcomes together generate all deterministic parities of that circuit (rank condition).
Deterministic observables are the deterministic parities of the action outcomes, taken modulo detectors.
Both conditions are checked by Clifford simulation~\cite{Kliuchnikov2023stabilizercircuitverification}, as Sec.~\ref{sec:implementation-checks} describes.

\subsection{Extended detector error model}
\label{sec:gadget-dem-definition}

\begin{figure}
     \centering
        \includegraphics[height=7em]{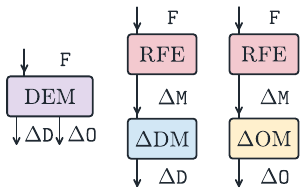}
        \caption{Bitmatrix part of the standard detector error model and its factorization through the raw fault-effect map~($\mathrm{RFE}$) and the linear parts $\Delta\mathrm{DM}$ and $\Delta\mathrm{OM}$ of the detector and outcome maps.}
        \label{fig:standard-detector-error-model}
\end{figure}

Recall that the bitmatrix part of the standard detector error model, Eq.~\eqref{eq:detector-error-model-map} in Sec.~\ref{sec:preliminaries-dems}, can be viewed as
a linear map from fault configurations to
detector and observable flips~(Fig.~\ref{fig:standard-detector-error-model})
for a gadget that has no inputs and no outputs.
It is obtained by composing the gadget's raw fault-effect map, which in this closed case is the matrix $A_{\mathsf{M}}$, with $\Delta\mathrm{DM}$ and $\Delta\mathrm{OM}$, as shown in Fig.~\ref{fig:standard-detector-error-model}.
EDEM composition propagates the full outcome map.
Deterministic observable-flip rows are subsequently extracted using the linear part of the composite logical action's detector map, as in Eq.~\eqref{eq:deterministic-outcome-map}.
An action outcome that is not deterministic, such as the random logical readout of Sec.~\ref{sec:implementation-examples}, is nevertheless a legitimate observable, whose row $\Delta\mathrm{OM}\,A_{\mathsf{M}}$ depends on the choice of output representative~(Remark~\ref{rem:rfe-representatives}).
Only the rows of deterministic parities are independent of that choice.

\begin{figure*}[ht]
    \centering
    \begin{subfigure}[b]{0.23\textwidth}
        \centering
        \includegraphics[height=13em]{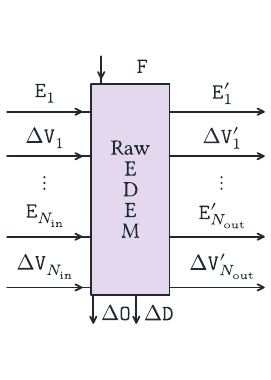}
        \caption{}
        \label{fig:extended-detector-error-model-diagram}
    \end{subfigure}
    \hfill
    \begin{subfigure}[b]{0.37\textwidth}
        \centering
        \includegraphics[height=13em]{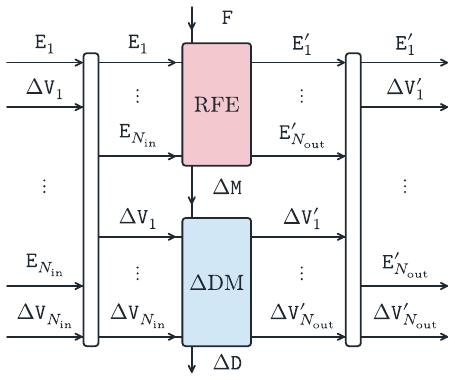}
        \caption{}
        \label{fig:extended-detector-error-model-detectors}
    \end{subfigure}
    \hfill
    \begin{subfigure}[b]{0.34\textwidth}
        \centering
        \includegraphics[height=13em]{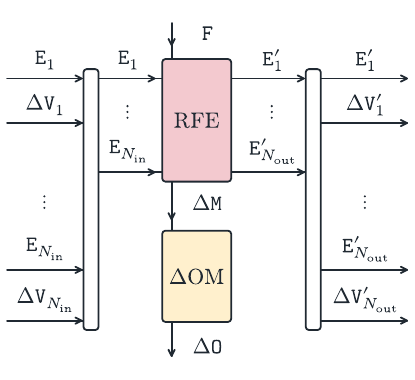}
        \caption{}
        \label{fig:extended-detector-error-model-observable}
    \end{subfigure}
    \caption{ Raw extended detector error model~(Def.~\ref{def:raw-extended-detector-error-model}).
    Subfigure~\ref{fig:extended-detector-error-model-diagram} shows a diagram with labelled inputs and outputs, subfigure~\ref{fig:extended-detector-error-model-detectors}
     relates the detector-flip output to the raw fault-effect map~($\mathrm{RFE}$, Def.~\ref{def:raw-fault-effect-map}) and the linear part $\Delta\mathrm{DM}$ of the detector map~(Def.~\ref{def:detector-map}), and
     subfigure~\ref{fig:extended-detector-error-model-observable}
     relates the observable-flip output to the raw fault-effect map and the linear part $\Delta\mathrm{OM}$ of the outcome map~(Def.~\ref{def:gadget}). Thin white blocks reorder wires. }
    \label{fig:raw-extended-detector-error-model}
\end{figure*}

The above viewpoint leads to a natural generalization to gadgets with inputs and outputs.
When composing gadgets in the diagrammatic representations such as Fig.~\ref{fig:raw-extended-detector-error-model}, we add horizontal input and output wires to pass virtual syndromes and boundary errors.
The virtual syndrome bits carry the signs of the stabilizers as expected by the providing gadget on the basis of its available classical information.
Boundary-error bits represent the residual Pauli effect of earlier faults on the output boundary, in the chosen output representative~(Remark~\ref{rem:rfe-representatives}).
\begin{definition}
\label{def:raw-extended-detector-error-model}
A raw extended detector error model for a gadget (with associated gadget contract and elementary faults) is a linear map with the following bitvector input and output ports~(Fig.~\ref{fig:raw-extended-detector-error-model}).
The inputs are a fault configuration~(\texttt{F}) and, for each gadget input code block, input virtual syndrome flips~($\Delta$\texttt{V}) and an input boundary error~(\texttt{E}).
The outputs are detector flips~($\Delta$\texttt{D}), declared observable flips~($\Delta$\texttt{O}), and, for each gadget output code block, output virtual syndrome flips~($\Delta$\texttt{V}) and an output boundary error~(\texttt{E}).
\end{definition}
Two structural properties are visible in Fig.~\ref{fig:raw-extended-detector-error-model}: the output boundary error, and hence its coordinates introduced below, depends only on the fault configuration and the input boundary error, because the raw fault-effect map has no virtual-syndrome input;
and the input virtual syndrome flips enter only the detector flips and the output virtual syndrome flips, through $\Delta\mathrm{DM}$.

The dimensions of the input and output wires of the raw extended detector error model can be further reduced by replacing boundary errors with coordinates selected by a code presentation.
\begin{definition}[Code-presentation map]
\label{def:code-profile-map}
The \emph{code-presentation map} $\mathrm{CPM}:\texttt{E}\to(\texttt{S},\texttt{L})$ of a code presentation is the linear map sending a boundary error $e$ to its syndrome and logical-effect coordinates $(s(e),l(e))$ of Definition~\ref{def:code-profile}.
A \emph{section} of $\mathrm{CPM}$ is a linear map $\sigma:\operatorname{im}(\mathrm{CPM})\to\texttt{E}$ with $\mathrm{CPM}\circ\sigma=\operatorname{id}$, which maps attainable coordinate pairs back to representative boundary errors.
\end{definition}
The code-presentation map is obtained by pulling physical $X$- and $Z$-Pauli errors back through the encoder~(Lemma~\ref{lem:encoder-pullback}) and mapping the resulting syndrome to the possibly over-complete generator list of the code presentation.
A convenient section sends the syndrome of each generator in an independent subset to a destabilizer of that generator, a Pauli anticommuting with it and commuting with the other generators, and each logical coordinate to the fixed logical representative of the code.
The \emph{extended detector error model} is the raw extended detector error model with its boundary-error ports compressed to boundary-error coordinates,
\begin{equation}
    \mathrm{EDEM}
    :=
    (\operatorname{id}\oplus\mathrm{CPM}_{\mathrm{out}})
    \circ
    \mathrm{EDEM}^{\mathrm{raw}}
    \circ
    (\operatorname{id}\oplus\sigma_{\mathrm{in}}),
    \label{eq:edem-compression}
\end{equation}
where $\mathrm{CPM}_{\mathrm{out}}$ and $\sigma_{\mathrm{in}}$ act on the boundary-error ports of the output and input codes and $\operatorname{id}$ on the remaining ports.
Inserting $\sigma\circ\mathrm{CPM}$ on a boundary-error wire leaves the contraction of a diagram unchanged up to the choice of output representative, because $e$ and $\sigma(\mathrm{CPM}(e))$ have the same coordinates and therefore differ by a stabilizer of the code on that wire~(Remark~\ref{rem:rfe-representatives}).
The raw form is simpler and more convenient for illustrating the key ideas; the implementation of Sec.~\ref{sec:examples-implementation} uses the compressed form.

For a gadget call $g$, write $b=(s,l)$ for the boundary coordinates on a wire and $\Delta v$ for the virtual-syndrome flips, with subscripts $\mathrm{in}$ and $\mathrm{out}$ collecting all input and output wires of $g$.
In block form, its EDEM is
\begin{equation}
    \begin{aligned}
        \Delta d_g &= D^g_F f_g + D^g_B b_{\mathrm{in}} + D^g_V \Delta v_{\mathrm{in}}, \\
        \Delta o_g &= O^g_F f_g + O^g_B b_{\mathrm{in}}, \\
        \Delta v_{\mathrm{out}} &= V^g_F f_g + V^g_B b_{\mathrm{in}} + V^g_V \Delta v_{\mathrm{in}}, \\
        b_{\mathrm{out}} &= B^g_F f_g + B^g_B b_{\mathrm{in}}.
    \end{aligned}
    \label{eq:edem-blocks}
\end{equation}
The virtual-syndrome interface $\Delta v$ records fault-induced changes in the stabilizer signs inferred from classical data, while the error interface $b=(s(e),l(e))$ records the syndrome and logical effect of the residual Pauli error $e$.
In particular, $\Delta v$ need not equal $s(e)$: a flipped syndrome-measurement bit can change $\Delta v$ without changing $e$, and a Pauli fault after the last check can change $e$ without changing $\Delta v$.
The absent blocks express the structural properties above: observable flips and outgoing boundary coordinates do not depend on incoming virtual-syndrome flips.

Tables~\ref{tab:gadget-dem-components} and~\ref{tab:gadget-dem-ports} summarize the linear maps and named port types used in this section.
As defined in Eq.~\eqref{eq:affine-map-linearization}, a label $\Delta\texttt{X}$ denotes the fault-induced change in the corresponding classical quantity $\texttt{X}$.
\begin{table*}[t]
    \centering
    \caption{Named classical bitvector port types, grouped by functionality.}
    \label{tab:gadget-dem-ports}
    \begingroup
    \footnotesize
    \setlength{\tabcolsep}{4pt}
    \begin{tabular}{@{}cp{0.18\textwidth}p{0.64\textwidth}@{}}
        \textcolor{quantumviolet}{\textbf{Abbrev.}}
        & \textcolor{quantumviolet}{\textbf{Full name}}
        & \textcolor{quantumviolet}{\textbf{Description}} \\
        \toprule
        \strut\textbf{\texttt{F}}
        & Fault configuration
        & A bitvector selecting a combination of elementary faults; the compressed variant uses a smaller coordinate set with the same attainable effects. \\
        \strut\textbf{\texttt{E}}
        & Boundary error
        & A phase-free Pauli label carrying the residual Pauli effect of earlier faults across a gadget boundary, up to the choice of output representative~(Remark~\ref{rem:rfe-representatives}). \\
        \strut\textbf{$(\texttt{S},\texttt{L})$}
        & Boundary-error\newline coordinates
        & The syndrome and logical-effect components selected by a code presentation; $(s,l)$ denotes a value of these ports. \\
        \strut\textbf{\texttt{V}, $\Delta$\texttt{V}}
        & Virtual\newline syndrome and flip
        & Stabilizer signs inferred from classical data, and their fault-induced changes. \\
        \strut\textbf{\texttt{M}, $\Delta$\texttt{M}}
        & Measurement\newline outcome and flip
        & A physical measurement outcome and its flip induced by faults and boundary errors. \\
        \strut\textbf{\texttt{D}, $\Delta$\texttt{D}}
        & Detector\newline value and flip
        & A declared detector parity and its fault-induced violation, obtained using $\mathrm{DM}$ and $\Delta\mathrm{DM}$, respectively. \\
        \strut\textbf{\texttt{O}, $\Delta$\texttt{O}}
        & Observable\newline value and flip
        & A declared logical-observable value and its fault-induced flip, obtained using $\mathrm{OM}$ and $\Delta\mathrm{OM}$, respectively. \\
        \bottomrule
    \end{tabular}
    \endgroup
\end{table*}

\begin{table*}[t]
    \centering
    \caption{Linear maps used to construct detector error models and extended detector error models.
    For an affine map $f$, $\Delta f$ denotes its linear part, as defined in Eq.~\eqref{eq:affine-map-linearization}.
    The final column lists input and output ports explicitly; ports subscripted $\mathrm{in}$ and $\mathrm{out}$ are interface ports, attached to input and output codes and wired along the code blocks of a gadget circuit~(Def.~\ref{def:map-diagram-of-gadget-circuit}), and the remaining ports are free.
    Badge colors match the corresponding diagram components, with green reserved for compression maps.}
    \label{tab:gadget-dem-components}
    \begingroup
    \footnotesize
    \setlength{\tabcolsep}{4pt}
    \begin{tabular}{@{}cp{0.18\textwidth}p{0.40\textwidth}p{0.23\textwidth}@{}}
        \textcolor{quantumviolet}{\textbf{Abbrev.}}
        & \textcolor{quantumviolet}{\textbf{Full name}}
        & \textcolor{quantumviolet}{\textbf{Description}}
        & \textcolor{quantumviolet}{\textbf{Ports}} \\
        \toprule
        \colorbox[HTML]{E8DFF0}{\strut\textbf{DEM}}
        & Detector error model
        & Bitmatrix part of a closed-experiment DEM~(Sec.~\ref{sec:preliminaries-dems}), mapping fault configurations to detector and observable flips.
        & \textbf{in:} \texttt{F}\newline
        \textbf{out:} $\Delta$\texttt{D}, $\Delta$\texttt{O} \\
        \colorbox[HTML]{E8DFF0}{\strut\textbf{Raw EDEM}}
        & Raw extended DEM
        & Open-boundary generalization of a DEM that retains boundary errors and virtual-syndrome flips.
        & \textbf{in:} \texttt{F}, $\Delta$\texttt{V}$_{\mathrm{in}}$, \texttt{E}$_{\mathrm{in}}$  \newline
        \textbf{out:} $\Delta$\texttt{D}, $\Delta$\texttt{O}, $\Delta$\texttt{V}$_{\mathrm{out}}$, \texttt{E}$_{\mathrm{out}}$ \\
        \colorbox[HTML]{E8DFF0}{\strut\textbf{EDEM}}
        & Extended DEM
        & Raw EDEM whose boundary-error ports are compressed to the boundary-error coordinates selected by the code presentations, Eq.~\eqref{eq:edem-compression}.
        & \textbf{in:} \texttt{F}, $\Delta$\texttt{V}$_{\mathrm{in}}$, $(\texttt{S},\texttt{L})_{\mathrm{in}}$\newline
        \textbf{out:}\,$\Delta\texttt{D},\Delta\texttt{O},\Delta\texttt{V}_{\mathrm{out}},(\texttt{S},\texttt{L})_{\mathrm{out}}$ \\
        \colorbox[HTML]{F4EBCF}{\strut$\boldsymbol{\Delta\mathrm{OM}}$}
        & Outcome map\newline (linear part)
        & Maps physical measurement-outcome flips to observable flips.
        & \textbf{in:} $\Delta$\texttt{M}\newline
        \textbf{out:} $\Delta$\texttt{O} \\
        \colorbox[HTML]{DDEAF4}{\strut$\boldsymbol{\Delta\mathrm{DM}}$}
        & Detector map\newline (linear part)
        & Maps measurement-outcome and input virtual-syndrome flips to detector and output virtual-syndrome flips.
        & \textbf{in:} $\Delta$\texttt{M}, $\Delta$\texttt{V}$_{\mathrm{in}}$\newline
        \textbf{out:} $\Delta$\texttt{D}, $\Delta$\texttt{V}$_{\mathrm{out}}$ \\
        \colorbox[HTML]{F4D9D9}{\strut\textbf{RFE}}
        & Raw fault-effect map
        & Propagates an input boundary error and a fault configuration to outcome flips and an output boundary error~(Def.~\ref{def:raw-fault-effect-map}).
        & \textbf{in:} \texttt{F}, \texttt{E}$_{\mathrm{in}}$\newline
        \textbf{out:} $\Delta$\texttt{M}, \texttt{E}$_{\mathrm{out}}$ \\
        \colorbox[HTML]{DCEFE1}{\strut\textbf{CPM}}
        & Code-presentation map
        & Converts a boundary error into the boundary-error coordinates selected by a code presentation.
        & \textbf{in:} \texttt{E}\newline
        \textbf{out:} $(\texttt{S},\texttt{L})$ \\
        \colorbox[HTML]{DCEFE1}{\strut\textbf{FCM}}
        & Fault-compression map
        & Expands compressed fault coordinates without changing the image of the raw fault-effect map.
        & \textbf{in:} compressed \texttt{F}\newline
        \textbf{out:} \texttt{F} \\
        \bottomrule
    \end{tabular}
    \endgroup
\end{table*}

\subsection{Composition}
\label{sec:gadget-dem-composition}

\begin{figure*}[ht]
    \centering
    \begin{subfigure}[b]{0.185\textwidth}
        \centering
        \includegraphics[height=8em]{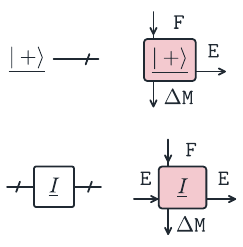}
        \caption{}
        \label{fig:rfe-prep-se}
    \end{subfigure}
    \hfill
    \begin{subfigure}[b]{0.17\textwidth}
        \centering
        \includegraphics[height=8em]{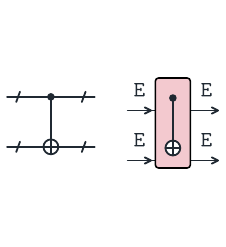}
        \caption{}
        \label{fig:rfe-cnot}
    \end{subfigure}
    \hfill
    \begin{subfigure}[b]{0.31\textwidth}
        \centering
        \includegraphics[height=8em]{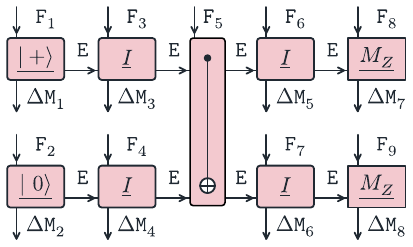}
        \caption{}
        \label{fig:rfe-wiring}
    \end{subfigure}
    \hfill
    \begin{subfigure}[b]{0.27\textwidth}
        \centering
        \includegraphics[height=8em]{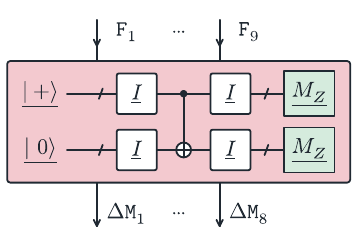}
        \caption{}
        \label{fig:rfe-composed}
    \end{subfigure}
    \caption{Raw fault-effect maps with boundary-error interfaces, the counterpart of Fig.~\ref{fig:extended-composed} for fault propagation.
    Subfigures~\subref{fig:rfe-prep-se} and~\subref{fig:rfe-cnot} show logical operations and their raw fault-effect maps $\mathrm{RFE}_g$~(Def.~\ref{def:raw-fault-effect-map}).
    Subfigure~\subref{fig:rfe-wiring} shows the RFE diagram $\mathrm{RFE}[\mathfrak{G}]$ of the Bell experiment in Fig.~\ref{fig:bell-pair}, in which the per-gadget maps are wired along the code blocks by boundary-error wires,
    and subfigure~\subref{fig:rfe-composed} shows its contraction, the raw fault-effect map $\mathrm{RFE}_G$ of the composite gadget, Eq.~\eqref{eq:rfe-composition}.}
    \label{fig:rfe-composisiton}
\end{figure*}

Let $\mathfrak{G}$ be a gadget circuit~(Def.~\ref{def:gadget-circuit}) in which connected ports carry matching codes and code presentations, and let $G$ be its composite gadget.
The gadget circuit induces a gadget contract, a fault set, and a raw fault-effect map for $G$:
the elementary faults are the disjoint union $\bigsqcup_{g}\mathcal{F}_g$ over the gadget calls,
the detector map is the contraction $\mathrm{DM}_G=\interpret{\mathrm{DM}[\mathfrak{G}]}$~(Sec.~\ref{sec:cutting-detectors}),
the raw fault-effect map is the contraction $\mathrm{RFE}_G=\interpret{\mathrm{RFE}[\mathfrak{G}]}$ of Eq.~\eqref{eq:rfe-composition},
and the code presentations at the boundary of $\mathfrak{G}$ are inherited from the corresponding gadget calls.
We call this the \emph{induced gadget contract} of $\mathfrak{G}$; all statements below refer to it.
For the extended detector error model of a gadget, the interface ports for a code are its virtual-syndrome-flip port together with its boundary-error port, or its boundary-error-coordinate port in the compressed form; the fault, detector-flip, and observable-flip ports are free.

\begin{theorem}[Composition of extended detector error models]
\label{thm:edem-composition}
The raw extended detector error model of the composite gadget is the contraction of the raw-EDEM diagram of the gadget circuit,
\begin{equation}
    \mathrm{EDEM}_G=\interpret{\mathrm{EDEM}[\mathfrak{G}]},
    \label{eq:edem-composition}
\end{equation}
in which each gadget call $g$ is replaced by $\mathrm{EDEM}_g$ and each code-block wire by a boundary-error wire and a virtual-syndrome-flip wire.
For compressed extended detector error models, with boundary-error-coordinate wires in place of boundary-error wires, Eq.~\eqref{eq:edem-composition} holds up to the choice of output representative~(Remark~\ref{rem:rfe-representatives}).
In particular, the detector and deterministic observable matrices of a closed experiment agree exactly; interface coordinates and random-observable rows may depend on the representative.
\end{theorem}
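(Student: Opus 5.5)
The plan is to write the raw EDEM of every gadget, including $G$, as the contraction of a fixed small diagram: $\mathrm{EDEM}^{\mathrm{raw}}_g$ is obtained from $\mathrm{RFE}_g$ and the linear parts $\Delta\mathrm{DM}_g$, $\Delta\mathrm{OM}_g$ (Fig.~\ref{fig:raw-extended-detector-error-model}). Concretely, $\mathrm{RFE}_g$ maps $(\texttt{E}_{\mathrm{in}},\texttt{F})$ to $(\Delta\texttt{M},\texttt{E}_{\mathrm{out}})$. The $\Delta\texttt{M}$ wire is copied and fed both to $\Delta\mathrm{OM}_g$ and, together with $\Delta\texttt{V}_{\mathrm{in}}$, to $\Delta\mathrm{DM}_g$. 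For the composite gadget, the induced contract fixes $\mathrm{RFE}_G=\interpret{\mathrm{RFE}[\mathfrak{G}]}$, $\mathrm{DM}_G=\interpret{\mathrm{DM}[\mathfrak{G}]}$ and $\mathrm{OM}_G=\bigoplus_g\mathrm{OM}_g$. So $\mathrm{EDEM}^{\mathrm{raw}}_G$ is the same small diagram with these three contracted diagrams plugged in. The claim $\mathrm{EDEM}_G=\interpret{\mathrm{EDEM}[\mathfrak{G}]}$ then becomes a statement purely about rearranging an acyclic network of linear maps over $\Ftwo$.

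The key step is an interchange of contractions. Nesting $\mathrm{RFE}[\mathfrak{G}]$, $\mathrm{DM}[\mathfrak{G}]$ and $\mathrm{OM}[\mathfrak{G}]$ inside the small diagram yields one large acyclic diagram. Its boxes are the per-gadget $\mathrm{RFE}_g$, $\Delta\mathrm{DM}_g$ and $\Delta\mathrm{OM}_g$, together with copy nodes on the $\Delta\texttt{M}$ wires. First, $\Delta$ commutes with contraction, since $\Delta(g\circ f)=\Delta g\circ\Delta f$ and direct sums behave likewise. Hence $\Delta\mathrm{DM}_G$ is the contraction of the $\Delta\mathrm{DM}_g$ wired along virtual-syndrome-flip wires. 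Second, the free outcome-flip outputs $\Delta\texttt{M}_g$ of each $\mathrm{RFE}_g$ feed exactly the $\Delta\texttt{M}_g$ inputs of $\Delta\mathrm{DM}_g$ and $\Delta\mathrm{OM}_g$ of the same call. Contraction of acyclic diagrams is associative, because the value on every wire is determined by any topological evaluation order. We can therefore regroup the boxes by gadget call: each group $\{\mathrm{RFE}_g,\Delta\mathrm{DM}_g,\Delta\mathrm{OM}_g\}$ is exactly $\mathrm{EDEM}^{\mathrm{raw}}_g$. The inter-group wires are precisely one boundary-error wire and one virtual-syndrome-flip wire per code-block wire of $\mathfrak{G}$. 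What remains is to check that the regrouped network is still acyclic, i.e.\ that each group is convex. This follows because every edge between groups follows an edge of $\mathfrak{G}$, and $\mathfrak{G}$ is acyclic. This proves the raw statement.

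For the compressed statement, compression of each $\mathrm{EDEM}_g$ inserts $\sigma\circ\mathrm{CPM}$ on every internal boundary-error wire. Both $\mathrm{CPM}_{\mathrm{out}}$ and $\sigma_{\mathrm{in}}$ on the boundary of $\mathfrak{G}$ also appear in the definition of the compressed $\mathrm{EDEM}_G$, so only the internal insertions need discussion. Matching code presentations on connected ports are needed here, so that the upstream $\mathrm{CPM}$ and downstream $\sigma$ refer to the same coordinates. Each insertion replaces the error $e$ on that wire by $\sigma(\mathrm{CPM}(e))$, which differs from $e$ by a code stabilizer. By Remark~\ref{rem:rfe-representatives} and Lemma~\ref{lem:encoder-pullback}, the modified downstream $\mathrm{RFE}$ chain is still a raw fault-effect map compatible with $\Psi_G$. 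Only the choice of output representative changes, and so the identity holds up to that choice.

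The main obstacle is making this last step rigorous. The replacement acts on an input boundary error, whose representative freedom is dual to the output freedom. I would show it is harmless by noting that the downstream gadget receives codespace-sandwiched states at the wire, since the upstream realization output followed by an unencoder has vanishing syndrome up to recorded redundancy. There a stabilizer factor acts trivially, so the composed compatibility relation Eq.~\eqref{eq:raw-fault-effect-compatibility} still holds with a modified output Pauli. Finally, for a closed experiment I would invoke the invariance noted in Sec.~\ref{sec:preliminaries-dems}: rows of deterministic parities (detectors and deterministic observables from Eq.~\eqref{eq:deterministic-outcome-map}) are independent of the representative. The detector and deterministic observable matrices therefore agree exactly, while interface coordinates and random-observable rows may differ.
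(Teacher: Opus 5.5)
Your proposal is correct and follows essentially the same route as the paper. You write each raw EDEM in terms of $\mathrm{RFE}$, $\Delta\mathrm{DM}$ and $\Delta\mathrm{OM}$, substitute the contracted diagrams for the composite gadget, and regroup the three vertices of each gadget call into $\mathrm{EDEM}_g$. You then handle compression by noting that inserting $\sigma\circ\mathrm{CPM}$ on an internal wire changes the boundary error only by a code stabilizer, which acts trivially at the boundary, so only the representative changes and the deterministic rows of a closed experiment are unaffected. Your explicit checks are details the paper leaves implicit: that each regrouped call is convex, so the network stays acyclic, and that matching presentations make $\mathrm{CPM}$ and $\sigma$ refer to the same coordinates.
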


The proof assembles three facts about the diagram contractions.
First, $\mathrm{RFE}_G=\interpret{\mathrm{RFE}[\mathfrak{G}]}$ by Eq.~\eqref{eq:rfe-composition}; Fig.~\ref{fig:rfe-composisiton} shows the RFE diagram and its contraction for the Bell experiment.
Second, $\mathrm{DM}_G=\interpret{\mathrm{DM}[\mathfrak{G}]}$ holds by construction, and $\Delta\mathrm{DM}_G=\interpret{\Delta\mathrm{DM}[\mathfrak{G}]}$ follows because taking linear parts commutes with composition, Eq.~\eqref{eq:affine-map-linearization}.
Third, $\Delta\mathrm{OM}_G=\bigoplus_g\Delta\mathrm{OM}_g$, since the outcome map of the composite gadget is the direct sum of the outcome maps of its gadget calls, Eq.~\eqref{eq:composite-outcome-map}.
Now draw $\mathrm{EDEM}_G$ as in Fig.~\ref{fig:raw-extended-detector-error-model} in terms of $\mathrm{RFE}_G$, $\Delta\mathrm{DM}_G$, and $\Delta\mathrm{OM}_G$,
and substitute the diagrams $\mathrm{RFE}[\mathfrak{G}]$, $\Delta\mathrm{DM}[\mathfrak{G}]$, and $\bigoplus_g\Delta\mathrm{OM}_g$ for them.
The result has three vertices per gadget call, $\mathrm{RFE}_g$, $\Delta\mathrm{DM}_g$, and $\Delta\mathrm{OM}_g$,
and the only wires between different gadget calls are the boundary-error wires of $\mathrm{RFE}[\mathfrak{G}]$ and the virtual-syndrome-flip wires of $\Delta\mathrm{DM}[\mathfrak{G}]$, both inherited from $\mathfrak{G}$.
Grouping the three vertices of each gadget call into a single vertex reproduces $\mathrm{EDEM}_g$ as drawn in Fig.~\ref{fig:raw-extended-detector-error-model}.
The wiring that remains between the grouped vertices is that of $\mathfrak{G}$, so the grouped diagram is $\mathrm{EDEM}[\mathfrak{G}]$.

For extended detector error models one further fact is needed:
inserting a code-presentation map followed by a section~(Def.~\ref{def:code-profile-map}) on a boundary-error wire of $\mathrm{RFE}[\mathfrak{G}]$ leaves the contraction unchanged up to the choice of output representative,
because the two boundary errors differ by a stabilizer of the code on that wire, which acts trivially on the codespace state at the boundary~(Remark~\ref{rem:rfe-representatives}).
Applying Eq.~\eqref{eq:edem-compression} to every gadget call therefore compresses every boundary-error wire of $\mathrm{EDEM}[\mathfrak{G}]$ to a boundary-error-coordinate wire while preserving the represented fault effects up to this freedom.

For subsequent formulas involving compressed EDEMs, we choose the composite representative induced by contracting the fixed per-gadget compressed maps.
Regrouping these fixed maps, as in the window construction below, then gives literal matrix equality.

The theorem applies to any part of a gadget circuit that can itself be regarded as a single gadget call.
\begin{definition}[Window]
\label{def:window}
A set $W$ of gadget calls of $\mathfrak{G}$ is \emph{convex} if every directed path between two calls in $W$ stays in $W$.
The \emph{restriction} $\mathfrak{G}|_W$ is the sub-network with boxes $W$, the wires with both ends in $W$, and a boundary port for every wire entering or leaving $W$~(Sec.~\ref{sec:preliminaries-circuits}).
A \emph{window} is a convex set of gadget calls, identified with its restriction, and $\mathfrak{G}/W$ denotes the gadget circuit in which $W$ is replaced by a single call whose gadget is the composite gadget $G_W$ of $\mathfrak{G}|_W$.
\end{definition}
\begin{corollary}[Window EDEM]
\label{cor:window-edem}
For a window $W$ of $\mathfrak{G}$, $\mathrm{EDEM}_{G_W}=\interpret{\mathrm{EDEM}[\mathfrak{G}|_W]}$, and replacing the boxes of $W$ in $\mathrm{EDEM}[\mathfrak{G}]$ by the single box $\mathrm{EDEM}_{G_W}$ leaves the contraction unchanged.
\end{corollary}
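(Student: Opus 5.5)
The plan is to derive both claims from Theorem~\ref{thm:edem-composition}, applied twice: once to the restriction $\mathfrak{G}|_W$ and once to the quotient $\mathfrak{G}/W$. The second claim then reduces to associativity of diagram contraction.

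First, I will check that $\mathfrak{G}|_W$ is a legitimate gadget circuit (Def.~\ref{def:gadget-circuit}). Its boxes are the calls in $W$. Its internal wires join matching codes with matching presentations, because they are wires of $\mathfrak{G}$. Every wire entering or leaving $W$ becomes a boundary port carrying the code of the corresponding call port. Acyclicity is inherited from $\mathfrak{G}$.

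Applying Theorem~\ref{thm:edem-composition} to $\mathfrak{G}|_W$ with its induced gadget contract gives $\mathrm{EDEM}_{G_W}=\interpret{\mathrm{EDEM}[\mathfrak{G}|_W]}$. In the raw form this holds exactly. In the compressed form, I fix the composite representative to be the one obtained by contracting the per-gadget compressed maps, as stipulated after the theorem, so that the equality is literal.

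Second, I will show that $\mathfrak{G}/W$ is an acyclic gadget circuit. This is where convexity is used, and I expect it to be the only real obstacle. Suppose the quotient had a cycle through the new box $G_W$. Such a cycle would be a directed path in $\mathfrak{G}$ that leaves $W$ and re-enters it. That contradicts convexity. A cycle avoiding $G_W$ would already be a cycle in $\mathfrak{G}$.

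The boundary codes of $G_W$ are inherited from the calls in $W$, so the code-block wires of $\mathfrak{G}/W$ still connect matching codes and presentations. The induced contract of $\mathfrak{G}/W$ coincides with that of $\mathfrak{G}$. The fault sets are disjoint unions, and the detector maps and RFE maps are contractions that can be taken in stages.

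Third, I will use associativity of contracting map diagrams over $\Ftwo$. Composing linear maps along an acyclic network does not depend on how boxes are grouped, provided the grouping is convex so that the grouped diagram is again acyclic. Hence
\[
\interpret{\mathrm{EDEM}[\mathfrak{G}]}=\interpret{\mathrm{EDEM}[\mathfrak{G}/W]},
\]
where the box for $W$ is $\interpret{\mathrm{EDEM}[\mathfrak{G}|_W]}=\mathrm{EDEM}_{G_W}$. This is exactly the statement that replacing the boxes of $W$ by the single box $\mathrm{EDEM}_{G_W}$ leaves the contraction unchanged.

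For the compressed form, the same regrouping argument gives literal matrix equality, because we fixed the composite representative by contraction. For the raw form, both sides also equal $\mathrm{EDEM}_G$ by Theorem~\ref{thm:edem-composition}, which serves as a consistency check.
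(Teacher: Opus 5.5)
Your proposal is correct and follows essentially the same route as the paper. The paper applies Theorem~\ref{thm:edem-composition} to the restriction $\mathfrak{G}|_W$ and uses convexity to keep $\mathfrak{G}/W$ acyclic. It then regroups the contraction, and for the compressed form it obtains literal equality by fixing the composite representative as the contraction of the per-gadget compressed maps, exactly as you do.
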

We call $\mathrm{EDEM}_{G_W}$ the \emph{window EDEM} of $W$.
Section~\ref{sec:sliding-window-decoding} uses it as the decoding problem of a window of an adaptive computation.

\begin{corollary}[Causality]
\label{cor:causality}
Fix a topological order of the gadget calls of $\mathfrak{G}$ and assign every detector of the induced gadget contract to the gadget call in which it is fully computable, that is, to the box of $\mathrm{EDEM}[\mathfrak{G}]$ that outputs it.
Then the fault-to-detector block of $\mathrm{EDEM}_G$, and hence the channel-check matrix $A_{\mathsf{D}}$ of a closed experiment, is block lower triangular: a fault in a gadget call can flip only detectors assigned to that call or to gadget calls in its causal future.
\end{corollary}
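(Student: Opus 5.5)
The argument is a path argument on the contracted diagram $\mathrm{EDEM}[\mathfrak{G}]$, using Theorem~\ref{thm:edem-composition} to identify its contraction with $\mathrm{EDEM}_G$. The block of the contraction from the fault port $\texttt{F}_g$ of a call $g$ to the detector-flip port $\Delta\texttt{D}_h$ of a call $h$ is a sum, over directed paths in the diagram from $\texttt{F}_g$ to $\Delta\texttt{D}_h$, of products of the per-gadget blocks of Eq.~\eqref{eq:edem-blocks} along the path. If no path exists, the block is zero. This path-sum description is the one stated in Sec.~\ref{sec:preliminaries-diagrams}. It holds because the diagram is acyclic, so contraction in any topological order is well defined and expands into finitely many path products.

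The key steps are as follows.
\begin{enumerate}
    \item Observe that inside each box $\mathrm{EDEM}_g$, the fault port $\texttt{F}_g$ reaches only the free outputs $\Delta\texttt{D}_g$ and $\Delta\texttt{O}_g$ and the interface outputs $\Delta\texttt{V}_{\mathrm{out}}$ and $(\texttt{S},\texttt{L})_{\mathrm{out}}$ of $g$.
    \item Observe that every interface wire of $\mathrm{EDEM}[\mathfrak{G}]$, whether a virtual-syndrome-flip wire or a boundary-error(-coordinate) wire, is inherited from a code-block wire of $\mathfrak{G}$ (Def.~\ref{def:map-diagram-of-gadget-circuit}). Hence it runs from an output of a call $g'$ to an input of a call $h'$ only when $\mathfrak{G}$ has a wire $g'\to h'$.
    \item Conclude that any directed path from $\texttt{F}_g$ to $\Delta\texttt{D}_h$ projects to a directed path $g\to\cdots\to h$ in $\mathfrak{G}$. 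Therefore $h=g$ or $h$ lies in the causal future of $g$, and in particular $h$ comes after $g$ in the fixed topological order.
    \item Index block rows by the calls that output the detectors and block columns by the calls that own the faults, both in topological order. The block in row $h$ and column $g$ then vanishes unless $h\ge g$, so the fault-to-detector block is block lower triangular.
\end{enumerate}

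For a closed experiment, $A_{\mathsf{D}}$ is this fault-to-detector block, by the closed case of Theorem~\ref{thm:edem-composition}, where the detector rows agree exactly and no representative ambiguity arises. For compressed EDEMs, I would use the composite representative fixed just after the theorem, so that the statement is a literal matrix identity.

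The step needing the most care is justifying that detectors are attributed correctly. Each declared detector of the induced contract is an output of exactly one box of $\mathrm{DM}[\mathfrak{G}]$, and hence of exactly one box of $\mathrm{EDEM}[\mathfrak{G}]$. This is how the statement assigns detectors to calls, so the assignment is well defined. I expect no real obstacle beyond stating the path-sum expansion carefully. I would also note that a zero product along one path cannot be rescued by another path that does not exist.
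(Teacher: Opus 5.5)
Your proposal is correct and follows essentially the same route as the paper. Both use Theorem~\ref{thm:edem-composition} to identify $\mathrm{EDEM}_G$ with the contraction of $\mathrm{EDEM}[\mathfrak{G}]$ and note that the only wires between different calls are the boundary-error and virtual-syndrome-flip wires inherited from the acyclic $\mathfrak{G}$. Both then conclude from the fact that a contracted block between ports joined by no path is zero (Sec.~\ref{sec:preliminaries-diagrams}); your projection of diagram paths onto directed paths of $\mathfrak{G}$ and your use of the fixed composite representative for compressed EDEMs just spell out details the paper leaves implicit.
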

\begin{proof}
By Theorem~\ref{thm:edem-composition}, the only wires between different gadget calls of $\mathrm{EDEM}[\mathfrak{G}]$ are the boundary-error and virtual-syndrome-flip wires inherited from the acyclic gadget circuit $\mathfrak{G}$, so a path from the fault port of a gadget call to a detector port of another exists only if the second call lies in the causal future of the first.
The block of a contracted diagram between two ports that no path connects is zero~(Sec.~\ref{sec:preliminaries-diagrams}), which is the claim.
\end{proof}

\subsection{Elementary-fault compression}
\label{sec:gadget-dem-faults}

We can further reduce the dimension of the fault-configuration wires by retaining a minimal set of coordinates with the same attainable effects.
\begin{definition}[Fault-compression map]
\label{def:fault-compression-map}
For the restriction $R:\Ftwo^{\mathsf{F}}\to Y$ of a raw fault-effect map to its fault ports, a \emph{fault-compression map} $\mathrm{FCM}:\Ftwo^{\mathsf{F}_{\mathrm{c}}}\to\Ftwo^{\mathsf{F}}$ is a linear map satisfying
\begin{equation}
    \operatorname{im}(R\circ\mathrm{FCM})=\operatorname{im}(R).
    \label{eq:fault-compression-map}
\end{equation}
\end{definition}
When the compressed coordinates select elementary faults, a minimal choice selects effect columns forming a basis of $\operatorname{im}(R)$.

This compression preserves attainable effects, but the compressed coordinates need not be independent stochastic faults.
To retain the original noise model, record each elementary fault $\alpha$ by a compressed coordinate vector $c_\alpha$ satisfying $R\,\mathrm{FCM}\,c_\alpha=R e_\alpha$, where $e_\alpha$ is its unit fault vector, together with its original probability $p_\alpha$.
The compressed maps propagate $c_\alpha$, while the probabilities remain attached to the original independent mechanisms.
If the compressed coordinates themselves are used as random variables, their induced joint distribution must be retained; a single original mechanism can flip several coordinates and thereby correlate them.

Despite this compression, elementary faults in
different gadgets can still have the same fault effect in the composite gadget.
For this reason, the post-processing step that merges elementary faults with the same
effect and updates the fault probabilities using Eq.~\eqref{eq:equivalent-fault-probability} is still necessary.

\section{Symbolic construction of extended detector error models}
\label{sec:symbolic-construction}

Theorem~\ref{thm:edem-composition} expresses the extended detector error model of a gadget circuit as the contraction of a map diagram, which gives us flexibility in how to contract it, that is, how to compute the linear map $\interpret{\mathfrak{D}}$ defined by a diagram $\mathfrak{D}$~(Sec.~\ref{sec:preliminaries-diagrams}).
In this section we explore a symbolic approach, in which contraction produces polynomial expressions for the blocks of the result
and the evaluation of these expressions is deferred and shared.

\subsection{Multi-matrix-variable polynomials}
\label{sec:symbolic-blocks}

\begin{figure*}[ht]
    \centering
    \begin{subfigure}[b]{0.44\textwidth}
        \centering
        \includegraphics[height=7.4em]{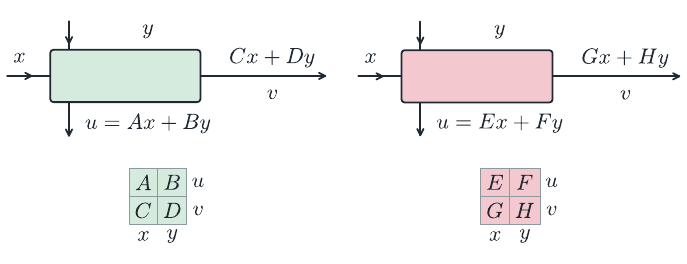}
        \caption{}
        \label{fig:diagrams-and-matrices-separate}
    \end{subfigure}
    \hfill
    \begin{subfigure}[b]{0.52\textwidth}
        \centering
        \includegraphics[height=7.4em]{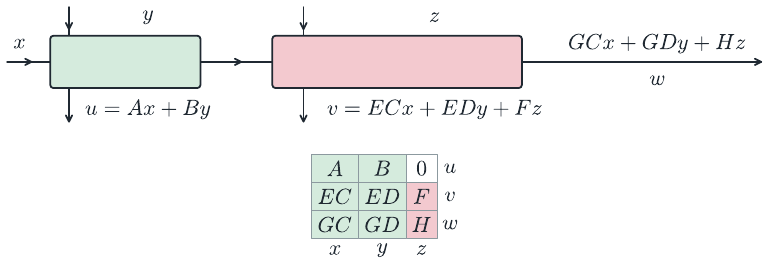}
        \caption{}
        \label{fig:diagrams-and-matrices-composed}
    \end{subfigure}
    \caption{ Linear map diagrams and matrices relating their input and output bitvectors. 
    Diagram composition results in matrix addition and multiplication. The blocks of the composition 
    contain multi-matrix-variable polynomials. 
    Note how an output being insensitive to an input, such as $u$ to $z$ is self evident both from the diagram composition structure and the 0 block in the block-matrix representation.}
    \label{fig:diagrams-and-matrices}
\end{figure*}

The key observation is that when we contract a diagram, the result is a block matrix
whose blocks are multi-matrix-variable polynomials in variables corresponding to the blocks
of the component maps~(Fig.~\ref{fig:diagrams-and-matrices}).
The matrix variables generally do not commute, every product must have compatible dimensions, and polynomial coefficients lie in $\Ftwo$.
When the blocks are sparse and many blocks are repeated, we can first evaluate the
blocks as multi-matrix-variable polynomials and then evaluate the matrix products.
This elucidates which blocks are repeated and ensures that the polynomial for each repeated block
is evaluated only once.

The symbolic evaluation of diagrams proceeds in three steps.
First, we analyze each distinct component diagram and assign a variable to each distinct matrix.
At this stage, in addition to assigning variables, we also identify zero and identity matrices.
Second, we contract the diagram in terms of the introduced variables and obtain a polynomial expression for each block.
Third, we evaluate the distinct polynomial expressions.
Interestingly, the second stage of symbolic evaluation can be the same across distances within the
same family of fault-tolerant protocols, and even across families of fault-tolerant protocols,
as we illustrate in the next subsection.

\subsection{Syndrome extraction repetition example}
\label{sec:symbolic-memory}

In this section we symbolically analyze the extended detector error model of repeated syndrome extraction rounds.
We consider a syndrome extraction gadget that, together with its gadget detector contract, satisfies the following assumptions:
\begin{enumerate}
    \item the gadget has the same input and output code with the same code presentation, and the code presentation has an independent generator list $G$~(Def.~\ref{def:code-profile})
    \item the action of the gadget's realization is equivalent to measuring the independent set of generators $G$; in particular, the sign of each measured generator
    is an affine function of the gadget's measurement outcomes
    \item the gadget logical action is the identity on all logical qubits
    \item the gadget has $|G|$ virtual detectors, each being the parity of the input syndrome bit and the measurement outcomes that determine the sign of the corresponding generator
    \item each output virtual syndrome bit equals the parity of the measurement outcomes that determine the sign of the corresponding generator
\end{enumerate}
Note that not every syndrome extraction gadget satisfies the above assumptions.
Many common syndrome extraction circuits for CSS codes do satisfy them, and thus serve as a good illustration of symbolic evaluation techniques.
Appendix~\ref{app:generalized-syndrome-extraction} gives the corresponding analysis for subsystem and Floquet syndrome extraction, where the group measured from a cycle's input can differ from the stabilizer group prepared at its output and the ideal output may carry an outcome-dependent Pauli frame.

\begin{figure}
     \centering
        \includegraphics[height=7em]{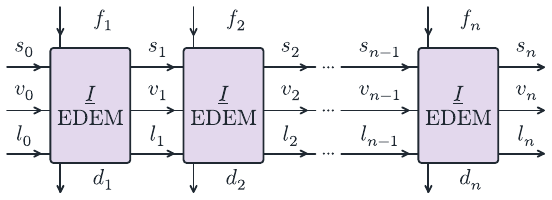}
        \caption{Extended detector error model repetition. The repeated diagram has the input and output structure of a syndrome extraction round.
        The variables $f_k$ are fault configurations,
        $s_k$ are the syndrome components of the boundary-error coordinates,
        $v_k$ are virtual syndrome flips,
        $l_k$ are the logical-effect components of the boundary-error coordinates,
        and $d_k$ are detector flips.
        Here and in the following derivation, we suppress $\Delta$ in $v_k$ and $d_k$.
    }
        \label{fig:extended-detector-model-repetition}
\end{figure}

Next we analyze the repetition of the extended detector error model for the syndrome extraction gadget in~Fig.~\ref{fig:extended-detector-model-repetition}; by Theorem~\ref{thm:edem-composition}, contracting this chained diagram yields the extended detector error model of the composite gadget of the consecutive rounds.
We use the following bitvector variables: 
\begin{align*}
    f_k & \text{-- fault configurations} \\ 
    l_k & \text{-- logical coordinate of the boundary error} \\
    s_k & \text{-- syndrome coordinate of the boundary error} \\ 
    v_k & \text{-- virtual syndrome flips} \\
    d_k & \text{-- detector flips} 
\end{align*}
For some bitmatrices $D_F$, $S_F$ and $L_F$ the variables above satisfy the following equations:
\begin{align}
l_k = & L_F f_k + l_{k-1}\label{eq:logical-effect-bpe} \\
s_k = & S_F f_k + s_{k-1}\label{eq:syndrome-bpe}\\
v_k = & D_F f_k + s_{k-1} \label{eq:virtual-syndrome-flip} \\
d_k = & D_F f_k + s_{k-1} + v_{k-1} \label{eq:detector-flip}
\end{align}
These follow from our assumption on the syndrome extraction gadget.
Since every relation is linear, we check the contribution of each input separately.
The matrices $D_F$, $S_F$, and $L_F$ capture how the gadget's faults affect detectors and the output boundary-error coordinates.

Logical effect~(\ref{eq:logical-effect-bpe}): a fault-free gadget carries the logical-effect component of the input boundary-error coordinates to the output unchanged, and this component does not depend on the input virtual syndrome flips—the latter holds for any gadget, because the output boundary-error coordinates of an extended detector error model depend only on the faults and the input boundary error (see the structural independences stated after Def.~\ref{def:raw-extended-detector-error-model}).

Syndrome~(\ref{eq:syndrome-bpe}): a fault-free gadget preserves the syndrome component of the input boundary-error coordinates because its action is to measure the code stabilizers.
As with the logical-effect component, the syndrome component does not depend on the input virtual syndrome flips or on the logical-effect component.

Output virtual syndrome flips~(\ref{eq:virtual-syndrome-flip}): by Def.~\ref{def:raw-extended-detector-error-model} and our choice of expressions for these flips in the gadget contract, a fault-free gadget outputs the syndrome component of the input boundary-error coordinates—the same $s_{k-1}$ dependence as Eq.~(\ref{eq:syndrome-bpe}), but carried forward as a virtual-syndrome flip rather than a boundary-error coordinate.
Faults enter here and in the detector flips through the same matrix $D_F$, because both are determined by the parities of the measurements used to fix the stabilizer signs.

Detector flips~(\ref{eq:detector-flip}): by the choice of virtual detectors, 
a fault-free gadget with zero incoming boundary-error coordinates reports the input virtual-syndrome flips, and a fault-free gadget with no input virtual-syndrome flips reports the incoming syndrome coordinate.

We can turn the matrix-vector recurrences into closed-form expressions and verify them symbolically~(Appendix~\ref{app:induction-script}), with the result given below:
\begin{align}
v_k = & D_F f_k + s_0 + S_F f^{\Sigma}_{k-1}, f^{\Sigma}_k = \sum_{j=1}^{k} f_j \label{eq:virtual-syndrome-flip-closed} \\
s_k = & s_0 + S_F f^{\Sigma}_k, \,\, l_k = l_0 + L_F f^{\Sigma}_k \label{eq:syndrome-bpe-closed} \\
d_k = & D_F f_k + \left( D_F + S_F \right) f_{k-1}, \quad (k \ge 2) \label{eq:detector-flip-closed}
\end{align}
We see in Eq.~(\ref{eq:detector-flip-closed}) that detector flips in step $k$ depend only on faults from steps $k$ and $k-1$ for $k \ge 2$.
This is what gives the repeated-extraction part of memory-experiment detector error models its banded structure; Sec.~\ref{sec:detector-span} recasts it as a detector-span certificate obtained from the gadget blocks alone~(Lemma~\ref{lem:detector-span-chain}).

A related use of repetition appears in Stim's \texttt{REPEAT} blocks~\cite{Gidney2021stimfaststabilizer}, which compactly represent repeated circuit and detector-error-model instructions.
Here, repeated gadget composition yields symbolic expressions for the model's matrix blocks.

The above example also illustrates that the block matrix corresponding to the EDEM of repeated syndrome extraction has substantial redundancy; 
many blocks are zero, many are identity and the rest repeat among $S_F$, $L_F$, $D_F$, $D_F + S_F$.
In a software implementation, we can simply store pointers to the distinct blocks when representing the extended detector error model, which makes this representation compact and reduces both its memory and its communication requirements.
For example, a decoder unit (such as a GPU, FPGA or ASIC) may have individual blocks pre-loaded\footnote{
    This technique is reminiscent of how sprites were used in early 2D video games to keep animations fluid without redrawing every pixel.
}.
When a decoding coordinator assigns it a specific decoding task, it need only describe the block structure, 
thus minimizing the communication requirement for specifying the EDEM.

\subsection{Caching and pre-computation}
\label{sec:symbolic-caching}

When analyzing many logical circuits implemented using the same fault-tolerant protocols, their detector error models
share blocks with the same polynomial expressions in the block matrices.
To speed up the instantiation of many such experiments, one can cache the results of evaluating the expressions
and amortize their cost across many circuits.
The expressions can be reused across parameter choices that preserve the block structure and identities.
Cached matrix values are specific to the block assignment and dimensions.
We can also prime the cache by evaluating polynomials that occur in common sub-circuits, such as various logical
operations surrounded by syndrome extraction rounds.

\section{Examples and software implementation}
\label{sec:examples-implementation}

We implemented the constructions of Secs.~\ref{sec:cutting-detectors}--\ref{sec:symbolic-construction} in a Rust prototype and used it to build and check extended detector error models for several code families and experiments.
The prototype covers gadgets and their verification, gadget detector contracts, per-gadget EDEM construction, EDEM composition by diagram contraction, and the symbolic evaluation of Sec.~\ref{sec:symbolic-construction}.
Its instruction set covers the repetition code, the rotated surface code, the two-dimensional toric code, and the four-dimensional toric code; for each code it provides state-preparation, syndrome-extraction, destructive-measurement, and transversal-CNOT gadget types with verified contracts, and a transversal-Hadamard gadget type where the code admits one.
Code presentations with over-complete generator lists~(Sec.~\ref{sec:gadget-profiles}) are supported and exercised by the four-dimensional toric code.

\subsection{Implementation overview}
\label{sec:implementation-overview}

The central data structure is the extended detector error model of Sec.~\ref{sec:gadget-dem-definition} in its compressed form: a block bitmatrix whose columns are grouped into the input boundary-error coordinates $(s,l)$, input virtual-syndrome flips, and elementary faults, and whose rows are grouped into detector flips and the corresponding output coordinates and virtual-syndrome flips.
The column groups correspond to the variables $s$, $v$, $l$, and $f$ of Sec.~\ref{sec:symbolic-memory}, and the detector rows to $d$.
The observable-flip rows of Def.~\ref{def:raw-extended-detector-error-model}, obtained through $\Delta\mathrm{OM}$, are stored in the same block bitmatrix; beyond the definition, the implementation also carries success-bit rows for postselection (Sec.~\ref{par:success-bits}).
Each block is stored once and shared by reference; zero and identity blocks require no storage.

Each gadget type is analyzed exactly once.
Fault propagation traverses the circuit once in reverse, recording which detectors and boundary outputs each elementary fault flips.

The analyzed gadget types form an instruction set, and an experiment is a gadget circuit built from calls to these types.
The experiment's DEM is assembled by wiring the per-call EDEMs along shared code blocks and contracting the syndrome, virtual-syndrome, and logical wires, as in Sec.~\ref{sec:gadget-dem-composition}.
The composed bitmatrix retains the partition of detectors and faults by gadget call, which is the block structure used in Sec.~\ref{sec:sliding-window-decoding}; fault probabilities are tracked per gadget and follow the fault columns through composition.

The resulting diagram can be evaluated by direct dense contraction over $\Ftwo$ or by the symbolic method of Sec.~\ref{sec:symbolic-construction}.
In the symbolic method, each distinct block is assigned a variable once per instruction set; contraction then produces a polynomial expression for each result block, and each distinct expression is evaluated once.
An optional cache stores the block matrix values of expressions that recur across experiments (Sec.~\ref{sec:symbolic-caching}).

\subsection{Correctness checks}
\label{sec:implementation-checks}

Gadgets and gadget detector contracts are verified by Clifford simulation, following Ref.~\cite{Kliuchnikov2023stabilizercircuitverification}.
The realization sandwiched between encoders and unencoders is compared against the action under the outcome map~(Def.~\ref{def:gadget}).
The declared detectors, output virtual syndrome relations, success bits and deterministic logical outcomes are checked for determinism and completeness using the contract-correctness conditions, including the rank condition, of Sec.~\ref{sec:gadget-profiles}.

A slower forward propagation of every fault provides an independent check of the reverse fault propagation; the two produce identical EDEMs on all gadget types of the instruction set.

Composition is tested directly.
For every code in the instruction set and every tested round count $n$, the EDEM of $n$ syndrome-extraction rounds built from the flattened circuit equals the contraction of $n$ single-round EDEMs.
The EDEMs of memory, Hadamard, and CNOT experiments built from gadget calls agree with contractions of the corresponding per-stage EDEMs.
The block-level decomposition of an EDEM into the raw fault-effect map, $\Delta\mathrm{DM}$, $\Delta\mathrm{OM}$, and the code-presentation maps (Fig.~\ref{fig:raw-extended-detector-error-model} and Eq.~\eqref{eq:edem-compression}) is likewise reproduced for the surface and both toric codes.

Direct dense contraction and symbolic evaluation produce bit-for-bit identical DEMs on a test suite covering memory experiments on all four codes and Hadamard, CNOT, and GHZ experiments on the surface and toric codes.

Finally, the DEMs obtained by symbolic EDEM composition are cross-checked against
Stim~\cite{Gidney2021stimfaststabilizer}, which constructs the DEM from the
complete physical circuit and its declared detectors and logical
observables, without using the gadget decomposition.
Table~\ref{tab:edem-correctness} specifies the tested code sizes, round
counts, noise parameters, Stim version, and comparison criterion.
The checks cover memory, transversal Hadamard and CNOT, and three- and
five-block encoded GHZ experiments, with all 1,836
code--size--round--experiment--basis--noise configurations passing.
The nonempty detector--observable effect sets agree exactly, and the
largest absolute difference between merged effect probabilities is
below $4.2\times10^{-14}$.

\begin{table*}[t]
\centering
\small
\setlength{\tabcolsep}{4pt}
\renewcommand{\arraystretch}{1.12}
\caption{Validation of symbolically composed EDEMs against Stim.
Each row specifies the Cartesian product of the listed sizes, round
counts, and experiments, tested in both bases under all three noise
presets. Fractions count passing/tested DEM comparisons; every comparison
checks the complete set of nonempty detector--observable effects and
every associated probability.}
\label{tab:edem-correctness}
\begin{tabular*}{\textwidth}{@{\extracolsep{\fill}}llclc@{}}
\toprule
Code & Size & SE rounds $r$ & Experiments & Agreement \\
\midrule
Repetition & $n=3,5,7$ & $2,10,30$ & M, C, $G_3$, $G_5$ & $216/216$ \\
 & $n=15,31,63,127,255$ & $10,100$ & M, C, $G_3$, $G_5$ & $240/240$ \\
 & $n=3,5$ & $100$ & $G_3$, $G_5$ & $24/24$ \\
\addlinespace[0.4em]
Rotated surface & $d=3,5,7$ & $2,10,30$ & M, H, C, $G_3$, $G_5$ & $270/270$ \\
 & $d=11,15,21,31$ & $2,10$ & M, H, C, $G_3$, $G_5$ & $240/240$ \\
 & $d=5$ & $100,300,1000$ & M, H, C & $54/54$ \\
 & $d=3,5$ & $100$ & $G_3$, $G_5$ & $24/24$ \\
\addlinespace[0.4em]
2D toric & $d=3,5,7$ & $2,10,30$ & M, H, C, $G_3$, $G_5$ & $270/270$ \\
 & $d=11,15,21$ & $2,10$ & M, H, C, $G_3$, $G_5$ & $180/180$ \\
 & $d=5$ & $100,300,1000$ & M, H, C & $54/54$ \\
 & $d=3,5$ & $100$ & $G_3$, $G_5$ & $24/24$ \\
\addlinespace[0.4em]
4D toric & $L=2,3,4$ & $2,10$ & M, H, C, $G_3$, $G_5$ & $180/180$ \\
 & $L=2$ & $30,100$ & M, H, C, $G_3$, $G_5$ & $60/60$ \\
\midrule
\multicolumn{4}{l}{\textbf{Total DEM comparisons}} & \textbf{1,836/1,836} \\
\multicolumn{4}{l}{Compared effect probabilities} & $496\,696\,358$ \\
\multicolumn{4}{l}{Largest absolute probability difference} & $4.16\times10^{-14}$ \\
\bottomrule
\end{tabular*}

\vspace{0.5em}
\begin{minipage}{\textwidth}
\footnotesize
\textit{Circuits and sizes.}
M, H, C, and $G_m$ denote memory, Hadamard, CNOT, and $m$-block GHZ
experiments. Here $r$ counts explicit SE rounds per block in each
segment: M has one segment, H and C have one before and one after the
gate, and $G_m$ has one after each rung of a nearest-neighbor CNOT
ladder, starting from $|+\rangle|0\rangle^{\otimes(m-1)}$.
Thus $G_5$ has $4r$ rounds per block.
M and C use matching preparation/readout bases; H uses dual readout.
GHZ readout checks adjacent logical $ZZ$ parities in the $Z$ basis
and the global logical $X^{\otimes m}$ parity in the $X$ basis,
for every encoded logical index.
Repetition size is the number $n$ of data qubits; $d$ is code distance
for surface and 2D toric codes. For 4D toric, $L$ is lattice side length,
with distance $L^2$ and $6L^4$ data qubits per block; the largest
five-block experiment contains 7,680 data qubits.

\smallskip
\textit{Noise and comparison.}
All runs use release builds and Stim~\cite{Gidney2021stimfaststabilizer}
1.16.dev0, with decomposition, loop folding, and gauge detectors disabled
and disjoint-error approximation threshold zero.
At $p=10^{-2}$, the presets are independent X/Z faults and independent
X/Y/Z faults, each at probability $p$ per surviving operation-target
qubit, and all nonidentity joint Pauli/readout patterns on an operation
with $q$ surviving targets and $b$ output bits, each independently at
$p/(4^q2^b-1)$.
Identical effects are merged as
$P_e=\tfrac12[1-\prod_{f:e_f=e}(1-2p_f)]$.
The acceptance bound is
$|P_e-P_e^{\rm Stim}|\leq10^{-12}(1+|P_e|+|P_e^{\rm Stim}|)$.
All effect sets agree exactly. The maximum discrepancy above is measured
over every compared effect probability.
\end{minipage}
\end{table*}

\subsection{Examples}
\label{sec:implementation-examples}

Memory and GHZ experiments illustrate the block structure of the composed DEMs.

\paragraph{Memory experiments.}
For a single code block undergoing preparation, $r$ syndrome-extraction rounds, and destructive measurement, the composed DEM reproduces, on the syndrome-extraction rounds, the banded structure derived in Sec.~\ref{sec:symbolic-memory}: each gadget call contributes nonzero blocks only for itself and the preceding call.


\paragraph{GHZ states on $n$ logical qubits.}
A ladder of transversal CNOTs across $n$ code blocks, with syndrome-extraction rounds after each rung and a destructive measurement of every block, prepares and verifies an $n$-qubit logical GHZ state using $n^2+2n-1$ gadget calls at one round per rung.
This experiment exercises parallel composition: each CNOT rung contributes blocks that relate faults in one code block to detectors in another.
For the tested values of $n$ up to 12 at distance 3, the composed DEM has lower block bandwidth $b=n+1$ ($b=n$ under $Z$-only noise) in the chosen topological order of gadget calls.
Here $b$ is the largest separation $t-s$ of a nonzero block relating detectors of call $t$ to faults of call $s$, so nonzero blocks satisfy $0\le t-s\le b$.
Since $\operatorname{dist}(s,t)\le t-s$ for causally related calls, this also bounds the graph-distance detector span of Eq.~\eqref{eq:bounded-detector-span} by $b$.
This bound need not be tight for a multiport gadget circuit.
Only about $2n^2$ of the $(n^2+2n-1)^2$ block-grid entries are nonzero, so the block representation becomes sparser as $n$ grows.
The GHZ correlations appear as deterministic observables, the shared readout bit as a random one.



\subsection{Prototype limitations}

The current version only supports independent stochastic noise over the elementary Pauli faults of Sec.~\ref{sec:preliminaries-faults}.
These include Pauli insertions on the individual gadget circuits and outcome bit flips.
Including cross-gadget correlated noise is an ongoing line of work.

The prototype composes gadgets along matching ports, with no cross-gadget feed-forward beyond parity-controlled Pauli corrections.
It currently produces monolithic DEMs.
Although the output retains the per-call partition needed for windowing, the prototype does not yet extract window EDEMs~(Cor.~\ref{cor:window-edem}) or automate the detector-span criterion of Lemma~\ref{lem:detector-span-chain}.
It also does not represent the interface dispositions and task ownership of Sec.~\ref{sec:decoder-tasking}.

\section{Window EDEMs and streaming decoding}
\label{sec:sliding-window-decoding}

This section connects compositional EDEM construction to the decoding problem of a long, adaptive computation.
It makes three points.
First, the object a windowed decoder consumes is the \emph{window EDEM}: the contraction of the EDEM diagram restricted to a convex set of gadget calls, which by Theorem~\ref{thm:edem-composition} is again an EDEM.
Second, the two structural properties windowing requires, causality and bounded detector span, can be read off the per-gadget EDEM blocks, so the symbolic analysis of Sec.~\ref{sec:symbolic-construction} also certifies them.
Third, decoding tasks need not follow the causal order of the calls; we distinguish their data dependencies from quantum execution and explain the interface conditions needed to express them compositionally.
We prescribe no decoding algorithm and do not analyze accuracy.

\subsection{Incremental construction and outcome interpretation}
\label{sec:streaming-interpretation}

In a fault-tolerant computation, decoded logical outcomes determine which logical gadgets the quantum processing unit (QPU) executes next.
The gadget circuit $\mathfrak{G}$~(Def.~\ref{def:gadget-circuit}) is thus unknown in advance; it grows one gadget call at a time, in a topological order fixed by the instruction stream.
Recall that a gadget call is a box of the gadget circuit, one use of an elementary gadget at a specific place in the computation; we write \emph{call} for short, and say a call is \emph{issued} when the control system queues its physical instruction to the QPU.

Two consequences follow.
The EDEM diagram $\mathrm{EDEM}[\mathfrak{G}]$ must be assembled incrementally, in step with the gadget calls issued to the QPU.
Decoding must begin before the computation ends, so that logical outcomes can drive control flow within an acceptable \emph{reaction time}~\cite{Gidney2019flexiblelayout}, at a throughput that keeps pace with the growing decoding problem~\cite{Terhal2015quantumerrorcorrection,Battistel2023realtimedecoding}.

The same per-gadget maps serve both needs.
When the control system chooses the next gadget call $g$, it queues the physical instruction to the QPU and the precomputed $\mathrm{EDEM}_g$, 
with its detector map $\mathrm{DM}_g$ and outcome map $\mathrm{OM}_g$, to the decoding unit.
Appending $g$ to $\mathrm{EDEM}[\mathfrak{G}]$ touches only these stored maps and the wires at the ports of $g$.

When $g$ completes, $\mathrm{DM}_g$ evaluated on its physical outcomes and on the virtual syndromes arriving at its input wires yields the detector values $g$ declares and the virtual syndromes it passes on its output wires; $\mathrm{OM}_g$ yields the raw logical outcomes $g$ declares.
The structure that builds the EDEM thus also interprets QPU outcomes as they stream in.
The decoding unit may receive either the physical outcomes of each call or only the detector values computed from them.
Since $\mathrm{DM}_g$ and $\mathrm{OM}_g$ are affine~(Def.~\ref{def:detector-map}), the latter is $\Ftwo$ linear logic that can run in an intermediate unit close to the QPU, which then forwards only detector values, output virtual syndromes, and observable values.
This reduces the communication load when a realization produces many more measurement outcomes than detectors, such as in Floquet~\cite{Hastings2021dynamicallygenerated} and fusion-based~\cite{Bartolucci2023fusionbased} fault tolerance.

\subsection{Window extended detector error models}
\label{sec:window-edems}

A gadget circuit is an acyclic network of gadget calls~(Def.~\ref{def:gadget-circuit}).
We use the directed graph its wires induce on the calls, with an edge from $s$ to $t$ whenever an output code of $s$ is wired to an input code of $t$.
We write $s\preceq t$ if $s=t$ or a directed path leads from $s$ to $t$, call $\preceq$ the \emph{causal order} on calls, and write $\operatorname{dist}(s,t)$ for the length of a shortest such path, or $\infty$ if there is none.

A window~(Def.~\ref{def:window}) is a convex set $W$ of calls, a gadget circuit in its own right and at the same time a single call of the coarser circuit $\mathfrak{G}/W$.
Its window $\mathrm{EDEM}_{G_W}$ is the contraction of the restricted diagram, and $W$ may be replaced by this single box inside $\mathrm{EDEM}[\mathfrak{G}]$ without changing the contraction~(Corollary~\ref{cor:window-edem}).
Its free ports are the fault vectors $\mathsf{F}_W=\bigsqcup_{g\in W}\mathsf{F}_g$, the detector flips $\mathsf{D}_W$, and the observable flips of the calls in $W$.
Its interface ports are a virtual-syndrome-flip port and a boundary-coordinate port for every wire entering or leaving $W$.

Two windows appear in each step of sequential decoding, and they play different roles.
The \emph{inference window} $W$ fixes the decoding problem: its detectors supply the evidence and its elementary faults define the inference variables.
The \emph{commit window} $C\subseteq W$ fixes what is passed on.
Its output boundary is the cut along which decoding composes, and this cut runs through the interior of $W$, between the committed calls and the \emph{buffer} $W\setminus C$ (Fig.~\ref{fig:commit-buffer-banded-dem}).
A committed region has a past and a future, and the interface ports of its EDEM carry what crosses into the buffer and beyond; this is why the right error model for it is an EDEM and not a closed DEM.
Per-gadget blocks supply the rows, columns, and interface maps of both windows, so no physical circuit is revisited and the DEM of the whole computation is never materialized.

\begin{figure*}[t]
    \centering
    \includegraphics[width=\textwidth]{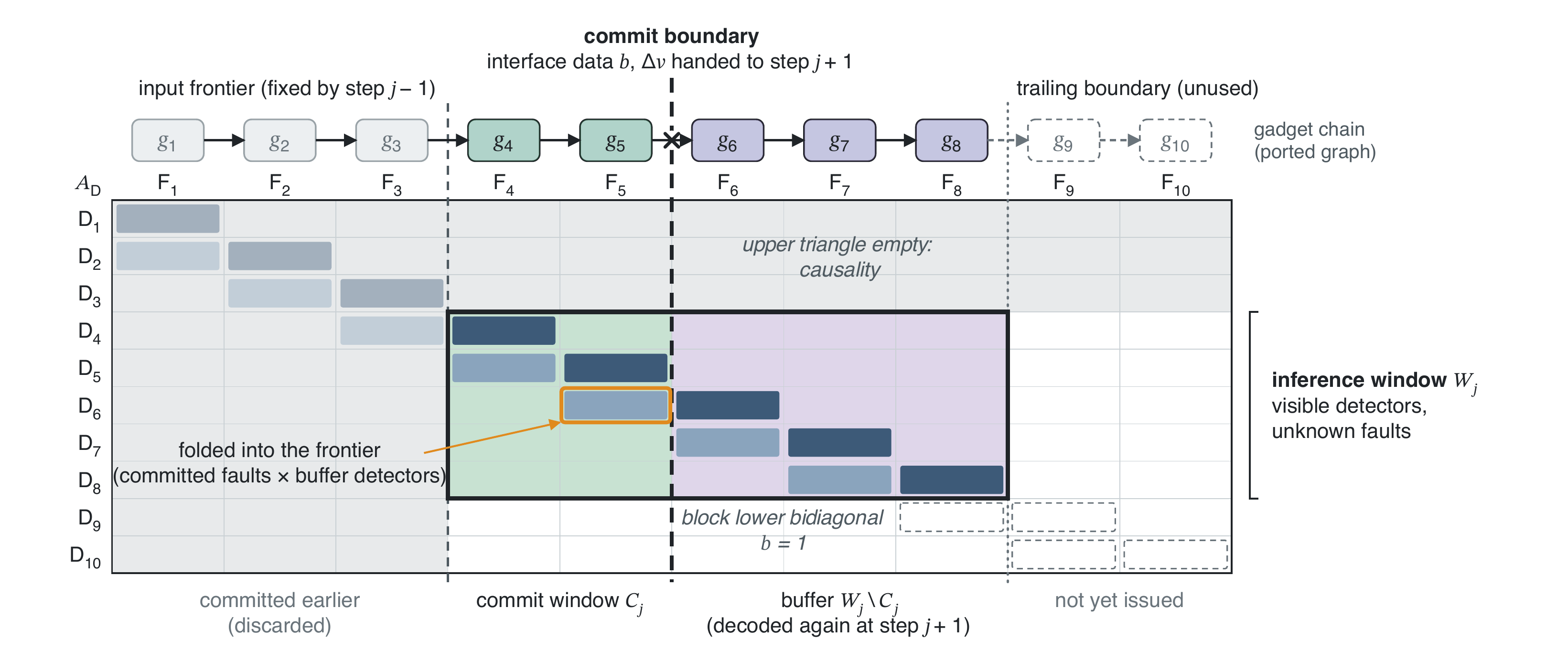}
    \caption{One step of sequential sliding-window decoding on the block-banded channel-check matrix $A_{\mathsf{D}}$ of a chain of gadget calls $g_1\to\cdots\to g_{10}$ (top strip).
    Column blocks are the fault partitions $\mathsf{F}_t$ and row blocks the detector partitions $\mathsf{D}_t$.
    Causality empties the upper triangle.
    The figure assumes detector span $w=1$, so nonzero blocks lie on the diagonal and the first subdiagonal; Sec.~\ref{sec:symbolic-memory} and App.~\ref{app:generalized-syndrome-extraction} provide locality conditions for the two-block band for repeated stabilizer, subsystem, and Floquet syndrome extraction, and Sec.~\ref{sec:detector-span} discusses gadgets that widen it.
    At step $j$ the inference window $W_j=\{g_4,\dots,g_8\}$ (black outline) supplies the visible detectors and the fault variables included in the inference problem.
    Its commit window $C_j=\{g_4,g_5\}$ (green) is separated from the buffer (purple) by the commit boundary (bold dashed line), which cuts the wire between $g_5$ and $g_6$ in the interior of $W_j$; the interface data $b$ and $\Delta v$ on that wire are handed to step $j+1$.
    The committed correction reaches later detectors only through the orange block $(\mathsf{D}_6,\mathsf{F}_5)$, which Proposition~\ref{prop:commit-and-fold} folds into the frontier.
    Calls $g_1$ to $g_3$ (grey) were committed at earlier steps and discarded; their block $(\mathsf{D}_4,\mathsf{F}_3)$ was folded at step $j-1$ and fixes the input frontier of $W_j$.
    The trailing boundary of $W_j$ plays no role in composition, and calls $g_9$ and $g_{10}$ (dashed) are not yet issued.}
    \label{fig:commit-buffer-banded-dem}
\end{figure*}

We use the block notation of Eq.~\eqref{eq:edem-blocks}, with $b=(s,l)$ the boundary coordinates and $\Delta v$ the virtual-syndrome flips.
The same block form holds for a window EDEM with $g$ replaced by $W$.

Turning an inference window into a decoding problem requires a \emph{disposition} for every wire that touches it.
For the sequential orientation considered here, there are three.
\begin{itemize}
    \item \emph{Stitched.}
    Both ends of the wire lie in $W$.
    The interface disappears in the contraction of Corollary~\ref{cor:window-edem}; this is composition.
    \item \emph{Closed.}
    The wire enters $W$ from a committed region.
    Its boundary coordinates $b_{\mathrm{in}}$ and virtual-syndrome flips $\Delta v_{\mathrm{in}}$ encode the interface effect of the previously committed correction and deterministically shift the detector values of $W$ (Proposition~\ref{prop:commit-and-fold} below).
    \item \emph{Open.}
    The wire leaves $W$, or enters it from a region that is neither in $W$ nor committed.
    Its interface data are unconstrained: fault configurations in $W$ that differ only there are indistinguishable from the detectors of $W$, so the inference stops at the wire.
    An open wire acquires a disposition of another kind when its far side is decoded later.
\end{itemize}
A preparation or readout gadget inside $W$ is the degenerate closed case: it has no code port on that side, so the window is closed there without boundary data from any other call.
In sequential decoding the input wires of $W$ are closed, its trailing wires are open, and all wires between its calls are stitched.
Given the fixed interface data $\Delta v_{\mathrm{in}}$ and $b_{\mathrm{in}}$ of its closed wires, the window decoder receives the observed detector flips $\Delta d_W$ and solves for a fault vector $f^\star_W$ with
\begin{equation}
    D^W_F f^\star_W
    =
    \Delta d_W + D^W_B b_{\mathrm{in}} + D^W_V\Delta v_{\mathrm{in}},
    \label{eq:window-decoding-problem}
\end{equation}
using the fault probabilities of $\mathsf{F}_W$ as prior.
This is an ordinary decoding problem for the channel-check matrix $D^W_F$, with the detector values shifted by known interface data.
The logical component $l$ of $b_{\mathrm{in}}$ is the accumulated Pauli frame, and $O^W_B b_{\mathrm{in}}$ is its contribution to the logical outcomes declared in $W$.

A window EDEM is thus consumed by a standard decoder through a closed DEM in the sense of Sec.~\ref{sec:preliminaries-dems}: the fault set $\mathsf{F}_W$ with its probabilities, the check matrix $D^W_F$, and the observable matrix $O^W_F$.
The interface blocks never reach the decoder.
Closed wires enter as the offsets $D^W_B b_{\mathrm{in}}+D^W_V\Delta v_{\mathrm{in}}$ on the detector values and $O^W_B b_{\mathrm{in}}$ on the observables; open outgoing wires contribute nothing, since their rows are not detectors; and an open past-facing wire is handled by an explicit boundary treatment, as discussed in Sec.~\ref{sec:decoder-tasking}.
Decoder-specific conversions, such as decomposing hyperedges into a graphlike model, apply to this DEM as to any other.
The decoder returns a fault vector on all of $\mathsf{F}_W$, which is restricted to the commit window and folded.
Faults of the window DEM with identical columns may be merged only if they agree on commit membership and on their interface and observable effects; otherwise the fold after decoding is wrong.

The decoder commits to a correction, or to an equivalence class of fault configurations with the same relevant effect; it does not claim to have identified the faults that occurred.
The next window needs only the effect of the committed correction on the interface, and evaluating the EDEM supplies it without any retained detector history.
\begin{proposition}[Commit and fold]
\label{prop:commit-and-fold}
Let $C$ be a window of $\mathfrak{G}$, let $\kappa\in\Ftwo^{\mathsf{F}_C}$ be a committed fault vector on $C$, and let $(b_{\mathrm{out}},\Delta v_{\mathrm{out}})$ be the interface outputs of $\mathrm{EDEM}_{G_C}$ evaluated on $\kappa$ with zero interface inputs.
Then for every gadget call $t\notin C$,
\begin{equation}
    \left(A_{\mathsf{D}}\right)_{\mathsf{D}_t,\mathsf{F}_C}\kappa
\end{equation}
equals the detector flip obtained by feeding $(b_{\mathrm{out}},\Delta v_{\mathrm{out}})$ into the wires leaving $C$ in $\mathrm{EDEM}[\mathfrak{G}/C]$, with all other inputs zero.
The same holds for observable flips.
\end{proposition}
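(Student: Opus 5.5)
The plan is to reduce the statement to linearity of diagram contraction together with Corollary~\ref{cor:window-edem}. By Theorem~\ref{thm:edem-composition}, the fault-to-detector block of the closed experiment is the corresponding block of $\interpret{\mathrm{EDEM}[\mathfrak{G}]}$, so $(A_{\mathsf{D}})_{\mathsf{D}_t,\mathsf{F}_C}\kappa$ is the $\mathsf{D}_t$ component of the contraction evaluated on the input with $\kappa$ on the fault ports of $C$ and zero on every other free input. A closed experiment has no boundary interface inputs. By Corollary~\ref{cor:window-edem}, we may replace the boxes of $C$ in $\mathrm{EDEM}[\mathfrak{G}]$ by the single box $\mathrm{EDEM}_{G_C}$ without changing the contraction. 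We may also regroup per-gadget compressed maps, following the convention fixed after Theorem~\ref{thm:edem-composition}. Together these give a literal matrix equality, not merely equality up to representative. The resulting diagram is exactly $\mathrm{EDEM}[\mathfrak{G}/C]$, with the box for $G_C$ instantiated as $\mathrm{EDEM}_{G_C}$.

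Next, I evaluate this diagram on the input described above. The box $\mathrm{EDEM}_{G_C}$ receives $\kappa$ on its fault port. Its interface inputs come from wires entering $C$ from calls outside $C$. I will argue that all these incoming values are zero. A nonzero value could only arise along a path from a nonzero free input, and the only nonzero free input sits in $C$ itself. A path from $C$ back into $C$ through an outside call would contradict convexity (Def.~\ref{def:window}), and the acyclicity of $\mathfrak{G}/C$ guaranteed by convexity makes this precise. In block form~\eqref{eq:edem-blocks} for $W=C$, the box therefore outputs exactly $(b_{\mathrm{out}},\Delta v_{\mathrm{out}})$ on its outgoing wires. These are the interface outputs on $\kappa$ with zero interface inputs, as in the hypothesis.

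The remaining boxes, those for calls $t\notin C$, receive zero fault inputs. Their only nonzero inputs are then those that propagate from the outgoing wires of $C$. Removing the box for $C$ and feeding $(b_{\mathrm{out}},\Delta v_{\mathrm{out}})$ directly into those wires yields identical values on every other wire. This holds because contraction is computed box by box in topological order, and the box for $C$ has already been fully evaluated. Restricting to the $\mathsf{D}_t$ outputs gives the claim. Detector outputs of $C$ itself are excluded because $t\notin C$.

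The observable statement follows identically, reading off the $\Delta\mathrm{OM}$ rows. For random observables this again relies on the fixed composite representative. The step I expect to need the most care is the zero-incoming-interface argument and the representative bookkeeping. I would handle the first via convexity and acyclicity, and the second by invoking the literal regrouping equality rather than Theorem~\ref{thm:edem-composition}'s up-to-representative form.
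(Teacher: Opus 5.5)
Your argument is correct and follows the paper's proof: replace $C$ by the single box $\mathrm{EDEM}_{G_C}$ via Corollary~\ref{cor:window-edem}, use convexity (equivalently, acyclicity of $\mathfrak{G}/C$) so that nothing produced by $\kappa$ can re-enter $C$, and conclude by linearity that $\kappa$ reaches detectors and observables outside $C$ only through the interface outputs of $C$. Your explicit check that the interface inputs of $C$ vanish, and your appeal to the literal regrouping equality for the compressed representatives, spell out steps the paper leaves implicit.
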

\begin{proof}
By Corollary~\ref{cor:window-edem}, $\mathrm{EDEM}[\mathfrak{G}]$ and $\mathrm{EDEM}[\mathfrak{G}/C]$ with $C$ replaced by $\mathrm{EDEM}_{G_C}$ have the same contraction.
Faults propagate only forward along wires, and by convexity a wire leaving $C$ never re-enters it, so every path from $\mathsf{F}_C$ to a detector outside $C$ passes through a wire leaving $C$; by linearity, $\kappa$ contributes through its interface outputs alone.
\end{proof}
The proposition concerns the commit window, not the inference window.
The frontier passed to the next step is the output boundary of $C$; the trailing boundary of the inference window plays no role in composition.
To fold a committed correction into the frontier, add $b_{\mathrm{out}}$ and $\Delta v_{\mathrm{out}}$ to the interface data on the wires leaving $C$ and discard $C$; the interface inputs of $C$, fixed by earlier commits, are transported to the same wires by the fault-free part of $\mathrm{EDEM}_{G_C}$.
In matrix terms this is the residual update of Eq.~\eqref{eq:decoder-task-syndrome-update} below, computed through the interface instead of the global matrix.
The two halves of the handoff differ in what they carry.
If every fault pattern invisible to the detectors of $C$ is also invisible to its outgoing virtual syndromes, $\ker D^C_F\subseteq\ker V^C_F$, then a $\kappa$ that reproduces every detector of $C$ also reproduces the virtual syndromes $C$ hands on, and the boundary coordinates $b_{\mathrm{out}}$ are the only part of the handoff that carries the decoder's belief.
Syndrome-extraction gadgets whose virtual syndromes are the measured stabilizer signs satisfy this with $D^C_F=V^C_F$ (Sec.~\ref{sec:symbolic-memory}).
In the sequential schedule, a logical outcome declared by a call $g\in C$ is final once $C$ and its causal past have been committed, because every contributing fault then has a committed correction.

For a chain, with call $t$ immediately following the committed region $C$, the proposition gives the handoff factorization
\begin{equation}
    \left(A_{\mathsf{D}}\right)_{\mathsf{D}_t,\mathsf{F}_C}
    =
    \begin{bmatrix} D^{t}_B & D^{t}_V \end{bmatrix}
    \begin{bmatrix} B^{C}_F \\ V^{C}_F \end{bmatrix}.
    \label{eq:handoff-factorization}
\end{equation}
More distant calls require intervening interface transfers; general gadget circuits require the path sum described in Sec.~\ref{sec:detector-span}.
A decoding schedule may therefore pass any of three payloads across a cut.
Passing the committed fault vector $\kappa$ leaves both factors to the consumer and exposes the largest basis.
Passing the interface data $(b_{\mathrm{out}},\Delta v_{\mathrm{out}})$ lets the producer apply its own blocks and the consumer its own, so the two sides are coupled only through the code interface and either can be exchanged for another window EDEM with the same gadget detector contract.
Passing the detector delta $(A_{\mathsf{D}})_{\mathsf{D}_t,\mathsf{F}_C}\kappa$ uses the product, computed once per producer and consumer pair, and is the smallest payload, but is specialized to one consumer.
The three are equivalent only if the interface data include the boundary coordinates.
For the syndrome-extraction chain the handoff block is $D_F+S_F$, whereas the virtual-syndrome factors alone give $D_F$; the $S_F$ term travels through $b$, which is why the residual cannot be handed on as virtual syndromes only.

\subsection{Causality and detector span from gadget blocks}
\label{sec:detector-span}

Exactly one gadget call declares each detector, namely the call whose detector map outputs it~(Corollary~\ref{cor:causality}).
The detector-cutting construction of Sec.~\ref{sec:cutting-detectors} thus induces canonical partitions over gadget calls,
\begin{equation}
    \mathsf{D}
    =
    \bigsqcup_{t} \mathsf{D}_t,
    \qquad
    \mathsf{F}
    =
    \bigsqcup_{t} \mathsf{F}_t.
    \label{eq:windowed-detector-fault-partitions}
\end{equation}
Here $A_{\mathsf{D}}$ is the fault-to-detector block of the EDEM of the computation executed so far~(Sec.~\ref{sec:gadget-dem-definition}), which for a closed experiment is the channel-check matrix of Sec.~\ref{sec:preliminaries-dems}.
Every path in $\mathrm{EDEM}[\mathfrak{G}]$ from a fault port of $s$ to a detector port of $t$ follows wires of the gadget circuit, so by Corollary~\ref{cor:causality}
\begin{equation}
    \left(A_{\mathsf{D}}\right)_{\mathsf{D}_t,\mathsf{F}_s}
    =
    0
    \qquad
    \text{unless }s\preceq t.
    \label{eq:causal-channel-check-matrix}
\end{equation}
In any topological order of the gadget calls the channel-check matrix is block lower triangular: a fault cannot flip a detector outside its causal future.

Causality alone does not bound the size of a window, because a fault could flip detectors arbitrarily far downstream.
We say that $\mathfrak{G}$ with its induced gadget detector contract has \emph{detector span} $w$ if
\begin{equation}
    \left(A_{\mathsf{D}}\right)_{\mathsf{D}_t,\mathsf{F}_s}
    =
    0
    \qquad
    \text{whenever }\operatorname{dist}(s,t)>w.
    \label{eq:bounded-detector-span}
\end{equation}
Together, Eqs.~\eqref{eq:causal-channel-check-matrix} and~\eqref{eq:bounded-detector-span} make $A_{\mathsf{D}}$ block local with respect to the gadget circuit.
For a chain in its natural order, detector span $w$ bounds the lower block bandwidth $b\le w$, so the matrix occupies at most $w+1$ block diagonals, including the main diagonal, as in Fig.~\ref{fig:commit-buffer-banded-dem}.
Bandwidth depends on the chosen serialization of the causal order; detector span is intrinsic to the gadget graph and its detector contract.

Detector span is a property of the per-gadget blocks, not an assumption imposed from outside.
In the notation of Eq.~\eqref{eq:edem-blocks}, define the \emph{interface transfer} of a call $g$ as the map from its incoming to its outgoing interface data with faults set to zero,
\begin{equation}
    T^g
    =
    \begin{bmatrix}
        B^g_B & 0 \\
        V^g_B & V^g_V
    \end{bmatrix}
    :
    \begin{bmatrix} b_{\mathrm{in}} \\ \Delta v_{\mathrm{in}} \end{bmatrix}
    \longmapsto
    \begin{bmatrix} b_{\mathrm{out}} \\ \Delta v_{\mathrm{out}} \end{bmatrix}.
    \label{eq:interface-transfer}
\end{equation}
\begin{lemma}[Detector span of a chain]
\label{lem:detector-span-chain}
For a chain of gadget calls $g_1\to g_2\to\cdots$ and $s<t$,
\begin{equation}
    \left(A_{\mathsf{D}}\right)_{\mathsf{D}_t,\mathsf{F}_s}
    =
    \begin{bmatrix} D^{t}_B & D^{t}_V \end{bmatrix}
    T^{t-1}\cdots T^{s+1}
    \begin{bmatrix} B^{s}_F \\ V^{s}_F \end{bmatrix}.
    \label{eq:chain-block-entry}
\end{equation}
In particular, chains of identical calls $g$ have detector span $w$ uniformly in their length if and only if
\begin{equation}
    \begin{bmatrix} D^{g}_B & D^{g}_V \end{bmatrix}
    \left(T^g\right)^{k}
    \begin{bmatrix} B^{g}_F \\ V^{g}_F \end{bmatrix}
    =0
    \qquad\text{for all }k\ge w.
    \label{eq:chain-span-criterion}
\end{equation}
If the interface transfer $T^g$ acts on a $q$-dimensional space, the Cayley--Hamilton theorem reduces this condition to the finite checks $k=w,\ldots,w+q-1$.
\end{lemma}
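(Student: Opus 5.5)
The plan is to read the block entry directly off the contracted diagram $\mathrm{EDEM}[\mathfrak{G}]$, using Theorem~\ref{thm:edem-composition} with the fixed per-gadget compressed maps. Set every free input to zero except $f_s$, and trace the resulting signal along the chain using the block equations~\eqref{eq:edem-blocks}. The argument has three steps.

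\emph{Source.} Call $g_s$ has zero interface inputs, so it emits the interface data $(B^s_F f_s, V^s_F f_s)$ on its outgoing wire.

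\emph{Transfer.} For each intermediate call $g_r$ with $s<r<t$, the fault input is zero. Equations~\eqref{eq:edem-blocks} then reduce to $b_{\mathrm{out}}=B^r_B b_{\mathrm{in}}$ and $\Delta v_{\mathrm{out}}=V^r_B b_{\mathrm{in}}+V^r_V\Delta v_{\mathrm{in}}$, which is exactly $T^r$. Induction on $t-s$ then gives the product $T^{t-1}\cdots T^{s+1}$ applied to the source vector.

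\emph{Sink.} At $g_t$ the detector flip is $D^t_B b_{\mathrm{in}}+D^t_V\Delta v_{\mathrm{in}}$, since $f_t=0$. This yields Eq.~\eqref{eq:chain-block-entry}. The case $t=s+1$ is the empty product, which agrees with Eq.~\eqref{eq:handoff-factorization}. I would also note that in a chain, detectors of calls before $s$ receive nothing, and that the compressed wires are consistent because we fixed the composite representative induced by contracting the per-gadget maps.

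For identical calls $g$, Eq.~\eqref{eq:chain-block-entry} becomes $X (T^g)^{t-s-1} Y$, where $X:=[D^g_B\;D^g_V]$ and $Y:=[B^g_F;V^g_F]$. In a chain we have $\operatorname{dist}(s,t)=t-s$, so detector span $w$ means the block vanishes whenever $t-s>w$, that is, whenever $k:=t-s-1\ge w$.

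\begin{itemize}
\item If the condition holds for all $k\ge w$, the span is $w$ for every length.
\item Conversely, if span $w$ holds uniformly in length, take a chain long enough to contain any prescribed pair with $t-s=k+1$ for $k\ge w$. This forces $X(T^g)^kY=0$.
\end{itemize}

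The edge cases I would check are that the boundary calls $g_1$ and the last call do not differ from the interior ones. They do not matter here, since all calls are identical and the entry depends only on $t-s$.

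For the finite reduction, let $q$ be the dimension of the space $T^g$ acts on. By Cayley--Hamilton over $\Ftwo$, $(T^g)^q$ is an $\Ftwo$-linear combination of $I,T^g,\ldots,(T^g)^{q-1}$. Multiplying by $(T^g)^{w}$ shows that every $(T^g)^{k}$ with $k\ge w$ lies in the span of $(T^g)^{w},\ldots,(T^g)^{w+q-1}$; this follows by induction on $k$. Sandwiching between $X$ and $Y$ shows that vanishing for $k=w,\ldots,w+q-1$ implies vanishing for all $k\ge w$.

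I expect the main obstacle to be the transfer step. One must justify that $T^g$ captures the full propagation, with no path bypassing the interface. This relies on the structural zero blocks noted after Def.~\ref{def:raw-extended-detector-error-model}: $b_{\mathrm{out}}$ is independent of $\Delta v_{\mathrm{in}}$, and in a chain the only wires between calls are the interface wires (Corollary~\ref{cor:causality}).
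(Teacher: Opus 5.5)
Your proposal is correct and follows the paper's argument. The paper's proof likewise notes that detector and observable ports are free and do not feed forward, so the only path from $\mathsf{F}_s$ to $\mathsf{D}_t$ enters the interface through $B^s_F$ and $V^s_F$, passes through the fault-free transfers $T^r$ of the intermediate calls, and exits through $D^t_B$ and $D^t_V$. Your explicit treatment of the uniform-length equivalence via $\operatorname{dist}(s,t)=t-s$, and of the Cayley--Hamilton reduction, fills in steps the paper leaves implicit.
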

\begin{proof}
Contracting the chain diagram multiplies matrices along the interface wires.
Detector and observable ports are free and do not feed forward.
The only path from $\mathsf{F}_s$ to $\mathsf{D}_t$ therefore enters the interface through the fault columns $B^{s}_F$ and $V^{s}_F$ of $s$, travels through the fault-free transfers of the intermediate calls, and exits through the interface columns $D^{t}_B$ and $D^{t}_V$ of the detector rows of $t$.
\end{proof}
Bounded detector span does not require the Pauli effect of a fault to vanish.
The interface transfer may carry the residual forward indefinitely; what Eq.~\eqref{eq:chain-span-criterion} demands is that the detector rows of later calls be blind to the transported residual, which happens when the boundary coordinates and the virtual syndromes report it in a way that cancels in $\begin{bmatrix} D^{g}_B & D^{g}_V \end{bmatrix}$.
Only the support of the residual on newly declared detector rows closes, and this is what lets the global EDEM grow without retained detector history.
Repeated stabilizer syndrome extraction is the simplest instance: the blocks of Sec.~\ref{sec:symbolic-memory} satisfy Eq.~\eqref{eq:chain-span-criterion} with $w=1$, and Eq.~\eqref{eq:chain-block-entry} reproduces the closed form of Eq.~\eqref{eq:detector-flip-closed}.
Appendix~\ref{app:generalized-syndrome-extraction} shows that subsystem and Floquet extraction cycles satisfy the same criterion under the locality condition of Eq.~\eqref{eq:generalized-se-locality-condition}, so the two-block band of Fig.~\ref{fig:commit-buffer-banded-dem} covers these cases as well.

A gadget that measures no stabilizers, such as a transversal Clifford gadget, declares no detectors and transports the incoming virtual-syndrome flips to its outputs through a nonzero block $V^g_V$, which relabels and may merge them.
Each such call extends the span by one edge until a later call measures the transported stabilizers; the GHZ experiment of Sec.~\ref{sec:implementation-examples} has measured lower block bandwidth $b=n+1$ for $n$ code blocks, which bounds its detector span.
Windowing remains possible as long as the span stays bounded.
For a general gadget circuit, the entry $(A_{\mathsf{D}})_{\mathsf{D}_t,\mathsf{F}_s}$ is a sum over directed paths from $s$ to $t$ of products of the form in Eq.~\eqref{eq:chain-block-entry}, and detector span $w$ follows if every path product of length exceeding $w$ vanishes.
Since the blocks in Eq.~\eqref{eq:chain-block-entry} are the matrix variables of Sec.~\ref{sec:symbolic-construction}, the span of a fault-tolerant instruction set can be verified symbolically, once per instruction set, with the machinery of Sec.~\ref{sec:symbolic-construction}.

\subsection{Sequential sliding-window decoding}
\label{sec:sequential-windows}

Sequential sliding-window decoding chooses the decoding order to agree with the causal order of the gadget calls.
This is a scheduling choice, additional to the EDEM construction.
Windowed decoding has a classical counterpart for convolutional codes~\cite{Iyengar2012windoweddecoding} and quantum variants including modular, parallel-window, and sandwich decoding~\cite{Bombin2023modulardecoding,Skoric2023parallelwindowdecoding,Tan2023scalablesurfacecodedecoders,Huang2024slidingwindownoisysyndrome}.

At step $j$, the inference window $W_j$ contains the commit window $C_j$ and a buffer $W_j\setminus C_j$.
The cumulative committed region is a down-set of the causal order: every strict causal predecessor of a call in $C_j$ lies either in $C_j$ or in an earlier commit window.
Consequently, every wire entering $C_j$ is fixed by an earlier commit or by a physical preparation boundary.
For a chain, the input frontier of $W_j$ is the commit boundary of $C_{j-1}$.
The decoder solves Eq.~\eqref{eq:window-decoding-problem}, commits the restriction $\kappa_j$ of its estimate to $C_j$, and folds its effect into the frontier using Proposition~\ref{prop:commit-and-fold}.
The buffer estimates remain provisional: their fault variables are included again in the next inference problem, together with newly available calls.
The next window EDEM is assembled from the retained and newly appended per-gadget blocks.
A corrected logical outcome is released once its measurement data are available and every fault contributing to its flip has a committed correction.

Detector span bounds the forward reach of a committed fault's direct detector effects.
Including the forward neighbourhood of $C_j$ through depth $w$ therefore includes every detector row that its faults can flip.
This gives a structural coverage condition for the inference window; the buffer needed for reliable commitment against multi-fault configurations also depends on the code, noise model, and decoder.
For the matchable examples of Ref.~\cite{Bombin2023modulardecoding}, sufficiently large buffers approach monolithic-decoder performance; qLDPC examples with single-shot redundancy can benefit from windows containing only a few rounds~\cite{Huang2024slidingwindownoisysyndrome}.

Streaming decoders such as \emph{Frontier}~\cite{Leverrier2026narrowfrontiers} and \emph{Snowflake}~\cite{Chan2026snowflake}, and the sliding-window wrapper of CUDA-Q QEC~\cite{NVIDIA2026cudaqqecslidingwindow}, motivate this interface between incrementally supplied error-model blocks and a retained decoding state.
The EDEM interface describes deterministic fault effects; it does not by itself specify a decoder's state or the probabilistic information needed for noise correlations across windows.

\subsection{Quantum execution and decoding order}
\label{sec:decoder-tasking}

The quantum circuit and the decoding tasks have different dependency graphs.
Quantum execution follows the causal order $\preceq$ of gadget calls.
Decoding follows the availability of measurement data and of interface information supplied by other decoding tasks.
Both orders are acyclic, but they need not agree: a task may use data from a physically later region before an earlier region has been decoded~\cite{Bombin2023modulardecoding,Skoric2023parallelwindowdecoding,Tan2023scalablesurfacecodedecoders}.

A decoding task $\tau$ has visible detectors $V_\tau\subseteq\mathsf{D}$, an \emph{inference fault set} $U_\tau\subseteq\mathsf{F}$, and a commit set $C_\tau\subseteq U_\tau$ that it owns.
The variables indexed by $U_\tau\setminus C_\tau$ provide buffer context; only the estimate on $C_\tau$ is committed.
Transfers of committed interface information define the edges of an acyclic task graph.
Write $\rho\leadsto\tau$ when $\tau$ depends on $\rho$ through a directed path in this graph.
A task can start once its measurement data and all required predecessor outputs are available.
Including another task's fault variables provisionally in a buffer does not create a dependency on that task's eventual commit.
For fixed global detector coordinates, predecessor commits enter as
\begin{equation}
    \sigma_\tau
    =\Delta d_{V_\tau}
    +\sum_{\rho\leadsto\tau}
    (A_{\mathsf{D}})_{V_\tau,C_\rho}\kappa_\rho.
    \label{eq:decoder-task-syndrome-update}
\end{equation}
Faults already accounted for by these offsets are excluded from $U_\tau$.
Every fault has one owner, and the schedule must arrange that each detector is finally satisfied after all commits that affect it.
At an open boundary, contributions not supplied by predecessors must be included in the local inference model or the affected detector constraints deferred.
These are requirements on the decoding decomposition; EDEM composition supplies fault-effect maps, not a guarantee of decoding accuracy.

The edge--vertex construction of Ref.~\cite{Bombin2023modulardecoding} gives a task graph with two layers.
Edge tasks first decode interface regions, using neighbouring block interiors as buffers.
They have no decoding predecessors and can run independently as soon as their required measurement data are available.
Their commits provide boundary conditions to vertex tasks, which then decode the remaining block interiors.
The committed interfaces separate these remaining problems, so the vertex tasks have closed boundaries and require no further buffers.
In a task-oriented interface description, the incoming wires of each vertex task are thus all closed, including those receiving information from physically later regions.
The parallel-window and sandwich constructions of Refs.~\cite{Skoric2023parallelwindowdecoding,Tan2023scalablesurfacecodedecoders} similarly perform independent buffered tasks followed by tasks that reconcile the intervening regions.
These schedules have two decoding layers, independently of the depth of the quantum circuit.
A general decoding task graph may have greater depth; its dependencies are determined by the chosen handoffs.
In every case, a task must also wait for its required measurement data, and adaptive quantum operations must wait for the decoded outcomes that control them.

EDEMs provide the local fault effects, virtual syndromes, and boundary-error coordinates needed for such handoffs.
The presentation in Secs.~\ref{sec:preliminaries-diagrams}--\ref{sec:gadget-dems}, however, assigns interface inputs and outputs according to the physical input and output codes of each gadget.
Section~\ref{sec:sequential-windows} uses this agreement between physical direction and decoding information flow.
For more general schedules, interface information should flow from the task that produces it to the task that consumes it, whichever physical side of a cut they occupy.
This concerns both virtual-syndrome information and the boundary-error effects represented in a committed correction.

We expect the EDEM perspective to extend to this separate task orientation when the required handoffs admit sufficient interface summaries and well-defined local maps, preserve the composed detector and observable effects, and have acyclic data dependencies.
For the physical orientation, Proposition~\ref{prop:commit-and-fold} establishes such a factorization; a different orientation requires its own compatible interface maps.
For each chosen task orientation, the exported interface data must be computable from the task's available measurements, incoming interface data, and committed correction.
The resulting handoffs must also form an acyclic dependency graph.
The required generalization is therefore to the choice of interface inputs and outputs and their maps, while retaining the compositional description of fault effects.
We leave its general conditions and implementation to future work.

\section{Conclusion}
\label{sec:conclusion}

We have shown how to construct the detector error model of a fault-tolerant circuit from the extended detector error models of its gadgets.
The detector-cutting lemmas of Sec.~\ref{sec:cutting-detectors} justify splitting detectors at gadget boundaries using virtual syndromes, and the extended detector error model of Sec.~\ref{sec:gadget-dems} packages each gadget's fault effects, together with virtual-syndrome interfaces and boundary-error coordinates, into a linear map over $\Ftwo$ that composes the way the gadgets do.
Because the global DEM is then a contraction of a diagram of linear maps, it can be evaluated symbolically: blocks become matrix variables, repeated blocks are evaluated once, and structural facts, such as the banded structure of repeated-syndrome-extraction DEMs, can be derived in closed form (Sec.~\ref{sec:symbolic-construction}).
A Rust prototype reproduces the DEMs that Stim derives from the corresponding flat circuits, for experiments spanning four code families, and realizes the practical benefits of analyzing each gadget type once (Sec.~\ref{sec:examples-implementation}).
Finally, the EDEM of any convex set of gadget calls is the decoding problem of a window; causality and detector span, the latter certified from the gadget blocks, make the channel-check matrix block local with respect to the gadget circuit, and committed corrections are folded into the frontier through the interface (Sec.~\ref{sec:sliding-window-decoding}).

The remaining limitations concern both the scope of the formalism and the capabilities of the prototype.
The formalism assumes stochastic Pauli noise on stabilizer circuits, and composition recovers the detector basis induced by the chosen gadget detector contracts rather than an arbitrary externally specified basis.
Noise correlations that cross gadget boundaries must be carried as explicit interface data.
On the practical side, the prototype does not yet extract window EDEMs and supports no cross-gadget feed-forward beyond parity-controlled Pauli corrections.
Quantum execution and decoding tasks can follow different dependency orders; expressing their handoffs through task-oriented EDEM interfaces requires compatible maps for virtual-syndrome information and boundary-error effects (Sec.~\ref{sec:decoder-tasking}).

A natural next step is to move beyond stabilizer logical action.
Other directions include richer noise models, implementing window extraction on top of the per-call partition the composition already produces, allowing EDEM interface inputs and outputs to follow decoding dependencies, together with an appropriate probabilistic treatment of open cuts, and sharing symbolic evaluations across protocol families.

\section*{Use of AI tools}

AI tools were used to generate the initial LaTeX project boilerplate and drafts of Secs.~\ref{sec:introduction}, \ref{sec:examples-implementation}, and~\ref{sec:conclusion}.
They also assisted with language and grammar editing throughout the manuscript, reorganizing material between the preliminaries and appendices, and identifying gaps in the presentation and unclear passages.
The proof exposition in Sec.~\ref{sec:cutting-detectors-formal-foundations} was revised with AI assistance.
The proofs in Appendix~\ref{app:further-math-preliminaries} were drafted with assistance from a language model and checked by the authors.

Many figures were generated from hand-drawn originals using AI tools.
The Rust prototype and SymPy verification scripts were developed with extensive assistance from AI coding tools.

\section*{Acknowledgments}

We thank members of the NVIDIA Quantum teams for many helpful discussions.

\bibliographystyle{quantum}
\bibliography{references}

@article{Gidney2021stimfaststabilizer,
  author    = {Gidney, Craig},
  title     = {{Stim}: a fast stabilizer circuit simulator},
  journal   = {Quantum},
  volume    = {5},
  pages     = {497},
  month     = jul,
  year      = {2021},
  doi       = {10.22331/q-2021-07-06-497},
}

@website{GidneyStimflow,
  author       = {Gidney, Craig},
  title        = {{stimflow}: annealed utilities for creating {QEC} circuits},
  note         = {Accessed 10 September 2026},
  url          = {https://github.com/quantumlib/Stim/tree/main/glue/stimflow},
}

@article{Terhal2015quantumerrorcorrection,
  author    = {Terhal, Barbara M.},
  title     = {Quantum error correction for quantum memories},
  journal   = {Reviews of Modern Physics},
  volume    = {87},
  number    = {2},
  pages     = {307--346},
  month     = apr,
  year      = {2015},
  publisher = {American Physical Society},
  doi       = {10.1103/RevModPhys.87.307},
  url       = {https://doi.org/10.1103/RevModPhys.87.307},
}

@article{Bombin2023logicalblocks,
  author    = {Bomb{\'i}n, H{\'e}ctor and Dawson, Chris and Mishmash, Ryan V. and Nickerson, Naomi and Pastawski, Fernando and Roberts, Sam},
  title     = {Logical blocks for fault-tolerant topological quantum computation},
  journal   = {PRX Quantum},
  volume    = {4},
  number    = {2},
  pages     = {020303},
  month     = apr,
  year      = {2023},
  publisher = {American Physical Society},
  doi       = {10.1103/PRXQuantum.4.020303},
  url       = {https://doi.org/10.1103/PRXQuantum.4.020303},
}

@article{Bartolucci2023fusionbased,
  author    = {Bartolucci, Sara and Birchall, Patrick and Bomb{\'i}n, H{\'e}ctor and Cable, Hugo and Dawson, Chris and Gimeno-Segovia, Mercedes and Johnston, Eric and Kieling, Konrad and Nickerson, Naomi and Pant, Mihir and Pastawski, Fernando and Rudolph, Terry and Sparrow, Chris},
  title     = {Fusion-based quantum computation},
  journal   = {Nature Communications},
  volume    = {14},
  number    = {1},
  pages     = {912},
  month     = feb,
  year      = {2023},
  publisher = {Springer Nature},
  doi       = {10.1038/s41467-023-36493-1},
  url       = {https://doi.org/10.1038/s41467-023-36493-1},
}

@article{Iyengar2012windoweddecoding,
  author    = {Iyengar, Aravind R. and Papaleo, Marco and Siegel, Paul H. and Wolf, Jack Keil and Vanelli-Coralli, Alessandro and Corazza, Giovanni E.},
  title     = {Windowed decoding of protograph-based {LDPC} convolutional codes over erasure channels},
  journal   = {IEEE Transactions on Information Theory},
  volume    = {58},
  number    = {4},
  pages     = {2303--2320},
  month     = apr,
  year      = {2012},
  publisher = {IEEE},
  doi       = {10.1109/TIT.2011.2177439},
  url       = {https://doi.org/10.1109/TIT.2011.2177439},
}

@article{Skoric2023parallelwindowdecoding,
  author    = {Skoric, Luka and Browne, Dan E. and Barnes, Kenton M. and Gillespie, Neil I. and Campbell, Earl T.},
  title     = {Parallel window decoding enables scalable fault tolerant quantum computation},
  journal   = {Nature Communications},
  volume    = {14},
  number    = {1},
  pages     = {7040},
  month     = nov,
  year      = {2023},
  publisher = {Springer Nature},
  doi       = {10.1038/s41467-023-42482-1},
  url       = {https://doi.org/10.1038/s41467-023-42482-1},
}

@article{Tan2023scalablesurfacecodedecoders,
  author    = {Tan, Xinyu and Zhang, Fang and Chao, Rui and Shi, Yaoyun and Chen, Jianxin},
  title     = {Scalable surface-code decoders with parallelization in time},
  journal   = {PRX Quantum},
  volume    = {4},
  number    = {4},
  pages     = {040344},
  month     = dec,
  year      = {2023},
  publisher = {American Physical Society},
  doi       = {10.1103/PRXQuantum.4.040344},
  url       = {https://doi.org/10.1103/PRXQuantum.4.040344},
}

@article{Huang2024slidingwindownoisysyndrome,
  author    = {Huang, Shilin and Puri, Shruti},
  title     = {Increasing memory lifetime of quantum low-density parity check codes with sliding-window noisy syndrome decoding},
  journal   = {Physical Review A},
  volume    = {110},
  number    = {1},
  pages     = {012453},
  month     = jul,
  year      = {2024},
  publisher = {American Physical Society},
  doi       = {10.1103/PhysRevA.110.012453},
  url       = {https://doi.org/10.1103/PhysRevA.110.012453},
}

@misc{Leverrier2026narrowfrontiers,
  author        = {Leverrier, Anthony and Urbanke, R{\"u}diger},
  title         = {Approximating optimal decoding of quantum {LDPC} codes with narrow frontiers},
  month         = jun,
  year          = {2026},
  eprint        = {2606.20513},
  archiveprefix = {arXiv},
  primaryclass  = {quant-ph},
  doi           = {10.48550/arXiv.2606.20513},
  url           = {https://doi.org/10.48550/arXiv.2606.20513},
}

@article{Chan2026snowflake,
  author    = {Chan, Tim},
  title     = {Snowflake: {A} distributed streaming decoder},
  journal   = {Quantum},
  volume    = {10},
  pages     = {2033},
  month     = mar,
  year      = {2026},
  publisher = {Verein zur F{\"o}rderung des Open Access Publizierens in den Quantenwissenschaften},
  doi       = {10.22331/q-2026-03-20-2033},
  url       = {https://doi.org/10.22331/q-2026-03-20-2033},
}

@misc{NVIDIA2026cudaqqecslidingwindow,
  author       = {{NVIDIA Corporation, CUDA-QX Development Team}},
  title        = {{CUDA-Q QEC C++ API}: Sliding Window Decoder},
  howpublished = {{CUDA-QX} documentation, version 0.7.0},
  year         = {2026},
  note         = {Accessed 27 August 2026},
  url          = {https://nvidia.github.io/cudaqx/api/qec/cpp_api.html#sliding-window-decoder},
}

@article{Battistel2023realtimedecoding,
  author    = {Battistel, Francesco and Chamberland, Christopher and Johar, Kauser and Overwater, Ramon W. J. and Sebastiano, Fabio and Skoric, Luka and Ueno, Yosuke and Usman, Muhammad},
  title     = {Real-time decoding for fault-tolerant quantum computing: progress, challenges and outlook},
  journal   = {Nano Futures},
  volume    = {7},
  number    = {3},
  pages     = {032003},
  month     = aug,
  year      = {2023},
  publisher = {IOP Publishing},
  doi       = {10.1088/2399-1984/aceba6},
  url       = {https://doi.org/10.1088/2399-1984/aceba6},
}

@misc{Gidney2019flexiblelayout,
  author        = {Gidney, Craig and Fowler, Austin G.},
  title         = {Flexible layout of surface code computations using {{AutoCCZ}} states},
  month         = may,
  year          = {2019},
  eprint        = {1905.08916},
  archiveprefix = {arXiv},
  primaryclass  = {quant-ph},
  doi           = {10.48550/arXiv.1905.08916},
  url           = {https://doi.org/10.48550/arXiv.1905.08916},
}

@misc{Bombin2023modulardecoding,
  author        = {Bomb{\'i}n, H{\'e}ctor and Dawson, Chris and Liu, Ye-Hua and Nickerson, Naomi and Pastawski, Fernando and Roberts, Sam},
  title         = {Modular decoding: parallelizable real-time decoding for quantum computers},
  month         = mar,
  year          = {2023},
  eprint        = {2303.04846},
  archiveprefix = {arXiv},
  primaryclass  = {quant-ph},
  doi           = {10.48550/arXiv.2303.04846},
  url           = {https://doi.org/10.48550/arXiv.2303.04846},
}

@article{Bombin2024unifyingflavorsof,
  author    = {Bombin, Hector and Litinski, Daniel and Nickerson, Naomi and Pastawski, Fernando and Roberts, Sam},
  title     = {Unifying flavors of fault tolerance with the {ZX} calculus},
  journal   = {Quantum},
  volume    = {8},
  pages     = {1379},
  month     = jun,
  year      = {2024},
  publisher = {Verein zur F{\"o}rderung des Open Access Publizierens in den Quantenwissenschaften},
  doi       = {10.22331/q-2024-06-18-1379},
  url       = {https://doi.org/10.22331/q-2024-06-18-1379},
}

@misc{Delfosse2023spacetimecodes,
  author        = {Delfosse, Nicolas and Paetznick, Adam},
  title         = {Spacetime codes of {Clifford} circuits},
  month         = apr,
  year          = {2023},
  eprint        = {2304.05943},
  archiveprefix = {arXiv},
  primaryclass  = {quant-ph},
  doi           = {10.48550/arXiv.2304.05943},
  url           = {https://doi.org/10.48550/arXiv.2304.05943},
}

@misc{Kliuchnikov2023stabilizercircuitverification,
  author        = {Kliuchnikov, Vadym and Beverland, Michael and Paetznick, Adam},
  title         = {Stabilizer circuit verification},
  month         = sep,
  year          = {2023},
  eprint        = {2309.08676},
  archiveprefix = {arXiv},
  primaryclass  = {quant-ph},
  doi           = {10.48550/arXiv.2309.08676},
  url           = {https://doi.org/10.48550/arXiv.2309.08676},
}

@misc{Beverland2024faulttolerance,
  author        = {Beverland, Michael E. and Huang, Shilin and Kliuchnikov, Vadym},
  title         = {Fault tolerance of stabilizer channels},
  month         = jan,
  year          = {2024},
  eprint        = {2401.12017},
  archiveprefix = {arXiv},
  primaryclass  = {quant-ph},
  doi           = {10.48550/arXiv.2401.12017},
  url           = {https://doi.org/10.48550/arXiv.2401.12017},
}

@article{Derks2025designingfault,
  author    = {Derks, Peter-Jan H.S. and Townsend-Teague, Alex and Burchards, Ansgar G. and Eisert, Jens},
  title     = {Designing fault-tolerant circuits using detector error models},
  journal   = {Quantum},
  volume    = {9},
  pages     = {1905},
  month     = nov,
  year      = {2025},
  publisher = {Verein zur F{\"o}rderung des Open Access Publizierens in den Quantenwissenschaften},
  doi       = {10.22331/q-2025-11-06-1905},
  url       = {https://doi.org/10.22331/q-2025-11-06-1905},
}

@misc{Rodatz2025faulttolerancebyconstruction,
  author        = {Rodatz, Benjamin and Po{\'o}r, Boldizs{\'a}r and Kissinger, Aleks},
  title         = {Fault tolerance by construction},
  month         = jun,
  year          = {2025},
  eprint        = {2506.17181},
  archiveprefix = {arXiv},
  primaryclass  = {quant-ph},
  doi           = {10.48550/arXiv.2506.17181},
  url           = {https://doi.org/10.48550/arXiv.2506.17181},
}

@article{Dennis2002topologicalmemory,
  author    = {Dennis, Eric and Kitaev, Alexei and Landahl, Andrew and Preskill, John},
  title     = {Topological quantum memory},
  journal   = {Journal of Mathematical Physics},
  volume    = {43},
  number    = {9},
  pages     = {4452--4505},
  month     = sep,
  year      = {2002},
  publisher = {AIP Publishing},
  doi       = {10.1063/1.1499754},
  url       = {https://doi.org/10.1063/1.1499754},
}

@article{Fowler2012surfacecodes,
  author    = {Fowler, Austin G. and Mariantoni, Matteo and Martinis, John M. and Cleland, Andrew N.},
  title     = {Surface codes: towards practical large-scale quantum computation},
  journal   = {Physical Review A},
  volume    = {86},
  number    = {3},
  pages     = {032324},
  month     = sep,
  year      = {2012},
  publisher = {American Physical Society},
  doi       = {10.1103/PhysRevA.86.032324},
  url       = {https://doi.org/10.1103/PhysRevA.86.032324},
}

@misc{Aliferis2006accuracythreshold,
  author        = {Aliferis, Panos and Gottesman, Daniel and Preskill, John},
  title         = {Quantum accuracy threshold for concatenated distance-3 codes},
  month         = apr,
  year          = {2005},
  eprint        = {quant-ph/0504218},
  archiveprefix = {arXiv},
  primaryclass  = {quant-ph},
  doi           = {10.48550/arXiv.quant-ph/0504218},
  url           = {https://doi.org/10.48550/arXiv.quant-ph/0504218},
  note          = {Published in Quantum Information and Computation 6, 97--165 (2006)},
}

@article{Horsman2012latticesurgery,
  author    = {Horsman, Dominic and Fowler, Austin G. and Devitt, Simon and Van Meter, Rodney},
  title     = {Surface code quantum computing by lattice surgery},
  journal   = {New Journal of Physics},
  volume    = {14},
  number    = {12},
  pages     = {123011},
  month     = dec,
  year      = {2012},
  publisher = {IOP Publishing},
  doi       = {10.1088/1367-2630/14/12/123011},
  url       = {https://doi.org/10.1088/1367-2630/14/12/123011},
}

@article{Litinski2019gameofsurfacecodes,
  author    = {Litinski, Daniel},
  title     = {{A Game of Surface Codes}: large-scale quantum computing with lattice surgery},
  journal   = {Quantum},
  volume    = {3},
  pages     = {128},
  month     = mar,
  year      = {2019},
  publisher = {Verein zur F{\"o}rderung des Open Access Publizierens in den Quantenwissenschaften},
  doi       = {10.22331/q-2019-03-05-128},
  url       = {https://doi.org/10.22331/q-2019-03-05-128},
}

@article{GoogleQuantumAI2025belowthreshold,
  author    = {{Google Quantum AI and Collaborators}},
  title     = {Quantum error correction below the surface code threshold},
  journal   = {Nature},
  volume    = {638},
  number    = {8052},
  pages     = {920--926},
  month     = feb,
  year      = {2025},
  publisher = {Springer Nature},
  doi       = {10.1038/s41586-024-08449-y},
  url       = {https://doi.org/10.1038/s41586-024-08449-y},
}

@article{Higgott2025sparseblossom,
  author    = {Higgott, Oscar and Gidney, Craig},
  title     = {{Sparse Blossom}: correcting a million errors per core second with minimum-weight matching},
  journal   = {Quantum},
  volume    = {9},
  pages     = {1600},
  month     = jan,
  year      = {2025},
  publisher = {Verein zur F{\"o}rderung des Open Access Publizierens in den Quantenwissenschaften},
  doi       = {10.22331/q-2025-01-20-1600},
  url       = {https://doi.org/10.22331/q-2025-01-20-1600},
}

@article{Delfosse2021unionfind,
  author    = {Delfosse, Nicolas and Nickerson, Naomi H.},
  title     = {Almost-linear time decoding algorithm for topological codes},
  journal   = {Quantum},
  volume    = {5},
  pages     = {595},
  month     = dec,
  year      = {2021},
  publisher = {Verein zur F{\"o}rderung des Open Access Publizierens in den Quantenwissenschaften},
  doi       = {10.22331/q-2021-12-02-595},
  url       = {https://doi.org/10.22331/q-2021-12-02-595},
}

@article{Panteleev2021degeneratequantumldpc,
  author    = {Panteleev, Pavel and Kalachev, Gleb},
  title     = {Degenerate quantum {LDPC} codes with good finite length performance},
  journal   = {Quantum},
  volume    = {5},
  pages     = {585},
  month     = nov,
  year      = {2021},
  publisher = {Verein zur F{\"o}rderung des Open Access Publizierens in den Quantenwissenschaften},
  doi       = {10.22331/q-2021-11-22-585},
  url       = {https://doi.org/10.22331/q-2021-11-22-585},
}

@article{Chamberland2018cliffordframe,
  author    = {Chamberland, Christopher and Iyer, Pavithran and Poulin, David},
  title     = {Fault-tolerant quantum computing in the {Pauli} or {Clifford} frame with slow error diagnostics},
  journal   = {Quantum},
  volume    = {2},
  pages     = {43},
  month     = jan,
  year      = {2018},
  publisher = {Verein zur F{\"o}rderung des Open Access Publizierens in den Quantenwissenschaften},
  doi       = {10.22331/q-2018-01-04-43},
  url       = {https://doi.org/10.22331/q-2018-01-04-43},
}

@article{Knill2005realisticallynoisy,
  author    = {Knill, Emanuel},
  title     = {Quantum computing with realistically noisy devices},
  journal   = {Nature},
  volume    = {434},
  number    = {7029},
  pages     = {39--44},
  month     = mar,
  year      = {2005},
  publisher = {Springer Nature},
  doi       = {10.1038/nature03350},
  url       = {https://doi.org/10.1038/nature03350},
}

@article{McEwen2023relaxinghardware,
  author    = {McEwen, Matt and Bacon, Dave and Gidney, Craig},
  title     = {Relaxing hardware requirements for surface code circuits using time-dynamics},
  journal   = {Quantum},
  volume    = {7},
  pages     = {1172},
  month     = nov,
  year      = {2023},
  publisher = {Verein zur F{\"o}rderung des Open Access Publizierens in den Quantenwissenschaften},
  doi       = {10.22331/q-2023-11-07-1172},
  url       = {https://doi.org/10.22331/q-2023-11-07-1172},
}

@article{Caune2026realtimedecoding,
  author    = {Caune, Laura and Skoric, Luka and Blunt, Nick S. and Ruban, Archibald and McDaniel, Jimmy and Valery, Joseph A. and Patterson, Andrew D. and Gramolin, Alexander V. and Majaniemi, Joonas and Barnes, Kenton M. and Bialas, Tomasz and Bu{\u g}dayc{\i}, Okan and Crawford, Ophelia and Geh{\'e}r, Gy{\"o}rgy P. and Krovi, Hari and Matekole, Elisha and Topal, Canberk and Poletto, Stefano and Bryant, Michael and Snyder, Kalan and Gillespie, Neil I. and Jones, Glenn and Johar, Kauser and Campbell, Earl T. and Hill, Alexander D.},
  title     = {Demonstrating real-time and low-latency quantum error correction with superconducting qubits},
  journal   = {Nature Communications},
  volume    = {17},
  pages     = {7383},
  month     = jun,
  year      = {2026},
  publisher = {Springer Nature},
  doi       = {10.1038/s41467-026-73331-6},
  url       = {https://doi.org/10.1038/s41467-026-73331-6},
}

@repository{DEQ,
  author  = {{Microsoft}},
  title   = {{deq: dynamic and generic QEC decoding system}},
  year    = {2026},
  code    = {https://github.com/microsoft/qdk-ec/tree/96dc7dee7393a5650db439c6e1ac192dc1101578/deq},
  version = {0.4.2},
  commit  = {96dc7dee7393a5650db439c6e1ac192dc1101578},
}

@article{Choi1974,
author = {Man-Duen Choi},
title = {{A schwarz inequality for positive linear maps on $C^{\ast}$-algebras}},
volume = {18},
journal = {Illinois Journal of Mathematics},
number = {4},
publisher = {Duke University Press},
pages = {565 -- 574},
year = {1974},
doi = {10.1215/ijm/1256051007},
URL = {https://doi.org/10.1215/ijm/1256051007}
}

@article{deGroot2022symmetryprotected,
  doi = {10.22331/q-2022-11-10-856},
  url = {https://doi.org/10.22331/q-2022-11-10-856},
  title = {Symmetry {P}rotected {T}opological {O}rder in {O}pen {Q}uantum {S}ystems},
  author = {de Groot, Caroline and Turzillo, Alex and Schuch, Norbert},
  journal = {{Quantum}},
  issn = {2521-327X},
  publisher = {{Verein zur F{\"{o}}rderung des Open Access Publizierens in den Quantenwissenschaften}},
  volume = {6},
  pages = {856},
  month = nov,
  year = {2022}
}

@article{Yashin2025,
   author = {Vsevolod I. Yashin and Maria A. Elovenkova},
   doi = {10.1007/s11128-025-04682-0},
   issn = {1573-1332},
   issue = {3},
   journal = {Quantum Information Processing},
   month = {3},
   pages = {99},
   title = {Characterization of non-adaptive Clifford channels},
   volume = {24},
   year = {2025}
}

@misc{Gidney2024magicstatecultivation,
  author        = {Gidney, Craig and Shutty, Noah and Jones, Cody},
  title         = {Magic state cultivation: growing {T} states as cheap as {CNOT} gates},
  month         = sep,
  year          = {2024},
  eprint        = {2409.17595},
  archiveprefix = {arXiv},
  primaryclass  = {quant-ph},
  doi           = {10.48550/arXiv.2409.17595},
  url           = {https://doi.org/10.48550/arXiv.2409.17595},
}

@article{Bombin2015gaugecolorcodes,
  author    = {Bomb{\'i}n, H{\'e}ctor},
  title     = {Gauge color codes: optimal transversal gates and gauge fixing in topological stabilizer codes},
  journal   = {New Journal of Physics},
  volume    = {17},
  number    = {8},
  pages     = {083002},
  month     = aug,
  year      = {2015},
  doi       = {10.1088/1367-2630/17/8/083002},
}

@article{Hastings2021dynamicallygenerated,
  author    = {Hastings, Matthew B. and Haah, Jeongwan},
  title     = {Dynamically Generated Logical Qubits},
  journal   = {Quantum},
  volume    = {5},
  pages     = {564},
  month     = oct,
  year      = {2021},
  publisher = {Verein zur F{\"o}rderung des Open Access Publizierens in den Quantenwissenschaften},
  doi       = {10.22331/q-2021-10-19-564},
  url       = {https://doi.org/10.22331/q-2021-10-19-564},
}

@unpublished{WuDeq,
  author = {Wu and others},
  title  = {{deq}: Universal Quantum Decoding at Runtime},
  note   = {In preparation}
}

@article{Lafont2003,
title = {Towards an algebraic theory of Boolean circuits},
journal = {Journal of Pure and Applied Algebra},
volume = {184},
number = {2},
pages = {257-310},
year = {2003},
issn = {0022-4049},
doi = {10.1016/S0022-4049(03)00069-0},
author = {Yves Lafont}
}

\appendix
\section{Further stabilizer preliminaries}
\label{app:further-stabilizer-preliminaries}

This appendix records the Choi-operator facts behind Sec.~\ref{sec:stabilizer-maps}, proves Lemma~\ref{lem:pauli-covariance} through a stabilizer dilation, verifies that the elementary operations of Sec.~\ref{sec:preliminaries-circuits} are stabilizer channels, and compares our stabilizer channels with the outcome-complete circuits of Ref.~\cite{Kliuchnikov2023stabilizercircuitverification}.

\subsection{Stabilizer Choi operators and dilations}
\label{app:stabilizer-channel-dilations}

For a superoperator $\Phi:\mathcal{L}(\Hilb{A})\rightarrow\mathcal{L}(\Hilb{B})$, the Choi operator of Sec.~\ref{sec:stabilizer-maps} is $J(\Phi):=(\Phi\otimes\operatorname{id}_{A'})(\proj{\Omega_A})$, where $\ket{\Omega_A}$ is the unnormalized maximally entangled vector between $A$ and its reference copy $A'$.
We call a nonzero $\Phi$ a \emph{stabilizer superoperator} if $J(\Phi)$ is a stabilizer operator, or equivalently if $\opket{J(\Phi)}$ is proportional to a pure stabilizer state.
An $n_{\mathrm{in}}$-qubit to $n_{\mathrm{out}}$-qubit stabilizer superoperator is therefore determined up to a scalar by $2(n_{\mathrm{in}}+n_{\mathrm{out}})$ independent Pauli stabilizer equations on $\opket{J(\Phi)}$, and after absorbing the computational-basis transposes arising from vectorization into the input Paulis each of them takes the four-slot form of Eq.~\eqref{eq:stabilizer-superoperator-constraint}.
Complete positivity and trace preservation are additional conditions, equivalent respectively to $J(\Phi)\geq 0$ and $\tr_B J(\Phi)=I_{A'}$.
A completely positive, trace-non-increasing stabilizer superoperator is a \emph{stabilizer map}, with the zero map included by convention, and a trace-preserving stabilizer map is a stabilizer channel in the sense of Sec.~\ref{sec:stabilizer-maps}.
Stabilizer maps include the selective projections $\rho\mapsto\Pi_S\rho\Pi_S$ and the stabilizer effects $\rho\mapsto\tr(\Pi_S\rho)$, which need not be channels.
Every stabilizer operator $Q$ induces the stabilizer superoperator $\operatorname{Ad}_Q:\rho\mapsto Q\rho Q^\dagger$ with the single Kraus operator $Q$, and a \emph{stabilizer isometry} is an isometry that is a stabilizer operator.
Tensor products and nonzero serial compositions of stabilizer superoperators are again stabilizer superoperators, since they correspond to tensor products and stabilizer contractions of their vectorized Choi operators.

The rest of this subsection compares two descriptions of a stabilizer channel.
The Choi description used in Sec.~\ref{sec:stabilizer-maps} is intrinsic, whereas a Stinespring dilation introduces a nonunique environment that records discarded information.
For completely positive, trace-preserving maps, the two descriptions define the same class~\cite[Theorem~1, items~(4)--(5)]{Yashin2025}; we include a self-contained proof below.

\begin{proposition}[Stabilizer dilation]
    \label{prop:stabilizer-choi-dilation}
    Let $\Phi:\mathcal{L}(\mathcal{H}_A)\rightarrow\mathcal{L}(\mathcal{H}_B)$ be completely positive and trace preserving.
    Then $J(\Phi)$ is a stabilizer operator if and only if there exist an environment $E$ and a stabilizer isometry
    \begin{equation}
        V:
        \mathcal{H}_A
        \longrightarrow
        \mathcal{H}_B\otimes\mathcal{H}_E
    \end{equation}
    such that
    \begin{equation}
        \Phi(\rho)
        =
        \operatorname{Tr}_E\!\left(V\rho V^\dagger\right).
        \label{eq:stabilizer-stinespring-dilation}
    \end{equation}
\end{proposition}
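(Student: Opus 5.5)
The two directions are of unequal difficulty, and I will treat them separately.

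\emph{Dilation implies stabilizer Choi operator.} Suppose $\Phi=\operatorname{Tr}_E\circ\operatorname{Ad}_V$ with $V$ a stabilizer isometry. Then $\operatorname{Ad}_V$ is a stabilizer superoperator by the remarks preceding the proposition. The partial trace $\operatorname{Tr}_E$ is itself a stabilizer superoperator: its vectorized Choi operator is $\opket{I_E}$, which is a Bell state on $E$ and its reference copy, hence a stabilizer state. The serial composition of two stabilizer superoperators is a stabilizer superoperator whenever it is nonzero. Here it is nonzero because it is trace preserving. Hence $J(\Phi)$ is a stabilizer operator.

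\emph{Stabilizer Choi operator implies dilation.} This is the substantive direction, and I would construct the dilation by purifying the Choi operator inside the stabilizer formalism.

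1. Since $J:=J(\Phi)\ge0$ and $\opket{J}$ is a stabilizer state, I first show that $J$ is proportional to a stabilizer projector. A stabilizer operator equals, up to scalars and Clifford conjugations on either side, $\Pi$ times a Pauli-type partial isometry. Positivity and Hermiticity then force $J=c\,\Pi_T$, with $c>0$ and $T$ a stabilizer group on $B\otimes A'$. I expect this normal-form step to be the main obstacle. My first approach would be to write $\opket{J}$ as $(C_1\otimes C_2)\ket{0\cdots0}\otimes\ket{\Omega}$ using the standard bipartite normal form for stabilizer states, and then read off the positivity condition.

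2. Purify $\Pi_T$. Choose $E$ with $|E|=|B|+|A'|-r$, where $r$ is the rank of $T$ (its number of independent generators). Let $\ket{\psi}$ be a stabilizer state on $BA'E$ with reduced state $\propto\Pi_T$. One such state is obtained by adjoining Bell pairs between $E$ and the logical qudits of $T$, that is, by applying an encoding Clifford for $T$ to $\ket{\boldsymbol 0}\otimes\ket{\Omega}$.

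3. Define $V$ by reverse vectorization, $\opket{V}\propto\ket{\psi}$, viewing $\ket{\psi}$ as an operator from $A$ to $B\otimes E$. Then $V$ is a stabilizer operator. The partial trace over $E$ of $\opket{V}\!\langle\!\langle V|$ gives $J(\operatorname{Tr}_E\circ\operatorname{Ad}_V)=J(\Phi)$ once the normalization is fixed to $c$. By Choi's isomorphism, this yields Eq.~\eqref{eq:stabilizer-stinespring-dilation}.

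4. Check that $V$ is an isometry. Trace preservation gives $\operatorname{Tr}_B J=I_{A'}$. This means $\operatorname{Tr}_{BE}\opket{V}\!\langle\!\langle V|=I_{A'}$, which after transposition is exactly $V^\dagger V=I_A$.
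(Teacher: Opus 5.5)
Your proof is correct, and its backward direction has the same skeleton as the paper's. You show $J(\Phi)=\lambda\Pi_S$, purify it by an encoding isometry applied to half of $\ket{\Omega}$, read the purification as $\opket{V}$, and obtain $V^\dagger V=I$ from $\tr_B J(\Phi)=I_{A'}$. Where you differ is in the two auxiliary facts.

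The paper derives both from one elementary observation. Write a pure stabilizer state as $2^{-n}\sum_{s\in S}s$; a partial trace discards every term acting nontrivially on the traced system, so it leaves a multiple of a stabilizer-subspace projector. Applied to $\opket{V}$, this gives the forward direction directly and even shows $J(\Phi)\propto\Pi_S$. Applied to $\opket{K}$, it gives $KK^\dagger=\tr_{A'}\opket{K}\!\langle\!\langle K\rvert\propto\Pi_S$. Uniqueness of the positive square root then yields $K=\lambda\Pi_S$ for $K\ge0$.

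You instead get the forward direction from closure of stabilizer superoperators under nonzero serial composition, which the paper asserts with only a one-line justification. You get step~1 from the local-Clifford normal form of bipartite stabilizer states, an external theorem that does more work than needed here. What that theorem buys is a structural description of every stabilizer operator as $c\,C_1(\proj{\boldsymbol 0}\otimes I)C_2$ with Cliffords $C_1,C_2$. Note this is left and right multiplication by Cliffords, not conjugation as you wrote. All nonzero singular values then coincide, so positivity forces a multiple of a projector. The paper's square-root argument reaches the same conclusion for positive operators without any normal form.
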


\begin{proof}
Suppose first that $V$ is a stabilizer isometry satisfying Eq.~\eqref{eq:stabilizer-stinespring-dilation}.
Its vectorization $\lvert V\rangle\!\rangle$ is a pure stabilizer state, and
\begin{equation}
    J(\Phi)
    =
    \operatorname{Tr}_E
    \left(
        \lvert V\rangle\!\rangle
        \langle\!\langle V\rvert
    \right).
    \label{eq:choi-from-stabilizer-dilation}
\end{equation}
To see the form of this reduced operator, write the density operator of a normalized pure stabilizer state with stabilizer group $S$ as $2^{-n}\sum_{s\in S}s$.
The partial trace removes every term acting nontrivially on $E$, leaving a scalar multiple of the projector onto the common eigenspace of the surviving stabilizers.
Thus, $J(\Phi)$ is proportional to a stabilizer-subspace projector and is therefore a stabilizer operator.

Conversely, suppose that $J(\Phi)$ is a stabilizer operator.
We first note that every positive stabilizer operator $K$ is proportional to a stabilizer-subspace projector.
Indeed, tracing one copy out of the pure stabilizer state $\lvert K\rangle\!\rangle$ gives $KK^\dagger$, which by the preceding stabilizer-group argument is proportional to such a projector $\Pi_S$.
Since $K\geq 0$, the positive square root is unique, so $K=\lambda\Pi_S$ for some $\lambda>0$.
Applying this observation to the positive Choi operator gives
\begin{equation}
    J(\Phi)
    =
    \lambda\Pi_S.
    \label{eq:positive-stabilizer-choi-projector}
\end{equation}

Let $C$ be a stabilizer encoding isometry whose image is the support of $\Pi_S$, and let $E$ be a reference copy of its logical input.
The encoded maximally entangled vector
\begin{equation}
    \lvert\Gamma\rangle
    :=
    \sqrt{\lambda}
    (C\otimes I_E)
    \lvert\Omega\rangle
    \label{eq:stabilizer-choi-purification}
\end{equation}
is a stabilizer purification of $J(\Phi)$.
After regrouping its systems as $(B\otimes E)\otimes A'$, write this vector as $\lvert V\rangle\!\rangle$ for an operator $V:A\rightarrow B\otimes E$.
Since $\operatorname{Tr}_{BE}(\lvert V\rangle\!\rangle\langle\!\langle V\rvert)=(V^\dagger V)^T$ and $\operatorname{Tr}_E(\lvert V\rangle\!\rangle\langle\!\langle V\rvert)=J(\Phi)$, trace preservation implies
\begin{equation}
    (V^\dagger V)^T
    =
    \operatorname{Tr}_B J(\Phi)
    =
    I_{A'},
    \label{eq:trace-preservation-implies-isometry}
\end{equation}
and hence $V^\dagger V=I_A$.
Therefore, $V$ is a stabilizer isometry.
Tracing out $E$ recovers $J(\Phi)$, so injectivity of the Choi representation gives Eq.~\eqref{eq:stabilizer-stinespring-dilation}.
\end{proof}

\begin{corollary}[Pauli covariance]
    \label{cor:stabilizer-channel-pauli-covariance}
    For every input Pauli $R$ of a stabilizer channel $\Phi$, there exists an output Pauli $Q$ such that
    \begin{equation}
        \Phi\circ\operatorname{Ad}_R
        =
        \operatorname{Ad}_Q\circ\Phi.
        \label{eq:appendix-stabilizer-channel-pauli-covariance}
    \end{equation}
\end{corollary}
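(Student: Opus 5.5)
The plan is to derive covariance from the stabilizer dilation of Proposition~\ref{prop:stabilizer-choi-dilation}. Write $\Phi(\rho)=\operatorname{Tr}_E(V\rho V^\dagger)$ with a stabilizer isometry $V:\Hilb{A}\to\Hilb{B}\otimes\Hilb{E}$. It then suffices to find a Pauli $Q'$ on $B\otimes E$ with $VR=Q'V$ up to a phase. Write $Q'=\omega\,Q\otimes Q_E$ with $|\omega|=1$ and $Q,Q_E$ Paulis on $B$ and $E$. Then $VR\rho R^\dagger V^\dagger=(Q\otimes Q_E)V\rho V^\dagger(Q\otimes Q_E)^\dagger$, and tracing out $E$ removes the factor $Q_E$ by cyclicity of the partial trace, since $Q_E$ is unitary. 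This gives $\Phi(R\rho R^\dagger)=Q\Phi(\rho)Q^\dagger$, i.e.\ Eq.~\eqref{eq:appendix-stabilizer-channel-pauli-covariance}.

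The key step is therefore the isometry-level statement: for a stabilizer isometry $V$ and an input Pauli $R$, there is an output Pauli $Q'$ with $VR=Q'V$. I would prove it from the vectorization $\opket{V}$, which is a pure stabilizer state on $(B\otimes E)\otimes A'$.

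First, $\opket{VR}=(I\otimes R^T)\opket{V}$, and $R^T$ is a Pauli up to sign. Since $V$ is an isometry, the reduced state of $\opket{V}$ on $A'$ is $(V^\dagger V)^T=I$, so $A'$ is maximally entangled with $BE$. For a pure stabilizer state whose reduction on a subsystem is maximally mixed, every Pauli on that subsystem can be completed to a stabilizer. Concretely, the stabilizer group restricted to $A'$ must be the full phase-free Pauli group, because otherwise some Pauli on $A'$ would commute with the restricted group and acquire a nonzero expectation. Hence there is a stabilizer $Q'^{-1}\otimes R^T$ of $\opket{V}$, up to phase. This yields $(I\otimes R^T)\opket{V}\propto(Q'\otimes I)\opket{V}=\opket{Q'V}$, and so $VR\propto Q'V$.

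The main obstacle is making this rank/commutant argument clean. The fallback is to use the encoding form from the proof of Proposition~\ref{prop:stabilizer-choi-dilation}: there $V$ is, up to regrouping, $(C\otimes I_E)$ applied to a maximally entangled vector. From this form one can read off that $V$ is a Clifford unitary composed with ancilla preparations in $\ket{0}$. Covariance then follows from Clifford conjugation, taking $Q'=URU^\dagger$ acting on the combined system after preparing the ancillas. Finally, any phase in $VR=\omega Q'V$ is harmless, because $\operatorname{Ad}$ ignores phases.
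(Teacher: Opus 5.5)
Your proposal is correct and follows essentially the same route as the paper. You dilate $\Phi$ by the stabilizer isometry of Proposition~\ref{prop:stabilizer-choi-dilation}, use that the $A'$ marginal of $\opket{V}$ is maximally mixed to show the stabilizer group of $\opket{V}$ projects onto all Paulis on $A'$, turn the resulting stabilizer $\widetilde W\otimes R^T$ into $VR=WV$ via the vectorization identity, and trace out the environment factor. Your ``nonzero expectation'' step is compressed: the paper spells it out by noting that $I_{BE}\otimes T$ commutes with the whole stabilizer group and therefore lies in it by maximality. Your Clifford-unitary fallback is not needed.
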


\begin{proof}
Choose the stabilizer isometry $V:A\rightarrow B\otimes E$ supplied by Proposition~\ref{prop:stabilizer-choi-dilation}.
The input-reference marginal of $\lvert V\rangle\!\rangle$ is maximally mixed because
\begin{equation}
    \operatorname{Tr}_{BE}
    \left(
        \lvert V\rangle\!\rangle
        \langle\!\langle V\rvert
    \right)
    =
    (V^\dagger V)^T
    =
    I_{A'}.
    \label{eq:stabilizer-isometry-maximally-mixed-input}
\end{equation}
Consequently, the projection of the phase-free stabilizer group of $\lvert V\rangle\!\rangle$ onto the Pauli space of $A'$ is surjective.
Otherwise, nondegeneracy of the Pauli commutation form would give a nonidentity Pauli $T$ on $A'$ commuting with every projected stabilizer.
Then $I_{BE}\otimes T$ would commute with the complete stabilizer group and hence belong to it, because the stabilizer group of a pure stabilizer state is maximal.
This would make $T$ a stabilizer of the reduced state on $A'$, contradicting Eq.~\eqref{eq:stabilizer-isometry-maximally-mixed-input}.

It follows that for every input Pauli $R$, there is a Pauli $\widetilde W$ on $B\otimes E$ such that, after absorbing a possible sign into $\widetilde W$,
\begin{equation}
    (\widetilde W\otimes R^T)
    \lvert V\rangle\!\rangle
    =
    \lvert V\rangle\!\rangle.
    \label{eq:matched-pauli-stabilizer}
\end{equation}
The vectorization identity converts this equation into $VR=WV$, where $W:=\widetilde W^\dagger$.
Factor $W=Q\otimes S$ into Paulis on $B$ and $E$.
For every input operator $\rho$,
\begin{align*}
    \Phi(R\rho R^\dagger)
    &=
    \operatorname{Tr}_E
    \left[
        (Q\otimes S)V\rho V^\dagger
        (Q^\dagger\otimes S^\dagger)
    \right]
    \\
    &=
    Q\operatorname{Tr}_E(V\rho V^\dagger)Q^\dagger
    =
    Q\Phi(\rho)Q^\dagger,
\end{align*}
which proves Eq.~\eqref{eq:appendix-stabilizer-channel-pauli-covariance}.
\end{proof}

This proves Lemma~\ref{lem:pauli-covariance}; the dual statement is Theorem~1 of Ref.~\cite{Yashin2025}.

The trace-preserving assumption in Proposition~\ref{prop:stabilizer-choi-dilation} is essential.
A trace-nonincreasing stabilizer map may have a stabilizer Choi operator, but an isometry followed only by a partial trace is necessarily trace preserving.
Selective stabilizer projections and effects therefore require postselection or a corresponding effect in their dilation description.

Computational-basis dephasing illustrates why the Choi condition must allow mixed stabilizer operators.
With the unnormalized maximally entangled convention,
\begin{equation}
    \begin{aligned}
        J(\mathcal{T}_Z)
        &=
        \ket{00}\!\bra{00}
        +
        \ket{11}\!\bra{11},
        \\
        \lvert J(\mathcal{T}_Z)\rangle\!\rangle
        &=
        \ket{0000}+\ket{1111}.
    \end{aligned}
    \label{eq:dephasing-choi-and-vectorization}
\end{equation}
The Choi operator is mixed, while its vectorization is a pure GHZ stabilizer state.
An equivalent stabilizer dilation is
\begin{equation}
    V\ket{z}
    =
    \ket{z}_B\ket{z}_E,
    \qquad
    \mathcal{T}_Z(\rho)
    =
    \operatorname{Tr}_E(V\rho V^\dagger).
    \label{eq:dephasing-stabilizer-dilation}
\end{equation}
Thus, requiring the normalized Choi operator itself to be pure would exclude information-losing stabilizer channels such as dephasing.

\subsection{Stabilizer and outcome-complete stabilizer circuits}
\label{sec:stabilizer-versus-outcome-complete-circuits}

Our stabilizer channels with declared classical ports are, by design, very similar to the objects represented by the stabilizer circuits of Definition~2.3 in Ref.~\cite{Kliuchnikov2023stabilizercircuitverification}, which are built from zero-state and fair-coin allocations, Clifford unitaries, nondestructive Pauli measurements, Pauli corrections controlled by affine parities of earlier outcomes, and deallocation of wires promised to be in the zero state.
Such a circuit retains every measurement outcome and every allocated random bit in its outcome vector, and we therefore call it \emph{outcome-complete} in this subsection; the qualifier is ours rather than terminology of the reference.
With the outcome vector represented as a declared classical output register, the channel of an outcome-complete stabilizer circuit has the form
\begin{equation}
    \widehat{\Phi}(\rho)
    =
    \sum_{o}
    Q_o\rho Q_o^\dagger
    \otimes\proj{o},
    \label{eq:outcome-complete-stabilizer-channel}
\end{equation}
where every conditional map has a single stabilizer Kraus operator $Q_o$; in the language of quantum instruments, every conditional map has Kraus rank at most one, so for a pure input the quantum output conditioned on the complete outcome vector is pure whenever it is nonzero.
A stabilizer channel in the sense of Sec.~\ref{sec:stabilizer-maps} need not have this property, because it may discard a classical record or a quantum subsystem.
Moreover, the stabilizer maps of Appendix~\ref{app:stabilizer-channel-dilations} may be trace non-increasing, whereas the instrument represented by an outcome-complete circuit is trace preserving; a selective stabilizer map corresponds to one branch of such an instrument.
Below we verify that every elementary operation of Definition~2.3 is a stabilizer channel with the appropriate classical port declarations, and that conversely every stabilizer circuit built from these operations agrees with an outcome-complete one once the discarded records are accounted for.

We first record two basic constructions with classical interfaces.
A deterministic affine map $f(x)=Ax+b$ between classical bit strings is represented by
\begin{equation}
    \Phi_f(\rho)
    =
    \sum_{x\in\Ftwo^{m_{\mathrm{in}}}}
    \bra{x}\rho\ket{x}
    \proj{Ax+b}.
    \label{eq:affine-classical-channel}
\end{equation}
Indeed, $\opket{J(\Phi_f)}$ is the uniform superposition over an affine binary subspace and is therefore a stabilizer state.

A nondestructive projective $Z$ measurement is obtained by preparing a fresh ancilla in $\ket{0}$, applying a controlled-$X$ from the measured qubit to the ancilla, and declaring the ancilla output classical:
\begin{equation}
    \mathcal{M}_Z
    =
    (\operatorname{id}\otimes\mathcal{T}_Z)
    \circ\operatorname{Ad}_{\mathrm{CX}}
    \circ
    \left(
        \rho\mapsto\rho\otimes\proj{0}
    \right).
    \label{eq:z-measurement-channel}
\end{equation}
The ancilla is a quantum wire during the controlled-$X$ and becomes a classical carrier once $\mathcal{T}_Z$ is applied.
Thus, $\mathcal{M}_Z$ is a stabilizer channel with a declared classical outcome port.
The unencoder of Eq.~\eqref{eq:unencoder} is likewise a stabilizer channel with a declared classical syndrome output, since it is the Clifford unitary $E^\dagger$ followed by $\mathcal{T}_Z$ on every qubit of the syndrome register.


We now check that every elementary operation of Definition~2.3 in Ref.~\cite{Kliuchnikov2023stabilizercircuitverification} is represented by a stabilizer channel with the appropriate classical port declarations.
The allocation of a qubit initialized in the zero state is
\begin{equation}
    \mathcal{A}_0(1)
    :=
    \ket{0}\!\bra{0}.
    \label{eq:zero-state-allocation}
\end{equation}
Its output may be declared either quantum or classical because $\mathcal{T}_Z\circ\mathcal{A}_0=\mathcal{A}_0$.
The two declarations respectively describe a qubit initialized in $\ket{0}$ and a deterministic classical bit initialized to zero, and the preparation of any other single-qubit stabilizer state is $\mathcal{A}_0$ followed by a Clifford unitary.

The allocation of a classical random bit distributed as a fair coin is
\begin{equation}
    \mathcal{R}(1)
    :=
    \frac{I_C}{2}.
    \label{eq:fair-coin-allocation}
\end{equation}
The vectorization of its Choi operator is proportional to $\ket{00}+\ket{11}$, and its declared classical output satisfies $\mathcal{T}_Z\circ\mathcal{R}=\mathcal{R}$ so it can be interpreted as either classical or quantum.

Every Clifford unitary $U$, including every Pauli unitary, induces the stabilizer channel $\operatorname{Ad}_U$ without classical port declarations.
Indeed,
\begin{equation}
    J(\operatorname{Ad}_U)
    =
    \lvert U\rangle\!\rangle
    \langle\!\langle U\rvert,
    \label{eq:clifford-unitary-channel-choi}
\end{equation}
and $\lvert U\rangle\!\rangle$ is a stabilizer state.

For a Pauli observable $P$, its nondestructive measurement is the quantum-classical channel
\begin{equation}
    \begin{aligned}
        \mathcal{M}_P(\rho)
        &:={}
        \sum_{o\in\mathbb{F}_2}
        \Pi_o\rho\Pi_o
        \otimes\ket{o}\!\bra{o},
        \\
        \Pi_o
        &:={}
        \frac{I+(-1)^oP}{2}.
    \end{aligned}
    \label{eq:nondestructive-pauli-measurement}
\end{equation}
Choosing a Clifford unitary that maps $P$ to a single-qubit $Z$ reduces this channel to the construction in Eq.~\eqref{eq:z-measurement-channel}, so $\mathcal{M}_P$ is a stabilizer channel whose explicitly recorded outcome bit is a declared classical output port.

The deallocation of a quantum or classical two-level wire is represented directly by the partial trace $\operatorname{Tr}_w$.
For a discarded wire with no other input, $J(\operatorname{Tr}_w)=I_{w'}$, whose vectorization is a Bell state; discarding one part of a larger input is its tensor product with an identity channel.
Its input may be declared either quantum or classical because $\operatorname{Tr}_w\circ\mathcal{T}_Z=\operatorname{Tr}_w$.
Definition~2.3 permits deallocation only when the wire is known to be in the zero state, whereas our discard channel is defined on arbitrary inputs.
Tracing out a factor of a product state (which the promise guarantees) preserves the purity of the state, which is the invariant that outcome-complete stabilizer circuits maintain.
However, the stabilizer channels of Sec.~\ref{sec:stabilizer-maps}, and hence our stabilizer circuits, allow unconditional tracing out and dephasing of arbitrary subsystems and can therefore ``forget'' information.
On the promised zero-state input, the discard channel agrees with the stabilizer effect $\rho\mapsto\bra{0}\rho\ket{0}$, which is a purity-preserving stabilizer map but is not trace preserving without the promise.
A simple way to guarantee that a stabilizer circuit represents a trace-preserving stabilizer channel is to build it from trace-preserving elementary stabilizer channels only, as Definition~\ref{def:stabilizer-circuit} does.

Finally, let $c\in\mathbb{F}_2^m$ collect earlier classical outcomes, and let $p(c):=a^Tc+b$ be an affine parity.
The Pauli unitary $P$ conditioned on this parity is the channel
\begin{equation}
    \mathcal{C}_{P,p}
    \!\left(
        \ket{c}\!\bra{c}\otimes\rho_c
    \right)
    :=
    \ket{c}\!\bra{c}\otimes
    P^{p(c)}\rho_cP^{p(c)}.
    \label{eq:parity-controlled-pauli-channel}
\end{equation}
It is a stabilizer channel because affine classical control can be realized by Clifford operations on dephased classical controls and the quantum target.
Tensoring and composing these stabilizer channels therefore maps every outcome-complete stabilizer circuit to a stabilizer circuit with the same retained outcome register.

Conversely, a stabilizer circuit built from these operations can be refined to an outcome-complete stabilizer circuit by retaining, as an explicit outcome, every record that it discards.
For example, with $\Pi_o^Z:=(I+(-1)^oZ)/2$, dephasing is a nondestructive measurement whose outcome is forgotten,
\begin{equation}
    \mathcal{T}_Z(\rho)
    =
    \operatorname{Tr}_M
    \left[
        \sum_{o\in\mathbb{F}_2}
        \Pi_o^Z\rho\Pi_o^Z
        \otimes\ket{o}\!\bra{o}_M
    \right],
    \label{eq:dephasing-as-forgotten-measurement}
\end{equation}
and tracing out a wire is a destructive measurement of it in any Pauli basis whose result is forgotten.
If $M$ collects all records introduced in this way, the stabilizer channel and its outcome-complete refinement satisfy
\begin{equation}
    \Phi
    =
    \operatorname{Tr}_M\circ\widehat{\Phi}.
    \label{eq:outcome-complete-refinement}
\end{equation}
Thus the two channels are equal once the record-discarding map $\operatorname{Tr}_M$ is included; in the terms of Definition~\ref{def:circuit-equivalence}, the two circuits are equivalent under the outcome map that keeps the retained outcomes and discards $M$.
Together with the preceding paragraph, this translates each formalism into the other for the circuit class considered here; for a general stabilizer channel, Proposition~\ref{prop:stabilizer-choi-dilation} supplies the corresponding dilation by a stabilizer isometry followed by a discard.
This forgetting is more general than the instrument equivalence of Ref.~\cite{Kliuchnikov2023stabilizercircuitverification}, which only merges outcome branches whose conditional maps are proportional; the two branches of a dephasing measurement are not proportional and cannot be merged in that way.

For a finite unitary group $G$, the \emph{twirling channel} is $\mathcal{T}_G(\rho):=|G|^{-1}\sum_{U\in G}U\rho U^\dagger$, so that the dephasing channel $\mathcal{T}_Z$ of Sec.~\ref{sec:stabilizer-maps} is the twirl over $\langle Z\rangle$.
For a qubit register $A$, let $\mathcal{P}(A)$ denote its Pauli group; the twirl over $\mathcal{P}(A)$ is the \emph{fully depolarizing channel}
\begin{equation}
    \mathcal{T}_{\mathcal{P}(A)}(\rho)
    =
    \tr(\rho)\frac{I_A}{2^{|A|}},
    \label{eq:fully-depolarizing-channel}
\end{equation}
which is a stabilizer channel because $J(\mathcal{T}_{\mathcal{P}(A)})=(I_A/2^{|A|})\otimes I_{A'}$, whose vectorization is proportional to a product of Bell states.
Replacing dephasing by full depolarization in Eq.~\eqref{eq:classical-interface} gives stronger interface conditions: rather than discarding only coherence, they discard all information carried by the selected port.
For an input register $A_I$ and an output register $B_R$ of $\Phi$, with the identity action on the remaining ports understood, we call $A_I$ an \emph{irrelevant input} and $B_R$ an \emph{independently random output} when, respectively,
\begin{equation}
    \Phi\circ\mathcal{T}_{\mathcal{P}(A_I)}
    =
    \Phi,
    \qquad
    \mathcal{T}_{\mathcal{P}(B_R)}\circ\Phi
    =
    \Phi.
    \label{eq:depolarizing-interface}
\end{equation}
The first condition states that $\Phi$ does not depend on the input on $A_I$, and the second that $B_R$ is maximally mixed and uncorrelated with the input and with all other outputs, that is, $\Phi(\rho)=(I_{B_R}/2^{|B_R|})\otimes\widetilde{\Phi}(\rho)$ for a channel $\widetilde{\Phi}$ onto the remaining outputs.
Each condition is generated within $\mathsf{Stab}(\Phi)$ by the $X$- and $Z$-conjugation versions of the corresponding equation in Eq.~\eqref{eq:classical-interface-stabilizers}.

An independently random output can therefore be discarded and regenerated by allocating fresh fair coins, Eq.~\eqref{eq:fair-coin-allocation}, without changing the channel.
The condition is stronger than mere randomness: an outcome that is uniformly random but correlated with another output is not independently random, because applying $\mathcal{T}_{\mathcal{P}(B_R)}$ would destroy the correlation and change the channel.
For example, a fair coin allocated as in Definition~2.3 is an independently random output, whereas a physical wire that is randomized by measuring it, with the result retained in a distinct register $M$, becomes independently random only after $M$ is forgotten.

\section{Further mathematical preliminaries}
\label{app:further-math-preliminaries}

Here we provide a proof of the finite-dimensional multiplicative-domain theorem; the general version is due to Choi~\cite{Choi1974}.

\begin{theorem}[Multiplicative domain in finite dimensions]
\label{thm:multiplicative-domain}
Let
\[
\Psi:\operatorname{L}(\mathcal{Y})
\longrightarrow
\operatorname{L}(\mathcal{X})
\]
be a unital completely positive map, and let
\(A\in\operatorname{L}(\mathcal{Y})\). Suppose that
\[
\Psi(A^\dagger A)
=
\Psi(A)^\dagger\Psi(A)
\]
and
\[
\Psi(AA^\dagger)
=
\Psi(A)\Psi(A)^\dagger.
\]
Then, for every \(B\in\operatorname{L}(\mathcal{Y})\),
\[
\Psi(AB)=\Psi(A)\Psi(B)
\]
and
\[
\Psi(BA)=\Psi(B)\Psi(A).
\]
\end{theorem}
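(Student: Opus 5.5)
The plan is the standard Stinespring argument. Since $\Psi$ is completely positive and the spaces are finite dimensional, I will write $\Psi(X)=V^\dagger (X\otimes I_{\mathcal{E}})V$ for some operator $V:\mathcal{X}\to\mathcal{Y}\otimes\mathcal{E}$. This can be obtained from a Kraus decomposition $\Psi(X)=\sum_k K_k^\dagger X K_k$ by setting $V=\sum_k K_k\otimes\ket{k}$. Unitality, $\Psi(I)=I$, gives $V^\dagger V=I$, so $V$ is an isometry and $P:=VV^\dagger$ is a projector. Abbreviate $\tilde A:=A\otimes I_{\mathcal{E}}$ and similarly $\tilde B$.

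The key step converts the hypothesis into a commutation statement. One computes
\[
\Psi(A^\dagger A)-\Psi(A)^\dagger\Psi(A)=V^\dagger\tilde A^\dagger (I-P)\tilde A V=\bigl((I-P)\tilde A V\bigr)^\dagger\bigl((I-P)\tilde A V\bigr).
\]
This operator is of the form $Y^\dagger Y$, so the first hypothesis forces $(I-P)\tilde A V=0$, that is, $\tilde A V=VV^\dagger\tilde A V$. The same computation with $A^\dagger$ in place of $A$, using the second hypothesis, gives $(I-P)\tilde A^\dagger V=0$.

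With these two identities the conclusions follow directly. For the first, insert $V=PV$:
\[
\Psi(BA)=V^\dagger\tilde B\tilde A V=V^\dagger\tilde B\,VV^\dagger\tilde A V=\Psi(B)\Psi(A).
\]
For the second, take adjoints of $(I-P)\tilde A^\dagger V=0$ to get $V^\dagger\tilde A=V^\dagger\tilde A\,VV^\dagger$. Then
\[
\Psi(AB)=V^\dagger\tilde A\tilde B V=V^\dagger\tilde A VV^\dagger\tilde B V=\Psi(A)\Psi(B).
\]

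I expect no real obstacle. The only point needing care is matching each hypothesis to the right one-sided identity: $\Psi(A^\dagger A)=\Psi(A)^\dagger\Psi(A)$ yields right multiplicativity $\Psi(BA)=\Psi(B)\Psi(A)$, and $\Psi(AA^\dagger)=\Psi(A)\Psi(A)^\dagger$ yields left multiplicativity $\Psi(AB)=\Psi(A)\Psi(B)$. The step deriving $V^\dagger V=I$ from unitality is routine, and the existence of the Kraus form in finite dimensions I take as standard.
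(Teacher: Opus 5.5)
Your proof is correct and is the same argument as the paper's, written with the Stinespring isometry $V=\sum_k K_k\otimes\ket{k}$ instead of in Kraus form. The paper shows $\sum_j (AK_j-K_j\Psi(A))^\dagger(AK_j-K_j\Psi(A))=\Psi(A^\dagger A)-\Psi(A)^\dagger\Psi(A)$ and concludes $AK_j=K_j\Psi(A)$ for each $j$, which is the componentwise form of your $\tilde A V=V\Psi(A)$. It then applies the same step to $A^\dagger$ and takes adjoints, pairing each hypothesis with the one-sided multiplicativity exactly as you do.
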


\begin{proof}
Choose a Kraus representation
\[
\Psi(B)=\sum_j K_j^\dagger B K_j,
\qquad
\sum_j K_j^\dagger K_j=I.
\]
Because \(\Psi\) is completely positive, it is Hermiticity
preserving, and hence
\[
\Psi(A^\dagger)=\Psi(A)^\dagger.
\]

Consider
\[
\sum_j
\bigl(AK_j-K_j\Psi(A)\bigr)^\dagger
\bigl(AK_j-K_j\Psi(A)\bigr).
\]
Expanding this expression gives
\begin{align*}
&\sum_j K_j^\dagger A^\dagger A K_j
-
\sum_j K_j^\dagger A^\dagger K_j\Psi(A) \\
&\qquad
-
\Psi(A)^\dagger\sum_j K_j^\dagger A K_j \\
&\qquad
+
\Psi(A)^\dagger
\left(\sum_j K_j^\dagger K_j\right)
\Psi(A) \\
&=
\Psi(A^\dagger A)-\Psi(A)^\dagger\Psi(A).
\end{align*}
By assumption, the right-hand side is zero. Every summand on
the left-hand side is positive semidefinite, so each summand
vanishes. Therefore,
\[
AK_j=K_j\Psi(A)
\]
for every \(j\). Consequently,
\begin{align*}
\Psi(BA)
&=
\sum_j K_j^\dagger BA K_j \\
&=
\sum_j K_j^\dagger B K_j\Psi(A) \\
&=
\Psi(B)\Psi(A).
\end{align*}

Apply the same argument to \(A^\dagger\). The second assumed
equality may be written as
\[
\Psi\bigl((A^\dagger)^\dagger A^\dagger\bigr)
=
\Psi(A^\dagger)^\dagger\Psi(A^\dagger).
\]
It follows that
\[
A^\dagger K_j=K_j\Psi(A^\dagger)
=
K_j\Psi(A)^\dagger.
\]
Taking adjoints gives
\[
K_j^\dagger A=\Psi(A)K_j^\dagger.
\]
Hence
\begin{align*}
\Psi(AB)
&=
\sum_j K_j^\dagger ABK_j \\
&=
\Psi(A)\sum_j K_j^\dagger BK_j \\
&=
\Psi(A)\Psi(B).
\end{align*}
\end{proof}

The following corollary shows how constraints on a dual channel translate into constraints on the primal channel.
This is similar to the argument around Eqs.~(6.17)--(6.18) in Ref.~\cite{deGroot2022symmetryprotected}.

\begin{corollary}
\label{cor:dual-constraints}
Let
\[
\Phi:\operatorname{L}(\mathcal{X})
\longrightarrow
\operatorname{L}(\mathcal{Y})
\]
be completely positive and trace preserving, and let
\(P\in\operatorname{L}(\mathcal{Y})\) and
\(Q\in\operatorname{L}(\mathcal{X})\) be Hermitian unitaries.
Then the following statements are equivalent:
\begin{enumerate}
    \item
    $
    \Phi^\dagger(P)=Q.
    $
    
    \item For every \(X\in\operatorname{L}(\mathcal{X})\),
    $
    P\Phi(QX)=\Phi(X).
    $
\end{enumerate}
\end{corollary}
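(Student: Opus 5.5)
The plan is to prove the two directions separately, using Theorem~\ref{thm:multiplicative-domain} for the hard direction. The adjoint $\Phi^\dagger$ is unital and completely positive because $\Phi$ is completely positive and trace preserving, so the theorem applies to $\Psi:=\Phi^\dagger$.

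\emph{(2)$\Rightarrow$(1).} Take the defining relation $\tr(\Phi(X)^\dagger Y)=\tr(X^\dagger\Phi^\dagger(Y))$ with $Y=P$. Substituting $\Phi(X)=P\Phi(QX)$ and using $P^\dagger P=I$, I would compute
\[
\tr\bigl(X^\dagger\Phi^\dagger(P)\bigr)=\tr\bigl(\Phi(X)^\dagger P\bigr)=\tr\bigl(\Phi(QX)^\dagger\bigr)=\tr\bigl((QX)^\dagger\bigr)=\tr(X^\dagger Q),
\]
where the third equality uses trace preservation together with $\tr(Z^\dagger)=\overline{\tr Z}$. Since this holds for every $X$, it gives $\Phi^\dagger(P)=Q$.

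\emph{(1)$\Rightarrow$(2).} Set $A:=P$ in Theorem~\ref{thm:multiplicative-domain}. Because $P$ is a Hermitian unitary, $P^\dagger P=PP^\dagger=I$. Unitality gives $\Psi(P^\dagger P)=I$, and $\Psi(P)^\dagger\Psi(P)=Q^\dagger Q=I$ because $Q$ is a Hermitian unitary. The same holds for $PP^\dagger$. So $P$ lies in the multiplicative domain, and for every $B$ we get $\Psi(PB)=Q\,\Psi(B)$.

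I would then dualize this identity. For arbitrary $X$ and $Y$,
\[
\tr\bigl(\Phi(X)^\dagger PY\bigr)=\tr\bigl(X^\dagger\Psi(PY)\bigr)=\tr\bigl(X^\dagger Q\Psi(Y)\bigr)=\tr\bigl((QX)^\dagger\Psi(Y)\bigr)=\tr\bigl(\Phi(QX)^\dagger Y\bigr).
\]
The third equality uses $Q^\dagger=Q$. The left side equals $\tr\bigl((P\Phi(X))^\dagger Y\bigr)$, so for all $X$
\[
P\Phi(X)=\Phi(QX).
\]
Replacing $X$ by $QX$ and using $Q^2=I$ gives $P\Phi(QX)=\Phi(X)$, which is statement (2).

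The main obstacle is the (1)$\Rightarrow$(2) direction, specifically recognizing that the unitarity of both $P$ and $Q$ is exactly what places $P$ in the multiplicative domain. After that the work is bookkeeping of adjoints: keeping straight which side $P$ and $Q$ multiply on, and where Hermiticity is used to drop daggers.
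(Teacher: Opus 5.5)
Your proof is correct and follows the paper's argument. For (1)$\Rightarrow$(2), both apply the multiplicative-domain theorem to $\Psi=\Phi^\dagger$ with $A=P$ and then dualize; for (2)$\Rightarrow$(1), both use trace preservation and nondegeneracy of the trace pairing. The differences are cosmetic: you work directly with the defining trace relation of the adjoint rather than the Hilbert--Schmidt inner product, and you finish (1)$\Rightarrow$(2) with the substitution $X\mapsto QX$ instead of left-multiplying by $P$.
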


\begin{proof}
Set
$
\Psi=\Phi^\dagger
$, and suppose first that
\[
\Psi(P)=Q.
\]
Since \(\Phi\) is completely positive and trace preserving,
\(\Psi\) is unital and completely positive. Because \(P\) and
\(Q\) are unitary,
\[
\Psi(P^\dagger P)
=
\Psi(I)
=
I
=
Q^\dagger Q
=
\Psi(P)^\dagger\Psi(P),
\]
and
\[
\Psi(PP^\dagger)
=
I
=
QQ^\dagger
=
\Psi(P)\Psi(P)^\dagger.
\]
The multiplicative-domain theorem~\ref{thm:multiplicative-domain} therefore gives
\[
\Psi(PB)=Q\Psi(B)
\]
for every \(B\in\operatorname{L}(\mathcal{Y})\).

Using the Hilbert--Schmidt inner product
\[
\langle A,B\rangle=\operatorname{Tr}(AB^\dagger),
\]
we obtain, for arbitrary
\(B\in\operatorname{L}(\mathcal{Y})\),
\begin{align*}
\langle \Phi(QX),B\rangle
&=
\langle QX,\Psi(B)\rangle \\
&=
\langle X,Q\Psi(B)\rangle \\
&=
\langle X,\Psi(PB)\rangle \\
&=
\langle \Phi(X),PB\rangle \\
&=
\langle P\Phi(X),B\rangle.
\end{align*}
Therefore,
\[
\Phi(QX)=P\Phi(X).
\]
Multiplying on the left by \(P\), and using \(P^2=I\), gives
\[
P\Phi(QX)=\Phi(X).
\]

Conversely, suppose that
\[
P\Phi(QX)=\Phi(X)
\]
for every \(X\in\operatorname{L}(\mathcal{X})\). Multiplying
on the left by \(P\) gives
\[
\Phi(QX)=P\Phi(X).
\]
Taking the trace and using that \(\Phi\) is trace preserving,
we obtain
\[
\operatorname{Tr}(QX)
=
\operatorname{Tr}(\Phi(QX))
=
\operatorname{Tr}(P\Phi(X)).
\]
By the definition of the dual map,
\begin{align*}
\operatorname{Tr}(P\Phi(X))
&=
\langle \Phi(X),P\rangle
=
\langle X,\Phi^\dagger(P)\rangle \\
&=
\operatorname{Tr}
\bigl(X\Phi^\dagger(P)^\dagger\bigr).
\end{align*}
Thus,
\[
\operatorname{Tr}(XQ)
=
\operatorname{Tr}
\bigl(X\Phi^\dagger(P)^\dagger\bigr)
\]
for every \(X\). By nondegeneracy of the trace pairing,
\[
\Phi^\dagger(P)^\dagger=Q.
\]
Taking adjoints and using \(Q^\dagger=Q\) yields
\[
\Phi^\dagger(P)=Q.
\]
\end{proof}

\section{Repeated subsystem and Floquet syndrome extraction}
\label{app:generalized-syndrome-extraction}

Section~\ref{sec:symbolic-memory} considers the particularly simple case in which every cycle measures the same stabilizer generators that define both of its code boundaries.
This appendix retains the requirement that each cycle prepares its declared output code, up to an outcome-dependent Pauli frame, while separating the Pauli group measured from the input from the stabilizer group prepared at the output.
This separation accommodates subsystem-code gauge fixing~\cite{Bombin2015gaugecolorcodes} and Floquet syndrome extraction~\cite{Hastings2021dynamicallygenerated} without changing the extended detector error model formalism~(Def.~\ref{def:raw-extended-detector-error-model}).

\subsection{Measured and prepared Pauli groups}

Consider a sequence of stabilizer channels
\begin{equation}
    \Phi_t:
    \mathcal{L}(\mathcal{H}_A)
    \longrightarrow
    \mathcal{L}(\mathcal{H}_A\otimes\mathcal{H}_{\mathsf{M}_t}),
    \label{eq:generalized-se-channel}
\end{equation}
where the quantum output of \(\Phi_t\) is the quantum input of \(\Phi_{t+1}\), and \(\mathsf{M}_t\) labels the outcomes declared by cycle \(t\).
The channels and their code presentations may depend periodically on \(t\).
For example, the phases of a Floquet protocol may be represented separately, or one full period may be packaged as a single cycle.

The boundary-constraint group $R_{\mathsf{M}_t}(\Phi_t)$ of Def.~\ref{def:boundary-constraint} distinguishes two phase-free Pauli groups,
\begin{align}
    \mathcal{G}^{\mathrm{meas}}_t
    &:={}
    \left\{
        \bar P\,\middle|\,
        \exists a,b:\ (\bar P,I,a,b)\in R_{\mathsf{M}_t}(\Phi_t)
    \right\},
    \label{eq:generalized-se-measured-group}
    \\
    \mathcal{S}_t
    &:={}
    \left\{
        \bar Q\,\middle|\,
        \exists a,b:\ (I,\bar Q,a,b)\in R_{\mathsf{M}_t}(\Phi_t)
    \right\}.
    \label{eq:generalized-se-prepared-group}
\end{align}
The group \(\mathcal{G}^{\mathrm{meas}}_t\) consists of input Paulis measured by the cycle, whereas \(\mathcal{S}_t\) consists of output Paulis prepared by the cycle; they are the measured group and the output stabilizer group of \(\Phi_t\) in the terminology of Sec.~\ref{sec:stabilizer-channel-constraints}.
Their signed versions \(\mathcal{G}^{\mathrm{meas}}_t(m_t)\) and \(\mathcal{S}_t(m_t)\) attach the sign \((-1)^{a^T m_t+b}\) supplied by the corresponding boundary constraint.

Let \(S_t^{\mathrm{code}}\) be the phase-free stabilizer group of the output code declared after cycle \(t\).
The preparation assumption used here is
\begin{equation}
    S_t^{\mathrm{code}}\leq \mathcal{S}_t.
    \label{eq:generalized-se-preparation-assumption}
\end{equation}
Equivalently, for every \(\bar S\in S_t^{\mathrm{code}}\), there are \(a_S,b_S\) such that
\begin{equation}
    \bigl(\bar S\otimes Z_{\mathsf{M}_t}(a_S,b_S)\bigr)
    \Phi_t(\rho)
    =
    \Phi_t(\rho)
    \qquad
    \text{for every }\rho.
    \label{eq:generalized-se-output-stabilizer}
\end{equation}
Thus the sign of every declared output stabilizer is determined by outcomes of the present cycle, without invoking an input stabilizer.
This condition is not equivalent to measuring all input-code stabilizers: measurement and preparation are different projections of the boundary-constraint group.

Relations with nontrivial Pauli components on both the input and output describe transported logical correlations and, for a subsystem presentation, may also retain gauge correlations.
These input--output relations determine the fault-free transformation of the remaining boundary information.

\subsection{Detector closure between unequal groups}

The stabilizer relations that close immediately across the boundary preceding cycle \(t\) belong to
\begin{equation}
    \mathcal{I}_t
    :=
    \mathcal{S}_{t-1}\cap\mathcal{G}^{\mathrm{meas}}_t,
    \label{eq:generalized-se-overlap-group}
\end{equation}
or to the corresponding subgroups selected by the adjacent code presentations.
For \(\bar H\in\mathcal{I}_t\), choose constraints
\begin{align}
    (I,\bar H,a^{\mathrm{S}},b^{\mathrm{S}})
    &\in R_{\mathsf{M}_{t-1}}(\Phi_{t-1}),
    \\
    (\bar H,I,a^{\mathrm{M}},b^{\mathrm{M}})
    &\in R_{\mathsf{M}_t}(\Phi_t).
\end{align}
Their composition gives the detector relation
\begin{equation}
    \begin{aligned}
        d_{t,\bar H}(m_{t-1},m_t)
        &={}
        (a^{\mathrm{S}})^T m_{t-1}
        +(a^{\mathrm{M}})^T m_t
        \\
        &\quad
        +b^{\mathrm{S}}+b^{\mathrm{M}}
        =0.
    \end{aligned}
    \label{eq:generalized-se-overlap-detector}
\end{equation}
for every fault-free realization.
By contrast, an element of \(\mathcal{S}_{t-1}\setminus\mathcal{G}^{\mathrm{meas}}_t\) need not yield a detector that closes in a later cycle.
It may instead be randomized when cycle \(t\) measures an anticommuting Pauli, as occurs during gauge fixing.
Any longer-range detector must follow from a boundary constraint of the composed channel rather than from an assumption that such a stabilizer persists.

The cycle may also have intrinsic detectors supported entirely on its own outcomes.
The profile can present intrinsic and boundary-closing detectors together by assigning a zero preceding-boundary component to each intrinsic row.

\subsection{Outcome-conditioned boundary coordinates}

Intrinsically random outcomes can select the signs of the prepared stabilizers and a known logical or gauge Pauli frame.
Consequently, the absolute output state is not generally a deterministic function of the preceding boundary state.
Let \(\widehat z_t(m_{\leq t})\) denote the ideal boundary coordinates obtained by propagating the preceding ideal branch through the code and Pauli-frame assignments declared for the outcome history \(m_{\leq t}\).
Here the Pauli-frame assignment is the outcome-dependent sign data \((a,b)\) of the output-stabilizer and transported-logical boundary constraints of Def.~\ref{def:boundary-constraint}, that is, the parities \(a^Tm_t+b\) that fix the signs of the prepared stabilizers and of the transported logical operators.
If \(z_t\) denotes the actual boundary coordinates in the same binary presentation, define the residual boundary error coordinates by
\begin{equation}
    \xi_t
    :=
    z_t+\widehat z_t(m_{\leq t}).
    \label{eq:generalized-se-relative-boundary-error}
\end{equation}
Thus \(\xi_t=0\) on every ideal outcome branch when the input boundary error is zero.
The syndrome and logical variables in Sec.~\ref{sec:symbolic-memory} are components of this outcome-conditioned residual, rather than absolute stabilizer signs or an absolute logical frame.
Any random gauge label that can affect a later cycle must likewise be retained by the profile as virtual boundary information instead of being averaged out.

Let \(f_t\) be the fault configuration of cycle \(t\)~(Sec.~\ref{sec:preliminaries-faults}), and let \(v_t\) be the output virtual-syndrome \emph{flip} in the presentation of \(\mathcal{S}_t\).
Let \(u_t\) collect the current-cycle parity flips appearing in the chosen detector rows, including measured-input parities for \(\mathcal{I}_t\) and intrinsic detector parities.
In the recurrence below, \(A_t\), \(B_t\), \(C_t\), \(D_t\), \(E_t\), \(P_t\), and \(Q_t\) denote blocks of the extended detector error model of cycle \(t\)~(Def.~\ref{def:raw-extended-detector-error-model}); they are unrelated to the matrices \(A_{\mathsf{M}}\), \(A_{\mathsf{D}}\), \(A_{\mathsf{O}}\) of Sec.~\ref{sec:preliminaries-dems} and to the encoder \(E\) of Sec.~\ref{sec:preliminaries-codes}.
The most general linear recurrence needed below has the form
\begin{align}
    \xi_t
    &={}
    A_t\xi_{t-1}+B_t f_t,
    \label{eq:generalized-se-boundary-recurrence}
    \\
    u_t
    &={}
    Q_t\xi_{t-1}+D_t f_t,
    \label{eq:generalized-se-measurement-recurrence}
    \\
    v_t
    &={}
    C_t\xi_{t-1}+E_t f_t,
    \label{eq:generalized-se-virtual-recurrence}
    \\
    d_t
    &={}
    u_t+P_t v_{t-1}
    =D_t f_t+Q_t\xi_{t-1}+P_t v_{t-1}.
    \label{eq:generalized-se-detector-recurrence}
\end{align}
Here \(P_t\) expresses the preceding prepared-stabilizer presentation in the detector-row presentation; its rows vanish for detectors intrinsic to cycle \(t\).
The matrices \(Q_t,D_t\) describe the measurement side of the detector relations, whereas \(C_t,E_t\) describe the independently chosen output-stabilizer presentation.
Unlike the specialization in Sec.~\ref{sec:symbolic-memory}, these pairs of matrices need not agree.
Known affine offsets have disappeared because Eqs.~\eqref{eq:generalized-se-boundary-recurrence}--\eqref{eq:generalized-se-detector-recurrence} describe flips relative to the outcome-conditioned ideal branch.

Define the ordered product
\begin{equation}
    A_{t:j}:=A_tA_{t-1}\cdots A_j,
    \label{eq:generalized-se-ordered-product}
\end{equation}
with an empty product equal to the identity.
Iteration of the boundary recurrence gives
\begin{equation}
    \xi_t
    =
    A_{t:1}\xi_0
    +\sum_{j=1}^{t} A_{t:j+1}B_j f_j,
    \label{eq:generalized-se-boundary-closed}
\end{equation}
and hence
\begin{equation}
    v_t
    =
    C_tA_{t-1:1}\xi_0
    +E_t f_t
    +\sum_{j=1}^{t-1}C_tA_{t-1:j+1}B_j f_j.
    \label{eq:generalized-se-virtual-closed}
\end{equation}
Substitution of the preceding-cycle recurrences into Eq.~\eqref{eq:generalized-se-detector-recurrence} gives, for \(t\geq2\),
\begin{align}
    d_t
    ={}&
    D_t f_t
    +\bigl(Q_tB_{t-1}+P_tE_{t-1}\bigr)f_{t-1}
    \notag\\
    &+\bigl(Q_tA_{t-1}+P_tC_{t-1}\bigr)\xi_{t-2}.
    \label{eq:generalized-se-detector-expanded}
\end{align}

A detector profile with one cycle of boundary memory satisfies the compatibility condition
\begin{equation}
    Q_tA_{t-1}+P_tC_{t-1}=0.
    \label{eq:generalized-se-locality-condition}
\end{equation}
This condition states that a boundary error present at the input of cycle \(t-1\) contributes identically to the two sides of each closure relation and therefore cancels.
Under this condition,
\begin{equation}
    d_t
    =
    D_t f_t
    +\bigl(Q_tB_{t-1}+P_tE_{t-1}\bigr)f_{t-1},
    \qquad t\geq2.
    \label{eq:generalized-se-detector-closed}
\end{equation}
Equation~\eqref{eq:generalized-se-locality-condition}, rather than equality of \(\mathcal{G}^{\mathrm{meas}}_t\) and \(\mathcal{S}_t\), is the algebraic condition responsible for the block lower bidiagonal structure of the detector error model, cf.\ Eq.~\eqref{eq:detector-flip-closed} and the detector span of Sec.~\ref{sec:sliding-window-decoding}.
If a chosen elementary Floquet phase has a longer detector window, one may either retain the corresponding additional virtual boundary data or group a full period into one syndrome extraction cycle.

\subsection{Reduction to the code-preserving example}

The equations in Sec.~\ref{sec:symbolic-memory} are recovered when
\begin{equation}
    \xi_t=
    \begin{bmatrix}
        s_t\\
        l_t
    \end{bmatrix},
    \qquad
    A_t=I,
    \qquad
    B_t=
    \begin{bmatrix}
        S_F\\
        L_F
    \end{bmatrix},
    \label{eq:generalized-se-special-boundary-matrices}
\end{equation}
and
\begin{align}
    Q_t=C_t
    &={}
    \begin{bmatrix}
        I&0
    \end{bmatrix},
    &
    P_t
    &={}I,
    \notag\\
    D_t
    &={}E_t=D_F.
    \label{eq:generalized-se-special-detector-matrices}
\end{align}
Equations~\eqref{eq:generalized-se-detector-recurrence} and \eqref{eq:generalized-se-virtual-recurrence} then become Eqs.~\eqref{eq:detector-flip} and \eqref{eq:virtual-syndrome-flip}, respectively.
Moreover, Eq.~\eqref{eq:generalized-se-locality-condition} reduces to \(I+I=0\), and Eq.~\eqref{eq:generalized-se-detector-closed} becomes Eq.~\eqref{eq:detector-flip-closed}.

For a protocol with period \(p\), grouping its phases into a full cycle gives the boundary matrices
\begin{align}
    A_{\mathrm{cyc}}
    &:={}
    A_pA_{p-1}\cdots A_1,
    \\
    B_{\mathrm{cyc}}
    &:={}
    \begin{bmatrix}
        A_p\cdots A_2B_1
        &\cdots
        &A_pB_{p-1}
        &B_p
    \end{bmatrix}.
    \label{eq:generalized-se-period-matrices}
\end{align}

\clearpage
\onecolumngrid
\section{Symbolic verification of the closed forms}
\label{app:induction-script}

Listing below verifies, with the SymPy computer-algebra system, the base case $k=1$ and the induction step $k-1\to k$ of the closed forms in Eqs.~\eqref{eq:virtual-syndrome-flip-closed}--\eqref{eq:detector-flip-closed}, starting from the recurrences~\eqref{eq:logical-effect-bpe}--\eqref{eq:detector-flip}.
The matrices $D_F$, $S_F$, $L_F$ and all bitvectors are symbolic and coefficients are reduced modulo~2, so the single induction-step check covers every $k\ge2$; the base case additionally records $d_1=D_Ff_1+s_0+v_0$, which is Eq.~\eqref{eq:detector-flip} at $k=1$.

\begin{lstlisting}[
    style=pythonsource,
    % caption={Symbolic verification of the closed-form recurrences.},
    % label={lst:induction-script}
]
"""Prove the closed forms of the recurrences by induction.

RECURRENCE (sections/05_symbolic_construction.tex):
    d_k = D_F f_k + s_{k-1} + v_{k-1}      (eq:detector-flip)
    s_k = s_{k-1} + S_F f_k                (eq:syndrome-bpe)
    l_k = l_{k-1} + L_F f_k                (eq:logical-effect-bpe)
    v_k = D_F f_k + s_{k-1}                (eq:virtual-syndrome-flip)

CLOSED FORMS:
    s_k = s_0 + S_F F_k
    l_k = l_0 + L_F F_k
    v_k = D_F f_k + s_0 + S_F F_{k-1}
    d_1 = D_F f_1 + s_0 + v_0
    d_k = D_F f_k + (D_F + S_F) f_{k-1}    (k >= 2:)
where F_k = f_1 + ... + f_k.
"""

from sympy import Add, MatrixSymbol, Mul, Symbol, ZeroMatrix

N_F, N_G, N_L = (Symbol(x, integer=True, positive=True)
                 for x in ('N_F', 'N_G', 'N_L'))
D_F = MatrixSymbol('D_F', N_G, N_F)        # faults -> detectors
S_F = MatrixSymbol('S_F', N_G, N_F)        # faults -> syndrome
L_F = MatrixSymbol('L_F', N_L, N_F)        # faults -> logical effect
ZG, ZL, ZF = ZeroMatrix(N_G, 1), ZeroMatrix(N_L, 1), ZeroMatrix(N_F, 1)

s_0 = MatrixSymbol('s_0', N_G, 1)
l_0 = MatrixSymbol('l_0', N_L, 1)
v_0 = MatrixSymbol('v_0', N_G, 1)


def mod2(expr):
    """Reduce integer coefficients of a matrix expression mod 2."""
    expr = expr.doit().expand()
    terms = []
    for t in (expr.args if isinstance(expr, Add) else (expr,)):
        head, *rest = t.args if isinstance(t, Mul) else (None,)
        if not (head is not None and head.is_Integer):
            terms.append(t)
        elif int(head) % 2:
            terms.append(Mul(*rest))
    return Add(*terms) if terms else ZeroMatrix(expr.rows, expr.cols)


def step(s, l, v, f_k):
    """One step of the RECURRENCE: state at k-1 and f_k -> state at k, d_k."""
    d_k = D_F * f_k + s + v
    v_k = D_F * f_k + s
    s_k = s + S_F * f_k
    l_k = l + L_F * f_k
    return s_k, l_k, v_k, d_k


def check(name, got, want, zero):
    assert mod2(got - want) == zero, f'{name} does not match'


# --- base case: k = 1, state at k-1 = 0 is the initial state, F_0 = 0 -------
f_1 = MatrixSymbol('f_1', N_F, 1)
s_1, l_1, v_1, d_1 = step(s_0, l_0, v_0, f_1)

check('s_1', s_1, s_0 + S_F * f_1, ZG)                  # F_1 = f_1
check('l_1', l_1, l_0 + L_F * f_1, ZL)
check('v_1', v_1, D_F * f_1 + s_0 + S_F * ZF, ZG)       # F_0 = 0
check('d_1', d_1, D_F * f_1 + s_0 + v_0, ZG)

# --- induction step: assume the closed forms at k-1, derive them at k -------
# k is symbolic: f_km1, f_k and the accumulator F_km2 = f_1 + ... + f_{k-2}
# are unconstrained, so this single check covers every k >= 2 at once.
f_km1 = MatrixSymbol('f_{k-1}', N_F, 1)
f_k = MatrixSymbol('f_k', N_F, 1)
F_km2 = MatrixSymbol('F_{k-2}', N_F, 1)
F_km1 = F_km2 + f_km1                                   # F_{k-1} = F_{k-2} + f_{k-1}
F_k = F_km1 + f_k                                       # F_k     = F_{k-1} + f_k

# induction hypothesis: the closed forms hold at k-1 (with k-1 >= 1)
s_km1 = s_0 + S_F * F_km1
l_km1 = l_0 + L_F * F_km1
v_km1 = D_F * f_km1 + s_0 + S_F * F_km2

s_k, l_k, v_k, d_k = step(s_km1, l_km1, v_km1, f_k)

check('s_k', s_k, s_0 + S_F * F_k, ZG)
check('l_k', l_k, l_0 + L_F * F_k, ZL)
check('v_k', v_k, D_F * f_k + s_0 + S_F * F_km1, ZG)
check('d_k', d_k, D_F * f_k + (D_F + S_F) * f_km1, ZG)

print('base case k = 1 and induction step k-1 -> k verified: '
      'the closed forms hold for all k >= 1')
\end{lstlisting}

\clearpage

\end{document}